\newtheorem{definition}{Definition} 
\newtheorem{proposition}{Proposition} 
\newtheorem{lemma}{Lemma} 
\newtheorem{theorem}{Theorem} 
\newcommand{\codesize}{\small}
\newcommand{\den}[1]{[\![#1]\!]}  
\newtheorem{example}{Example}[section]
\newtheorem{corollary}{Corollary}[section]
\newcommand{\nc}{\newcommand}
\nc{\trs}{{TRS}} 
\nc{\ctrs}{CS} 
\nc{\trss}{\trs's} 
\nc{\ctrss}{\ctrs's} 
\nc{\prog}{{\cal P}} 
\nc{\progh}{\hat{\prog}} 
\nc{\progm}{{\cal M}} 
\nc{\progmh}{\hat{\progm}} 
\nc{\progr}{{\cal R}} 
\nc{\rw}{\to} 
\nc{\tor}{\to} 
\nc{\crwlto}{\rightarrowtriangle}  
\nc{\clto}{\crwlto}                
\nc{\conscrwl}{\vdash_{\crwl}}                
\nc{\cltop}{\clto}                
\nc{\pcrwl}{\apcrwl}
\nc{\crwl}{{\it CRWL}}
\nc{\spcrwl}{\sapcrwl}
\nc{\sapcrwl}{{\mbox{${\crwl}^\sigma_{\pi^\alpha}$}}}
\nc{\sbpcrwl}{{\mbox{${\crwl}^\sigma_{\pi^\beta}$}}}
\nc{\apcrwl}{\mbox{{\it $\pi^\alpha$\!CRWL}}} 
\nc{\bpcrwl}{\mbox{{\it $\pi^\beta$\!CRWL}}} 
\nc{\vds}{\vdash_{\crwl}} 
\nc{\vdap}{\vdash_{\apcrwl}} 
\nc{\vdbp}{\vdash_{\bpcrwl}} 
\nc{\vdsap}{\vdash_{\sapcrwl}} 
\nc{\vdsbp}{\vdash_{\sbpcrwl}} 
\nc{\apor}{\textbf{POR}$^\alpha$} 
\nc{\bpor}{\textbf{POR}$^\beta$} 
\nc{\sapor}{\textbf{OR}$^\sigma_{\pi^\alpha}$} 
\nc{\sbpor}{\textbf{OR}$^\sigma_{\pi^\beta}$} 
\nc{\pst}[1]{\mi{pST}(#1)}
\nc{\groot}{{R_p}}
\nc{\var}{{\cal V}}
\nc{\plrity}[1]{\plrityZ({#1})}
\nc{\plrityZ}{\mi{plurality}}
\nc{\cjto}{{\cal A}} 
\nc{\cjtoD}{{\cal S}} 
\nc{\vran}{\mi{vran}}
\nc{\icsus}{\CSubst_\perp^?}
\nc{\crule}[1]{\textbf{\textsc{#1}}} 
\nc{\rrw}{{\,\rw^{rt}\,}} 
\nc{\con}{{\cal C}}
\nc{\dens}[1]{{\den{#1}^{\mi{sg}}}} 
\nc{\densp}[2]{{\den{#1}^{\mi{sg}}_{#2}}} 
\nc{\denr}[1]{{\den{#1}^{\mi{rt}}}} 
\nc{\denrp}[2]{{\den{#1}^{\mi{rt}}_{#2}}} 
\nc{\denap}[1]{{\den{#1}^{\alpha\! \mi{pl}}}}
\nc{\denapp}[2]{{\den{#1}^{\alpha\! pl}_{#2}}} 
\nc{\denbp}[1]{{\den{#1}^{\beta\! \mi{pl}}}}
\nc{\denbpp}[2]{{\den{#1}^{\beta\! \mi{pl}}_{#2}}} 
\nc{\densap}[1]{{\den{#1}^{s\alpha\! p}}}
\nc{\densapp}[2]{{\den{#1}^{s\alpha\! p}_{#2}}}
\nc{\densbp}[1]{{\den{#1}^{s\beta\! p}}}
\nc{\densbpp}[2]{{\den{#1}^{s\beta\! p}_{#2}}}
\nc{\cc}[1]{cc(#1)} 
\nc{\empsec}{[]} 
\nc{\vextra}[1]{\mi{vExtra}(#1)} 
\nc{\vE}{{{\cal V}_e}} 
\nc{\thetae}{\theta_e} 
\nc{\ordap}{\sqsubseteq} 
\nc{\pordap}{\ordap_{\pi}} 
\nc{\dsord}{\unlhd} 
\nc{\bdsord}{\dsord^{\beta \mi{pl}}} 
\nc{\adsord}{\dsord^{\alpha \mi{pl}}} 
\nc{\ccon}{c\con} 
\nc{\scon}{s\con} 
\nc{\refp}[1]{\ref{#1} (page~\pageref{#1})}
\nc{\bunion}{\sqcup}
\nc{\bbunion}{\bigsqcup}
\nc{\bubsus}{\rightarrow_\bunion}
\nc{\pps}{\vdap}
\nc{\denp}{\denap}
\nc{\pA}[1]{a\prog({#1})} 
\nc{\sA}[1]{sa\prog({#1})} 
\nc{\size}[1]{\mi{size}({#1})} 
\nc{\ordLex}{\lessdot}
\nc{\prem}[1]{\Pi^{{#1}}} 
\nc{\pru}{\Delta} 
\nc{\uprem}{\pi} 
\nc{\pruD}[1]{\pru_{{#1}}}
\nc{\pruDp}[1]{\pru'_{{#1}}}
\nc{\mt}{mt} 
\nc{\mdtc}{mdtc} 
\nc{\tree}{{\cal T}} 
\nc{\pat}{\pi}
\nc{\ordSus}{\precsim}
\nc{\mi}[1]{\mathit{#1}}
\nc{\unravel}{\mi{unravel}}
\nc{\leftFR}{\mi{leftFR}}
\nc{\CTerm}{\mi{CTerm}}
\nc{\CSubst}{\mi{CSubst}}
\nc{\Cntxt}{\mi{Cntxt}}
\nc{\Subst}{\mi{Subst}}
\nc{\FS}{\mi{FS}}
\nc{\CS}{\mi{CS}}
\nc{\Exp}{\mi{Exp}}
\nc{\PVar}{\mi{PVar}}
\nc{\FVar}{\mi{FVar}}
\nc{\sM}{\mi{sM}}
\nc{\sg}{\mi{sg}}
\nc{\match}{\mi{match}}
\nc{\project}{\mi{project}}
\nc{\true}{\mi{true}}
\nc{\fng}{\mathit{find2NG}}
\nc{\pepe}{\mathit{pepe}}
\nc{\paco}{\mathit{paco}}
\nc{\jaime}{\mathit{jaime}}
\nc{\maria}{\mathit{maria}}
\nc{\employees}{\mi{employees}}
\nc{\branches}{\mi{branches}}
\nc{\mad}{\mi{madrid}}
\nc{\vigo}{\mi{vigo}}
\nc{\man}{\mi{man}}
\nc{\woman}{\mi{woman}}
\nc{\boss}{\mi{boss}}
\nc{\clerk}{\mi{clerk}}
\nc{\twoclerks}{\mi{twoclerks}}
\nc{\find}{\mi{find}}
\nc{\thn}{\mi{then}}
\nc{\tru}{\mi{true}}
\nc{\prj}{\mi{project}}
\nc{\mivar}{\mi{var}}
\nc{\dom}{\mi{dom}}
\nc{\pair}{\mi{pair}}
\nc{\leaf}{\mi{leaf}}
\nc{\branch}{\mi{branch}}
\nc{\hypothesis}{hypothesis}
\nc{\pos}{{\cal O}} 
\nc{\posV}{\pos_{\var}} 
\nc{\posH}[2]{{\pos({#1}, {#2})}} 
\nc{\apPos}[2]{{#1}|_{#2}}
\nc{\apPosD}[2]{{#1}\upharpoonright_{#2}}
\nc{\classAB}{{\cal C}^{\alpha \beta}} 
\nc{\proofs}{\cite{arjrhTPLPproofs}} 
\nc{\refap}[1]{\proofs}
\nc{\podaSecImp}[1]{} 
\newcommand{\longFlag}{ON} %
\nc{\longText}[2]{\ifthenelse{\equal{\longFlag}{ON}}{{#1}}{{#2}}}
\nc{\todo}[1]{\textcolor{red}{#1}}
\nc{\regla}[4]{
\textbf{\crule{#1}}
 $\ $ $\begin{array}{c}
{#3}\\ 
\hline 
{#2}
\end{array}$
\quad {#4}
}
\title{Singular and Plural Functions for Functional Logic Programming%
}
\author[A. Riesco and J. Rodr{\'\i}guez-Hortal{\'a}]{Adri\'an Riesco%
\thanks{Research partially supported by 
the Spanish projects \emph{DESAFIOS10} (TIN2009-14599-C03-01) and \emph{PROMETIDOS-CM} (S2009/TIC-1465).}  
and Juan Rodr{\'\i}guez-Hortal{\'a}\thanks{Research partially supported by the Spanish projects
  \emph{FAST-STAMP} (TIN2008-06622-C03-01/TIN), \emph{PROMETIDOS-CM} (S2009TIC-1465) and \emph{GPD-UCM} (UCM-BSCH-GR58/08-910502).}\\
Facultad de Inform\'atica, Universidad Complutense de Madrid\\ 
Dpto. Sistemas Inform\'aticos y Computaci\'on\\
\email{\{ariesco, juanrh\}@fdi.ucm.es}}
\date{}
\begin{document}
\maketitle

\begin{abstract}
Modern functional logic programming (FLP) languages use non-terminating and non-confluent 
constructor systems (\ctrss) 
as programs 
in order to define non-strict non-determi-nistic functions. 
Two semantic alternatives have been usually considered for parameter passing with this kind of functions: call-time choice and run-time choice. While the former is the standard choice of modern FLP languages, the latter lacks some basic properties---mainly compositionality---that have prevented its use in practical FLP systems. Traditionally it has been considered that call-time choice induces a 
singular denotational semantics, while run-time choice induces a plural semantics. We have discovered that this latter identification 
is wrong when pattern matching is involved, 
and thus in this paper we propose two novel compositional plural semantics for \ctrss\ that are different from run-time choice. 

We investigate the basic properties of our plural semantics---compositionality, polarity, monotonicity for substitutions, and a restricted form of the bubbling property for constructor systems---and the relation between them and to previous proposals, concluding that these semantics form a hierarchy in the sense of set inclusion of the set of values computed by them. Besides, we have identified a class of programs characterized by a simple syntactic criterion for which the proposed plural semantics behave the same, and a program transformation that can be used to simulate one of the proposed plural semantics by term rewriting. At the practical level, we study how to use the new expressive capabilities of these semantics for improving the declarative flavour of programs. As call-time choice is the standard semantics for FLP, it still remains the best option for many common programming patterns. Therefore we propose a language which combines call-time choice and our plural semantics, that we have implemented in the Maude system.
The resulting interpreter is then employed to develop and test several significant examples showing the capabilities of the combined semantics. 

\end{abstract}

\begin{keywords}
Non-deterministic functions, Semantics, Program
transformation, Term rewriting, Maude
\end{keywords}


\section{Introduction}\label{sec:intro}

%
The combination of functional and logic features has been addressed in several proposals for multi-paradigm programming languages \cite{CiaoSurvery2011,OzBook2004,Henderson96themercury,Frolog08}, with different variants---lazy or eager evaluation of functions, concurrent capabilities, support for object-oriented programming\ldots\ In this work we focus on the integration into a single language of the main features of lazy functional programming (FP) and logic programming (LP) 
that is described in \cite{AntoyHanusComACM10}. 
Term rewriting \cite{baader-nipkow} and term graph rewriting systems \cite{Plump98} have often been used for modeling the semantics and operational behaviour of that approach to functional logic programming (FLP) \cite{deGroot86,Hanus07ICLP}. 
In particular, the class of left-linear constructor-based term rewriting systems---or simply constructor systems (\ctrss)---, in which the signature is divided into two disjoint sets of constructor and function symbols, is 
used frequently to represent programs. There the notion of value as a term built using only constructor symbols---called constructor term or just c-term---arises naturally, and this way a term rewriting derivation from an expression to a c-term 
represents the 
reduction of 
that expression to one of its values in the language being modelled. 
This corresponds to a value-based semantic view, in which the purpose of computations is to produce values made of constructors.
Besides, term graphs are 
used for modelling subexpression sharing, where several occurrences of the same subexpression are represented by several pointers to a single node in a term graph, resulting in a potential improvement of the time and space performance of programs. Sharing is at the core of implementations of lazy 
FP and FLP languages, and so several variations of term graph rewriting have also been used in formulations of the semantics of call-by-need in FP \cite{Lau93,AriolaFMOW95,PlasmeijerE93} and FLP \cite{EchahedJanodet98JICSLP,AHHOV05,lrs07,LMRS10LetRwTPLP}. 

On the other hand, non-determinism is an expressive feature that has been used for a long time in programming \cite{Dijkstra97,Hughes89,mccarthy63basis} and system specification \cite{maude-book,cafe-report,BorovanskyKirchnerKirchnerMoreauRingeissenWRLA98}. In both fields, one of the appeals of term rewriting is its elegant way to express non-determinism through the use of 
non-confluent term rewriting systems, obtaining a clean and high level representation of complex systems and programs. Non-determinism is 
integrated in FLP languages by means of a backtracking mechanism in the style of Prolog \cite{SterlingShapiro86}. It is 
introduced by employing 
possibly non-terminating and non-confluent \ctrss\ 
as programs, thus expressing non-strict non-deterministic functions, which are one of the most distinctive features of the paradigm \cite{GHLR99,DBLP:conf/flops/AntoyH02,AntoyHanusComACM10}.\footnote{Non-determinism also appears in FLP as a result of the utilization of narrowing as the fundamental operational mechanism \cite{Han05TR} but, as usual in many works in the field, we will focus on rewriting aspects only, so our conclusions could be lifted to the narrowing case in subsequent works.}

The point is that this combination of non-strictness and non-determinism gives rise to several semantic alternatives \cite{Sondergaard95,hussmann93}. In particular in \cite{Sondergaard95} the different language variants that result after adding non-determinism to a basic functional language were expounded, structuring the comparison as a choice among different options over several dimensions: strict/non-strict functions, angelic/demonic/erratic non-determi\-nis\-tic choices, and \emph{singular}/\emph{plural} \emph{semantics} for parameter passing, also called \emph{call-time choice}/\emph{run-time choice} in \cite{hussmann93}. In the present paper we assume non-strict angelic non-determinism, 
so we focus on the last dimension only. To do that, let us take a look at the following example. 

\begin{example}\label{EIntro1}
Consider the program $\{f(c(X)) \tor d(X,X), X~?~Y \tor X, X~?~Y \tor Y\}$ and the expression $f(c(0~?~1))$.  From an operational perspective we have to decide when it is time to fix 
the values for the arguments of functions:
\begin{itemize}
	\item Under a \emph{call-time choice} semantics a value for each argument will be fixed on parameter passing and shared between every copy of that argument which arises during the computation. This corresponds to call-by-value in a strict setting and to call-by-need in a non-strict setting, in which a partial value instead of a total value is computed. 
So when applying the rule for $f$ the two occurrences of $X$ in $d(X, X)$ will share the same value, 
hence $d(0,0)$ and $d(1,1)$ are correct values for $f(c(0~?~1))$ in this semantics, while it is not the case either for $d(0,1)$ or for $d(1,0)$. 
    \item On the other hand \emph{run-time choice} corresponds to call-by-name,
so the values of the arguments are fixed as they are used---i.e., as their evaluation is demanded by the matching process---and the copies of each argument created by parameter passing may evolve independently afterwards. Under this semantics not only $d(0,0)$ and $d(1,1)$ but also $d(0,1)$ and $d(1,0)$ are correct values for $f(c(0~?~1))$. 
\end{itemize}
\end{example}
In general, a call-time choice semantics produces less results than run-time choice. Modern functional-logic languages like Toy \cite{LS99} or Curry \cite{Han05TR} are heavily influenced by lazy functional programming and so they implement sharing in their operational mechanism, which results in call-by-need evaluation and the adoption of call-time choice.
On the other hand, {term rewriting} is considered a standard formulation for run-time choice,\footnote{In fact angelic non-strict run-time choice.} and is the basis for the semantics of languages like Maude \cite{maude-book}.

But we may also see things from another perspective. 
\begin{example}\label{EIntro2}
Consider again the program in Example \ref{EIntro1}. From a denotational perspective we have to think about the domain used to instantiate the variables of the program rules:
\begin{itemize}
	\item Under a \emph{singular semantics} variables will be instantiated with single values (which may be partial in a non-strict setting). \emph{This is equivalent to having call-time choice parameter passing}. 

    \item The alternative is having a \emph{plural semantics}, in which the variables are instantiated with sets of values. Traditionally it has been considered that run-time choice has its denotational counterpart on a plural semantics, but we will see that this identification is wrong. Consider the expression $f(c(0)~?~c(1))$, 
under run-time choice, that is, term rewriting, the evaluation of the subexpression $c(0)~?~c(1)$ is needed in order to get an instance of the left-hand side of the rule for $f$. Hence a choice between $c(0)$ and $c(1)$ is performed and so neither $d(0,1)$ nor $d(1,0)$ are correct values for $f(c(0)~?~c(1))$. Nevertheless, under a plural semantics we may consider the set $\{c(0), c(1)\}$ which is a subset of the set of values for $c(0)~?~c(1)$ in which every element matches the argument pattern $c(X)$. Therefore, the set $\{0, 1\}$ can be used for parameter passing obtaining a kind of ``set expression'' $d(\{0,1\}, \{0,1\})$ that yields the values $d(0,0)$, $d(1,1)$, $d(0,1)$, and $d(1,0)$. 
\end{itemize}
The conclusion is clear: \emph{the traditional identification of run-time choice with a plural semantics is wrong when pattern matching is involved}. 
\end{example}
Which of these is the more suitable perspective for FLP? This problem did not appear in \cite{Sondergaard95} because no pattern matching was present, nor in \cite{hussmann93} because only call-time choice was adopted there. This fact was pointed out for the first time 
in \cite{rodH08}, where the \pcrwl\ logic---named  $\pi \crwl$ in that work---was proposed as a novel formulation of a plural semantics with pattern matching. This proves that one can conceive a meaningful plural semantics that is different to run-time choice, i.e., run-time choice is not the only plural semantics we should consider.  We have seen that, using the program above, the expression $f(c(0~?~1))$ has more values than the expression $f(c(0)~?~c(1))$ under run-time choice 
although they only differ in 
the subexpressions $c(0~?~1)$ and $c(0)~?~c(1)$, which have the same values under all three call-time choice, run-time choice, and plural semantics. That violates a fundamental property of FLP languages stating that any expression can be replaced by any other expression which could be reduced to exactly the same set of values. We will see that our plural semantics shares with \crwl\footnote{\textbf{C}onstructor-based \textbf{R}e\textbf{W}riting \textbf{L}ogic.} \ \cite{GHLR99} (the standard logic for call-time choice\footnote{In fact angelic non-strict call-time choice.}) a compositionality property for values that makes it more suitable than run-time choice for a value-based language like current implementations of FLP. Nevertheless run-time choice can be a good option for other kind of rewriting-based languages like Maude, in which the notion of value is not necessarily present, at least in the sense it is in FLP languages.

~\\
In this paper we have put together our previous results about plural semantics, integrating our presentation of \apcrwl\ from \cite{rodH08} with
a user level introduction to a Maude-based transformational prototype for \apcrwl\ \cite{RRLDTA09}. 
\podaSecImp{ our description of a Maude-based transformational prototype for \apcrwl\ \cite{RRLDTA09}.}
We have also included the results obtained in \cite{rrpepm10}, which is devoted to the exploration of the new expressive capabilities of our plural semantics. Although our plural semantics allows an elegant encoding of some problems---in particular those with an implicit manipulation of sets of values---, call-time choice still remains the best option for many common programming patterns \cite{GHLR99,DBLP:conf/flops/AntoyH02}. Therefore we propose a combined semantics for a language in which the user can specify, for each function symbol, which arguments are considered ``plural arguments''---thus being evaluated under our plural semantics---and which ``singular arguments''---thus being evaluated under call-time choice. This semantics is precisely specified by a modification of the \crwl\ logic, which retains the important properties of \crwl\ and \apcrwl, like compositionality. These new features were implemented by extending our Maude prototype, and then used to develop and test several significant examples showing the expressive capabilities of the combined semantics

Apart from giving a unified and revised presentation, we have made several relevant advances.\podaSecImp{  both in theoretical and practical aspects. In the theoretical side we}
We have extended most of our results to deal with programs with extra variables, and above all, we have introduced the new plural semantics \bpcrwl\ inspired by the proposal from \cite{BB09}. The properties of this semantics and its relation to call-time choice, run-time choice, and \apcrwl\ have been studied in depth and with technical accuracy. \podaSecImp{ At the practical level, we provide much more detailed explanations about our Maude implementation and how we adapt the natural rewriting on-demand strategy \cite{DBLP:conf/ppdp/Escobar03} to deal with the representation of term graphs internally used by our interpreter.}
Our current implementation does not deal with extra variables because they cause an explosion in the search space  when evaluated by term rewriting---we consider the development of a suitable plural narrowing mechanism that could effectively handle extra variables a possible subject of future work.  
\podaSecImp{ Finally, our implementation of natural rewriting for Maude system modules has been carefully crafted as an independent library, in order to improve its reusability. }

~\\
The rest of the paper is organized as follows. Section \ref{sec:preliminaries} contains some technical preliminaries and notations about term rewriting systems and the \crwl\ logic. 
In Section \ref{sec:pl_sem} we introduce \apcrwl\ and \bpcrwl, two variations of \crwl\ to express plural semantics, and present some of their properties, in particular compositionality. In Section \ref{sec:hierarchy} we study the relation between call-time choice, run-time choice, and our plural semantics, focusing on the set of values computed by each semantics and concluding that these four semantics form a hierarchy in the sense of set inclusion. We also present a class of programs characterized by a simple syntactic criterion under which our two plural semantics are equivalent, and conclude the section providing a simple program transformation that can be used to simulate \apcrwl\ with term rewriting. Section \ref{sec:examples} begins with a presentation of our combinations of call-time choice and plural semantics that are formalized through the \sapcrwl\ and \sbpcrwl\ logics, which correspond to the combination of call-time choice with \apcrwl\ and \bpcrwl, respectively. Then follows a user level introduction to our Maude prototype, which implements the \sapcrwl\ logic, as it is based on the transformation from Section \ref{sec:hierarchy}. 
The prototype is then employed to illustrate the use of the combined semantics for improving the declarative flavour of programs.\podaSecImp{ In Section \ref{sec:implemen} we discuss the implementation of our interpreter in detail: the core language that is used to represent the term graphs used to achieve call-time choice, the program transformations employed to get the semantic specified for each part of the program, our adaptation of the natural rewriting strategy to deal with the core language, and how we take advantage of the metaprogramming capabilities of Maude to perform all these tasks.} 
This section concludes with a short sketch of the implementation of our prototype. 
Finally, in Section~\ref{sec:conclusions} 
we outline some possible lines of future work. 
For the sake of readability, some of the proofs have been moved to 
\proofs, although the intutitions behind our main results have been presented in the text. 

\section{Preliminaries}\label{sec:preliminaries}

We present in this section the main notions needed throughout the rest of the paper:
Section \ref{sec:cterms_intro} introduces constructor-based systems, while Section
\ref{sec:crwl_intro} describes the \crwl\ framework.

\subsection{Constructor systems}\label{sec:cterms_intro}


We consider a first order signature $\Sigma = \CS \uplus \FS$, where $\CS$
and $\FS$ are two disjoint sets of \emph{constructor}
and defined \emph{function} symbols respectively, all of them with associated arity. We
write $\CS^n$ ($\FS^n$ resp.) for the set of constructor (function)
symbols of arity $n \in \mathbb{N}$.
We write $c,d,\ldots$ for constructors, $f,g,\ldots$ for functions, and
$X,Y,\ldots$ for variables of a numerable set $\var$.
The notation $\overline{o}$ stands for tuples of any kind of syntactic objects.
Given a set $\cjto$ we denote by $\cjto^*$ the set of finite sequences of
elements of that set. We denote the empty sequence by $\empsec$. For any sequence
$a_1 \ldots a_n \in \cjto^*$ and function
$f : \cjto \rightarrow \{\mi{true}, \mi{false}\}$, we denote by
$a_1 \ldots a_n~|~f$ the sequence constructed by taking in order every
element from $a_1 \ldots a_n$ for which $f$ holds. 
Finally, for any $1 \leq i \leq n$, $(a_1 \ldots a_n)[i]$ denotes $a_i$. 

The set $\mi{Exp}$ of {\it expressions} is defined as $\mi{Exp}\ \ni e::= X
\mid h(e_1,\ldots,e_n)$, where $X\in\var$, $h\in \CS^n\cup \FS^n$ and
$e_1,\ldots,e_n\in \mi{Exp}$. 
We use the symbol $\equiv$ for the syntactic equality between expressions, and in general for any syntactic construction. 
The set $\mi{CTerm}$ of {\it constructed terms}
(or {\it c-terms}) is defined like $\mi{Exp}$, but with $h$ restricted to
$\CS^n$ (so $\mi{CTerm} \subseteq \mi{Exp}$). The intended
meaning is that $\mi{Exp}$ stands for evaluable expressions, i.e., expressions
that can contain function symbols, while $\mi{CTerm}$ stands for data
terms representing \textbf{values}. We
will write $e,e',\ldots$ for expressions and $t,s,\ldots$ for c-terms. The set
of variables occurring in an expression $e$ will be denoted as $\mi{var}(e)$.
We will frequently use \emph{one-hole contexts}, defined as
$\mi{Cntxt}\ni {\cal C} ::= [\ ] \mid h(e_1,\ldots,{\cal C},\ldots,e_n)$,
with $h\in \CS^n\cup \FS^n$, $e_1, \ldots, e_n \in \mi{Exp}$.
The application of a context ${\cal C}$ to an expression $e$, written by
${\cal C}[e]$, is defined inductively as $[\ ][e] = e$ and $h(e_1,\ldots,{\cal
  C},\ldots,e_n)[e] = h(e_1,\ldots,{\cal C}[e],\ldots,e_n)$. 

A position of an expression is a chain of natural numbers separated by dots that determines one of its subexpressions. Given an expression $e$ by $\pos(e)$ we denote the set of positions in $e$, which is defined as $\pos(X) = \epsilon$; $\pos(h(e_1, \ldots, e_n)) = \{\epsilon\}~\cup~\{i.o~|~i \in \{1, \ldots, n\} ~\wedge~ o \in \pos(e_i)\}$, where $X \in \var$, $h \in \Sigma$, and $\epsilon$ denotes the empty or top position. We will write $o, p, q, u, v, \ldots$ for positions. By $e|_o$ we denote the subexpression of $e$ at position $o \in \pos(e)$, defined as $e|_\epsilon = e$; $h(e_1, \ldots, e_n)|_{i.o} = e_i|_o$. The set of variable positions in $e$ is denoted as $\posV(e)$ and defined as $\posV(e) = \{o \in \pos(e)~|~e|_o \in \var\}$. 

{\em Substitutions} $\theta \in \mi{Subst}$ are finite mappings $\theta:\var
\longrightarrow \mi{Exp}$, extending naturally to $\theta:\mi{Exp} \longrightarrow
\mi{Exp}$. We write $\epsilon$ for the identity (or empty) substitution.
We write $e\theta$ for the application of $\theta$ to $e$, and $\theta\theta'$
for the composition, defined by $e(\theta\theta') = (e\theta)\theta'$. The domain and
variable range of $\theta$ are defined as $\mi{dom}(\theta) = \{X\in \var \mid X\theta \neq
X\}$ and $\vran(\theta) = \bigcup_{X\in \mi{dom}(\theta)}\mi{var}(X\theta)$.
If $\mi{dom}(\theta_0) \cap \mi{dom}(\theta_1) = \emptyset$, their disjoint
union $\theta_0 \uplus \theta_1$ is defined by $(\theta_0 \uplus \theta_1)(X) = \theta_i(X)$, if $X \in \mi{dom}(\theta_i)$ for some 
$i \in \{0, 1\}$; $(\theta_0 \uplus \theta_1)(X) = X$ otherwise.
Given $W \subseteq {\cal V}$ we write
$\theta|_{W}$ for the restriction of $\theta$
to $W$, and $\theta|_{\backslash D}$ is a shortcut for
$\theta|_{({\cal V} \backslash D)}$. We will
sometimes write $\theta = \sigma[W]$ instead of  $\theta|_W =\sigma|_W$.
\emph{C-substitutions}
$\theta \in \CSubst$ verify that $X\theta \in \CTerm$ for all
$X\in \mi{dom}(\theta)$.
We say that \emph{e subsumes} $e'$, and write $e \ordSus e'$,
if $e\sigma \equiv e'$ for some substitution $\sigma$. 

A \emph{constructor-based term rewriting system} (\emph{\ctrs}) or just \emph{constructor system} or \emph{program} ${\cal P}$  
is a set of \emph{rewrite rules} or \emph{program rules} of the form
$f(t_1, \ldots, t_n)\to r$ where $f\in \FS^n$, $e\in \mi{Exp}$, and
$(t_1, \ldots, t_n)$ is a linear tuple of
c-terms, where linearity means that variables occur only once in $(t_1, \ldots, t_n)$. 
Notice that we allow $r$ to contain {\it extra variables}, i.e., variables not occurring in $(t_1, \ldots, t_n)$. 
To be precise, we say that $X \in \var$ is an extra variable in the rule $l \tor r$
iff $X \in \mi{var}(r) \setminus \mi{var}(l)$, and by $\vextra{R}$ we denote
the set of extra variables in a program rule $R$. 
For any program $\prog$ the set $\FS^\prog$ of functions defined by $\prog$ is $\FS^\prog = \{f \in \FS~|~\exists (f(\overline{p}) \tor r) \in \prog\}$. 
We assume that every 
program  $\prog$ contains the rules $\{X~?~Y \tor X, X~?~Y \tor Y, \mi{if}~\mi{true}~\mi{then}~X \tor X\}$, defining the behaviour of the infix function $? \in \FS^2$ and the mixfix function $\mi{if}\,\mi{then}\, \in \FS^2$ (used as $\mi{if} \,e_1\, \mi{then}\, e_2$), and that those are the only rules for that function symbols.
Besides $?$ is right-associative, so $e_1~?~e_2~?~e_3 \equiv e_1~?~(e_2~?~e_3)$. For the sake of conciseness we will often omit these rules when presenting a program. 

Given a 
program $\prog$, its associated \emph{term rewriting relation} $\rw_{\cal P}$ is defined as:
${\cal C}[l\sigma] \rw_{\cal P} {\cal C}[r\sigma]$
for any context ${\cal C}$, rule $l \tor r \in {\cal P}$ and $\sigma \in
\emph{Subst}$. 
We write $\stackrel{*}{\rw_{\cal P}}$ for the reflexive and
transitive closure of the relation $\rw_{\cal P}$. In the following, we will usually omit the reference to ${\cal  P}$ or denote it by $\prog \vdash e \rw e'$ and $\prog \vdash e \rw^* e'$. 

\subsection{The CRWL framework}\label{sec:crwl_intro}


The \crwl\ framework \cite{GHLR96,GHLR99} is considered a standard formulation of call-time choice by the FLP community \cite{Hanus07ICLP,AntoyHanusComACM10}. To deal with non-strictness at the semantic level, 
$\Sigma$ is enlarged 
with a new constant constructor symbol $\perp$. The sets $\mi{Exp}_\perp$,
$\mi{CTerm}_\perp$, $\mi{Subst}_\perp$, $\mi{CSubst}_\perp$
of \emph{partial expressions}, etc., are defined naturally. Our contexts will contain partial expressions from now on unless explicitly specified. Expressions, substitutions,
etc. not containing $\perp$ are called \emph{total}. Programs in \crwl\ still consist of rewrite rules with total expressions in both sides, so $\perp$ does not appear in programs. 
Partial expressions are ordered by the {\em approximation} ordering $\sqsubseteq$ defined as the least
partial ordering satisfying $\perp \sqsubseteq e$ and $e \sqsubseteq e'
\Rightarrow {\cal C}[e] \sqsubseteq {\cal C}[e']$
for all $e,e' \in \mi{Exp}_\perp, {\cal C} \in {\it \mi{Cntxt}}$. This partial ordering
can be extended to substitutions: given $\theta,\sigma\in \mi{Subst}_\bot$ we say
$\theta\sqsubseteq\sigma$ if $X\theta\sqsubseteq X\sigma$ for all $X\in\var$.

The semantics of a program ${\cal P}$ is determined in {\it CRWL} by means of a
proof calculus able to derive \emph{reduction statements} of the form $e \rightarrowtriangle t$, with $e \in \mi{Exp}_\perp$ and $t \in \mi{CTerm}_\perp$,
meaning informally that $t$ is (or approximates to) a \textit{possible value} of $e$, obtained by
iterated reduction of $e$ using ${\cal P}$ under call-time choice. 
The {\it CRWL}-proof calculus is presented in Figure \ref{fig:crwl}.
\begin{figure}
\begin{center}
 \framebox{
\begin{minipage}{.95\textwidth}
\begin{center}
\begin{small}
\begin{tabular}{l@{~~~~~}l}
\regla{RR}{X \clto X}{}{$X \in \var$} \hspace{2ex}
\regla{DC}{c(e_1, \ldots, e_n) \clto c(t_1, \ldots, t_n)}{e_1 \clto t_1 \ldots e_n \clto t_n}{$c \in {CS}^n$} \\
\regla{B}{e \clto \perp}{}{} \hspace{1.4ex}
\regla{OR}{f(e_1, \ldots, e_n) \clto t}{e_1 \clto p_1\theta \ldots e_n \clto p_n\theta ~~r\theta \clto t}{$
      \begin{array}{l}
        f(p_1, \ldots, p_n)\tor r \in \prog\\
        \theta\in CSubst_\perp
      \end{array}$} \\
\end{tabular}
\end{small}
\end{center}
\end{minipage}
}
\end{center}
    \caption{Rules of \crwl}
    \label{fig:crwl}
\end{figure}
Rules \textbf{RR} (restricted reflexivity) and \textbf{DC} (decomposition) are used to reduce any variable to itself, and to decompose the evaluation of constructor-rooted expressions.
Rule \textbf{B} (bottom) allows us to avoid the evaluation of expressions, in order to get a non-strict semantics.
Finally rule \textbf{OR} (outer reduction) expresses that to evaluate a function call we must first evaluate its arguments to get an instance of a program rule, perform parameter passing (by means of some substitution $\theta \in \mi{CSubst}_\perp$),
and then reduce the correspondingly instantiated right-hand side. The use of partial c-substitutions in \textbf{OR} is essential
to express call-time choice, as only single partial values are used for parameter passing. Notice also that by the effect of $\theta$ in \textbf{OR},
extra variables in the right-hand side of a rule can
be replaced by any partial c-term, but not by any expression as in term rewriting. 

We write ${\cal P} \conscrwl e \crwlto t$ to express that $e\crwlto t$ is derivable in the {\it CRWL}-calculus using
the program ${\cal P}$. Given a program
$\mathcal{P}$, the \emph{CRWL-denotation} of an expression $e \in \mi{Exp}_\perp$ is
defined as $\den{e}^{\sg}_{\mathcal{P}}=\{t\in \mi{CTerm}_\perp \mid \mathcal{P}
\conscrwl e\crwlto t\}$. In the following, we will usually omit the reference to ${\cal  P}$ when implied by the context.

\section{Two plural semantics for constructor systems}\label{sec:pl_sem}

In this section we present two semantic proposals for constructor systems that are plural in the sense described in the introduction, but at the same time are different to the run-time choice semantics induced by term rewriting. We will formalize them by means of two
modifications of the \crwl\ proof calculus, that will now consider sets of partial values for parameter passing instead of single partial values. Thus 
only the rule $\crule{OR}$ should be modified. To avoid the need to extend the syntax with new constructions to represent those ``set expressions'' that we mentioned in the introduction, we will exploit the fact that $\den{e_1~?~e_2} = \den{e_1} \cup \den{e_2}$ for any sensible semantics---in particular each of the semantics considered in this work. Therefore the substitutions used for parameter passing will map variables to ``disjunctions of values.'' 
We define the set $\icsus = \{\theta \in \Subst_\perp~|~\forall X \in \mi{dom}(\theta), \theta(X) = t_1~?~\ldots~?$ $t_n$ such that $t_1, \ldots, t_n \in \CTerm_\perp, n>0\}$, for which $\CSubst_\perp \subseteq \icsus \subseteq \Subst_\perp$ obviously holds. The operator $? : \CSubst_\perp^* \rightarrow \icsus$ constructs the $\icsus$ corresponding to a non-empty sequence of $\CSubst_\perp$
, and it is defined as follows: 
$$
 ?(\theta_1 \ldots \theta_n)(X) = \left\{
 \begin{array}{ll}
X~?~\rho_1(X) ~?~ \ldots ~?~ \rho_m(X) & \mbox{  if } \exists \theta_i \mbox{ such that }  X \not\in \mi{dom}(\theta_i) \\
 \theta_1(X) ~?~ \ldots ~?~ \theta_n(X) & \mbox{  otherwise }
 \end{array}\right.
$$
where $\rho_1 \ldots \rho_m = \theta_1 \ldots \theta_n~|~\lambda \theta.(X \in \mi{dom}(\theta))$. 
This operator is overloaded to handle non-empty sets $\Theta \subseteq \CSubst_\perp$ as $?\Theta = ?(\theta_1 \ldots \theta_n)$ where the sequence $\theta_1 \ldots \theta_n$ corresponds to an arbitrary reordering of the elements of $\Theta$---for example using some standard order of terms in the line of \cite{SterlingShapiro86}.

\begin{lemma}\label{lemIntCjtoCsus}
For any $\theta_1, \ldots, \theta_n \in \CSubst_\perp$, $\mi{dom}(?\{\theta_1 \ldots \theta_n\}) = \bigcup_i \mi{dom}(\theta_i)$. 
\end{lemma}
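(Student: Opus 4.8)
The plan is to reduce the set equality to a pointwise (per-variable) characterization of membership in $\mi{dom}(?\{\theta_1, \ldots, \theta_n\})$, and then to compute $?\{\theta_1, \ldots, \theta_n\}(X)$ explicitly by unfolding the definition of the operator. First I would fix an arbitrary reordering $\theta'_1 \ldots \theta'_k$ of the elements of the set $\{\theta_1, \ldots, \theta_n\}$, which by definition of the overloaded $?$ on sets satisfies $?\{\theta_1, \ldots, \theta_n\} \equiv ?(\theta'_1 \ldots \theta'_k)$; since the $\theta'_j$ range exactly over the same substitutions as the $\theta_i$, we have $\bigcup_j \mi{dom}(\theta'_j) = \bigcup_i \mi{dom}(\theta_i)$, so it suffices to prove $\mi{dom}(?(\theta'_1 \ldots \theta'_k)) = \bigcup_j \mi{dom}(\theta'_j)$. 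Unfolding the definition of $\mi{dom}$, this amounts to showing, for every $X \in \var$, that $?(\theta'_1 \ldots \theta'_k)(X) \not\equiv X$ if and only if $X \in \mi{dom}(\theta'_j)$ for some $j$.

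Next I would split on whether some $\theta'_j$ has $X$ in its domain. If $X \notin \mi{dom}(\theta'_j)$ for every $j$, then in particular the guard of the first branch of the definition of $?$ holds, and the filtered sequence $\theta'_1 \ldots \theta'_k \mid \lambda \theta.(X \in \mi{dom}(\theta))$ is empty, i.e. $m = 0$; reading $X~?~\rho_1(X)~?~\ldots~?~\rho_m(X)$ as just $X$ when $m = 0$, we get $?(\theta'_1 \ldots \theta'_k)(X) \equiv X$, so $X \notin \mi{dom}(?(\theta'_1 \ldots \theta'_k))$, as required. Conversely, suppose $X \in \mi{dom}(\theta'_j)$ for some $j$. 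If additionally some $\theta'_i$ has $X \notin \mi{dom}(\theta'_i)$, the first branch applies with a non-empty filtered sequence ($m \geq 1$), hence $?(\theta'_1 \ldots \theta'_k)(X) \equiv X~?~\rho_1(X)~?~\ldots~?~\rho_m(X)$, an expression rooted by $? \in \FS^2$, which is syntactically distinct from the variable $X$. Otherwise $X \in \mi{dom}(\theta'_i)$ for all $i$, the second branch applies, and $?(\theta'_1 \ldots \theta'_k)(X) \equiv \theta'_1(X)~?~\ldots~?~\theta'_k(X)$: for $k \geq 2$ this is again $?$-rooted, hence $\not\equiv X$, and for $k = 1$ it equals $\theta'_1(X)$, which is $\not\equiv X$ precisely because $X \in \mi{dom}(\theta'_1)$. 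In every case the desired equivalence holds, which proves both inclusions.

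This lemma is essentially a bookkeeping exercise, so I do not expect a real obstacle; the only points that need a little care are the degenerate cases — interpreting the displayed ``$X~?~\rho_1(X)~?~\ldots~?~\rho_m(X)$'' correctly when $m = 0$ (it denotes $X$, so this case does not contribute $X$ to the domain), and the case $k = 1$ in the second branch (where the iterated $?$ collapses to a single $\theta'_1(X)$, so one cannot appeal to ``$?$-rootedness'' and must instead use $X \in \mi{dom}(\theta'_1)$ directly). I would also note explicitly that the choice of reordering is immaterial, since both $\bigcup_i \mi{dom}(\theta_i)$ and the two branch conditions depend only on the underlying set $\{\theta_1, \ldots, \theta_n\}$, which shows at the same time that the statement is well posed. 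No facts beyond the definitions of $\mi{dom}$, $\Subst_\perp$, and the operator $?$ are needed.
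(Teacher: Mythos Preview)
Your proposal is correct and is precisely the kind of direct unfolding of the definition of $?$ that the paper has in mind when it says ``simple calculations using the definition of $?\{\theta_1 \ldots \theta_n\}$'' (the paper defers the actual details to an external technical report). Your careful handling of the degenerate cases $m=0$ and $k=1$ is exactly what is needed to make the argument airtight.
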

\begin{proof}
Simple calculations using the definition of $?\{\theta_1 \ldots \theta_n\}$, see \refap{PROOFlemIntCjtoCsus} for details. 
\end{proof}

\subsection{\apcrwl}\label{sec:aplural}
Our first semantic proposal is defined by the \apcrwl-proof calculus in Figure \ref{fig:apcrwl}. The only difference with the \crwl\ proof calculus in Figure \ref{fig:crwl} is that the rule $\crule{OR}$ has been replaced by \textbf{\apor} (alpha plural outer reduction), in which we may compute more than one partial value for each argument, and then use a substitution from $\icsus$ instead of $\CSubst_\perp$ for parameter passing, achieving a plural semantics.\footnote{In fact angelic non-strict plural non-determinism.} Besides, extra variables are instantiated by an arbitrary $\thetae \in \icsus$ for the same reason. Just like \crwl, the calculus evaluates expressions in an innermost way, and avoids the use of any transitivity rule that would induce a step-wise semantics like e.g. term rewriting. The motivation for that is to get a compositional calculus in the values it computes, i.e., that the semantics of an expression would only depend on the semantics of its constituents, in a simple way---we will give a formal characterization for that in Theorem \ref{lCompPS} below. Note that the use of partial c-terms as values is crucial to prevent innermost evaluation from making functions strict, thus losing lazy evaluation. Fortunately the rule $\crule{B}$ combined with the use of partial
substitutions for parameter passing ensure a lazy behaviour for both \apcrwl\ and \crwl.
Therefore we could roughly describe the parameter passing of \crwl\ as call-by-partial-value,
while \apcrwl\ would perform call-by-set-of-partial-values.

The calculus derives \emph{reduction statements} of the form 
$\prog \vdap e \cltop t$, 
which expresses that $t$ is (or approximates to) a possible value for $e$ in this semantics, under the program $\prog$. For any {\emph \apcrwl-proof} we define its \emph{size} as the number of applications of rules of the calculus. The \emph{\apcrwl-denotation} of an expression $e \in \Exp_\perp$ under a program $\prog$ in \apcrwl\ is defined as $\denapp{e}{\prog} = \{t \in \CTerm_\perp~|~\prog \vdap e \cltop t\}$. In the following, we will usually omit the reference to $\prog$ and $\alpha pl$, and even will skip $\vdap$, when it is clearly implied by the context.

\begin{figure}[t!]
\begin{center}
 \framebox{
\begin{minipage}{.95\textwidth}
\begin{center}
\begin{small}
\begin{tabular}{l@{~~~~~}l}
\regla{RR}{X \clto X}{}{$X \in \var$} \hspace{2ex}
\regla{DC}{c(e_1, \ldots, e_n) \clto c(t_1, \ldots, t_n)}{e_1 \clto t_1 \ldots e_n \clto t_n}{$c \in {\CS}^n$} \\[.81cm]
\regla{B}{e \clto \perp}{}{} \hspace{2ex}
\regla{\apor}{f(e_1, \ldots, e_n) \clto t}{\begin{array}{c}
e_1 \cltop p_1\theta_{11} \\
 \ldots \\
e_1 \cltop p_1\theta_{1 m_1} \\
\end{array}
\ldots~ \begin{array}{c}
		e_n \cltop p_n\theta_{n1} \\
	      \ \ldots \\
              \  e_n \cltop p_n\theta_{n m_n} \\
	 \end{array}~
\begin{array}{c}
          ~ \\
          ~ \\
          r\theta \cltop t
          \end{array}
}{} \\
\hspace{30ex} $\begin{array}{l}
\textrm{if } (f(\overline{p}) \tor r)\in {\cal P}\mbox{, } 
\forall i \in \{1,\ldots,n\}~\Theta_i = \{\theta_{i 1}, \ldots, \theta_{i m_i}\}\\
\theta = (\biguplus\limits_{i=1}^n ?\Theta_i) \uplus \thetae, 
\forall i \in \{1,\ldots,n\}, j \in \{1, \ldots, m_i\}\\
\mi{dom}(\theta_{i j}) \subseteq \mi{var}(p_i), \forall i \in \{1, \ldots, n\}~m_{i} > 0\\ 
\mi{dom}(\thetae) \subseteq \vextra{f(\overline{p}} \tor r), \thetae \in \icsus
  \end{array}$
\end{tabular}
\end{small}
\end{center}
\end{minipage}
}
\end{center}
    \caption{Rules of \apcrwl}
    \label{fig:apcrwl}
\end{figure}

\begin{example}\label{EPlural1}
Consider the program of Example \ref{EIntro1}, that is $\{f(c(X)) \tor d(X,X),$ $X~?~Y \tor X,$ $X~?~Y \tor Y\}$. The following is a \apcrwl-proof for the statement $f(c(0)~?~c(1)) \cltop d(0, 1)$ (some steps have been omitted for the sake of conciseness):\\
\begin{center}
\begin{footnotesize}
$
\scalebox{.92}{
\infer[\mbox{\apor}]{~~~~~~~~~~~~~~~~f(c(0)?c(1)) \cltop d(0, 1)~~~~~~~~~~~~~~~~}
           {
           \infer[\mbox{\apor}]{~~~~~~~~~~~~~~~~~~~c(0)?c(1) \cltop c(0)~~~~~~~~~~~~~~~~~~~}
                 {
                 \infer[\crule{DC}]{c(0) \cltop c(0)}{\infer[\crule{DC}]{0 \cltop 0}{}} &
                \ \infer[\crule{B}]{c(1) \cltop \perp}{} &
                \ \infer{c(0) \cltop c(0)}{\vdots}
                  }~
          \ \infer{c(0)?c(1) \cltop c(1)}{\vdots}~
        \  \infer[\crule{DC}]{d(0?1, 0?1) \cltop d(0,1)}
                     {
                      \infer{0?1 \cltop 0}{\vdots} &
                      \infer{0?1 \cltop 1}{\vdots}
                     }
            }
}
$
\end{footnotesize}
\end{center}
%
\end{example}

One of the most important properties of \apcrwl\ is compositionality, a property very close to the DET-additivity property for algebraic specifications of \cite{hussmann93}, or the referencial transparency property of \cite{SS90}. This property shows that the \apcrwl-denotation of any expression put in a context only depends on the \apcrwl-denotation of that expression, and formalizes the idea that the semantics of a whole expression depends only on the semantics of its constituents, as we informally pointed above. 

\begin{theorem}[Compositionality of \apcrwl]\label{lCompPS} 
For any program, $\con \in \Cntxt$ and $e \in \Exp_\perp$:
$$
\denap{\con[e]} = \bigcup_{\{t_1, \ldots, t_n\} \subseteq \denap{e}} \denap{\con[t_1~?~\ldots~?~t_n]}
$$
for any arrangement of the elements of $\{t_1, \ldots,$ $t_n\}$ in $t_1~?~\ldots~?~t_n$. As a consequence, for any $e' \in \Exp_\perp$:
$$
\denap{e} = \denap{e'} \mbox{ iff  } \forall \con \in \Cntxt. \denap{\con[e]} = \denap{\con[e']}
$$
\end{theorem}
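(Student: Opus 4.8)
The plan is to prove the displayed set equality by double inclusion and then read off the ``consequence'' by instantiating the context with $[\ ]$. Throughout I write $?S$ for the disjunction $t_1~?~\ldots~?~t_n$ when $S = \{t_1, \ldots, t_n\}$; since $\denap{e_1~?~e_2} = \denap{e_1} \cup \denap{e_2}$ one gets $\denap{?S} = \bigcup_{t \in S} \denap{t}$ independently of the arrangement, and since each $t_i \in \CTerm_\perp$, $\denap{t_i} = \{s \in \CTerm_\perp \mid s \sqsubseteq t_i\}$. I would first isolate two auxiliary facts, each to be proved by a routine induction on the structure of $\apcrwl$-proofs. (i) \emph{Polarity / downward closure}: if $t \in \denap{e}$ and $s \sqsubseteq t$ then $s \in \denap{e}$; from a proof of $e \cltop t$ one builds a proof of $e \cltop s$ by replacing the relevant subproofs with applications of $\crule{B}$ (in the $\apor$ case weakening only the premise $r\theta \cltop t$ to $r\theta \cltop s$, in the $\crule{DC}$ case recursing componentwise, $\crule{RR}$ and $\crule{B}$ being immediate). (ii) \emph{Context monotonicity}: if $\denap{e_1} \subseteq \denap{e_2}$ then $\denap{\con[e_1]} \subseteq \denap{\con[e_2]}$ for every $\con \in \Cntxt$; from a proof of $\con[e_1] \cltop t$ one rebuilds one of $\con[e_2] \cltop t$ by case analysis on $\con$, the base case $\con = [\ ]$ being immediate from the hypothesis and the $\crule{DC}$ and $\apor$ cases just recursing into the unique subproof for the hole-bearing argument, leaving all other subproofs and the chosen substitution untouched. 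I also use the trivial facts that $t \cltop t$ for every $t \in \CTerm_\perp$ and that $e \cltop \perp$ always holds.

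For the inclusion $\supseteq$, fix a finite non-empty $S = \{t_1, \ldots, t_n\} \subseteq \denap{e}$. For each $i$, $\denap{t_i} = \{s \mid s \sqsubseteq t_i\} \subseteq \denap{e}$ by polarity, hence $\denap{?S} = \bigcup_i \denap{t_i} \subseteq \denap{e}$, and context monotonicity gives $\denap{\con[?S]} \subseteq \denap{\con[e]}$. Taking the union over all such $S$ yields $\supseteq$; note this union is never over an empty family, since $\{\perp\} \subseteq \denap{e}$ always.

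For the inclusion $\subseteq$ I would prove the sharper claim: \emph{if $\con[e] \cltop t$ then there is a finite non-empty $S \subseteq \denap{e}$ with $\con[?S] \cltop t$}, by induction on the size of the $\apcrwl$-proof, with an inner case analysis on $\con$. If $\con = [\ ]$ then $t \in \denap{e}$ and $S = \{t\}$ works because $t \cltop t$. If $\con = c(\ldots, \con', \ldots)$ with $c \in \CS$, the last rule is $\crule{B}$ (take $S = \{\perp\}$, using $\con[\perp] \cltop \perp$) or $\crule{DC}$ with $t = c(\ldots, s, \ldots)$ and a strictly smaller subproof of $\con'[e] \cltop s$; the induction hypothesis yields $S$ with $\con'[?S] \cltop s$, and reapplying $\crule{DC}$ gives $\con[?S] \cltop t$. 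If $\con = f(\ldots, \con', \ldots)$ with $f \in \FS$, then apart from $\crule{B}$ the last rule is $\apor$ for some $f(\overline{p}) \tor r \in \prog$, and the hole-bearing $j$-th argument contributes premises $\con'[e] \cltop p_j\theta_{j1}, \ldots, \con'[e] \cltop p_j\theta_{j m_j}$ with $m_j > 0$; applying the induction hypothesis to each of these strictly smaller subproofs yields finite non-empty $S_k \subseteq \denap{e}$ with $\con'[?S_k] \cltop p_j\theta_{jk}$. Put $S = \bigcup_k S_k$: it is finite, non-empty, and contained in $\denap{e}$; since $S_k \subseteq S$ gives $\denap{?S_k} \subseteq \denap{?S}$, context monotonicity yields $\con'[?S] \cltop p_j\theta_{jk}$ for every $k$, so rebuilding the very same $\apor$ step (same program rule, same $\theta$, and the other premises $e_i \cltop p_i\theta_{ik}$ for $i \ne j$ and $r\theta \cltop t$ unchanged) gives $\con[?S] \cltop t$.

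The main obstacle is precisely this last case: because the subexpression $e$ may be evaluated to several different values within a single $\apor$ step, the witnessing set $S$ has to be amalgamated from the sets supplied by the induction hypothesis for each such evaluation, and one needs context monotonicity to re-derive all the required premises uniformly from the enlarged disjunction $?S$ while keeping the side conditions of $\apor$ verbatim; the rest is bookkeeping (non-emptiness of $S$, the use of $\{\perp\}$ in the $\crule{B}$ cases, the fact that no premise other than the ones for the $j$-th argument is affected). Finally the ``consequence'' follows at once: the right-to-left implication is the instance $\con := [\ ]$, and for the converse, if $\denap{e} = \denap{e'}$ then the index family $\{S \subseteq \denap{e}\}$ coincides with $\{S \subseteq \denap{e'}\}$, so for every $\con$ the main equality gives $\denap{\con[e]} = \bigcup_{S \subseteq \denap{e}} \denap{\con[?S]} = \bigcup_{S \subseteq \denap{e'}} \denap{\con[?S]} = \denap{\con[e']}$.
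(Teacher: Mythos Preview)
Your proposal is correct and follows essentially the same route as the paper: both inclusions are obtained by induction on the size of the $\apcrwl$-proof, with the key case being $\apor$ at a function symbol containing the hole, where the several evaluations of the hole-bearing argument must be collected into a single finite set $S$ and the premises re-derived for the enlarged disjunction. Your factoring of the $\supseteq$ direction (and of the ``upgrade'' step inside $\subseteq$) through a standalone context-monotonicity lemma is a clean presentational choice; the paper's own proof carries out the analogous induction directly, but the content is the same. One tiny slip: in your sketch of (ii) you speak of ``the unique subproof for the hole-bearing argument'' in the $\apor$ case, but there are $m_j$ such subproofs---you handle this correctly in the $\subseteq$ direction, so just apply the same plural recursion in (ii).
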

\begin{proof}
We have to prove that, for any $t \in \CTerm_\perp$, if $\con[e] \cltop t$ then $\exists \{s_1, \ldots, s_n\} \subseteq \denp{e}$ such that $\con[s_1~?~\ldots~?~s_n] \cltop t$; and conversely, that given $\{s_1, \ldots, s_n\} \subseteq \denp{e}$ such that $\con[s_1~?~\ldots~?~s_n] \cltop t$ then $\con[e] \cltop t$. Each of these statements can be proved by induction on the size of the starting proof, see \refap{PROOFlCompPS} for details. 
\end{proof}
Contrary to what happens to call-time choice \cite{LRSflops08,LMRS10LetRwTPLP}, we cannot have a compositionality result for single values like $\den{\con[e]} = \bigcup_{t \in \den{e}} \den{\con[t]}$ for any arbitrary context $\con$, because $e$ could appear in a function call when put inside $\con$, and that function might demand more that one value from $e$, because of the plurarity of \pcrwl. We can see this considering the program from Example \refp{EIntro1} extended with a function $coin$ defined by $\{coin \tor 0, coin \tor 1\}$, the context $f(c([]))$ and the expression $coin$: in order to compute the value $d(0,1) \in \den{f(c(coin))}$ we need $\{0,1\} \subseteq \den{coin}$ while a single value of $coin$ is not enough, which is reflected in the fact that $d(0,1) \in \den{f(c(0~?~1))}$ while $d(0,1) \not\in \den{f(c(0))} \cup \den{f(c(1))}$. On the other hand, note that we only need a finite subset of the denotation of the expression put in context, but not the whole denotation, which could be infinite thus leading to $t_1 ~?~ \ldots ~?~ t_n$ being a malformed expression, as we only consider finite expressions in this work. To illustrate this we may consider again the program from Example \ref{EIntro1}, the symbols $z \in \CS^0, s \in \CS^1$ for the Peano natural numbers representation, and the function $\mi{from}$ defined as $\{\mi{from}(X) \tor X, \mi{from}(X) \tor s(\mi{from}(X))\}$. Then, using the same context as above and the expression $\mi{from}(z)$, in order to compute $d(z, s(z)) \in \den{f(c(\mi{from}(z)))}$ we just need $\{z, s(z)\} \subseteq \den{\mi{from}(z)}$, but not the infinite set of elements in $\den{\mi{from}(z)}$. The intuition behind this is that, as we use c-terms as values and c-terms are finite, then any computation of a value is a finite process that only involves a finite amount of information: in this case a finite subset of the denotation of the expression put in context. 


\bigskip
Besides compositionality, \apcrwl\ enjoys other nice properties, like the following polarity property.
\begin{proposition}[Polarity of \apcrwl]\label{LMonPlural}
For any program $\prog$, $e, e' \in \Exp_\perp$, $t, t' \in \CTerm_\perp$ if $e \ordap e'$ and $t' \ordap t$ then $\prog \pps e \cltop t$ implies $\prog \pps e' \cltop t'$ with a proof of the same size or smaller.
\end{proposition}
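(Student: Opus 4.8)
The plan is to proceed by induction on the size of the given \apcrwl-proof of $\prog \pps e \cltop t$, with a case analysis on the rule applied at its root. The only facts about the approximation ordering that are needed are the obvious structural ones, which follow from $\ordap$ being the least partial order closed under $\perp \ordap e$ and context application: first, $\perp \ordap e'$ for every $e'$; second, if $h(e_1, \ldots, e_n) \ordap e'$ with $h \in \CS^n \cup \FS^n$ then necessarily $e' \equiv h(e_1', \ldots, e_n')$ with $e_i \ordap e_i'$ for all $i$; and dually, $t' \ordap c(t_1, \ldots, t_n)$ forces either $t' \equiv \perp$ or $t' \equiv c(t_1', \ldots, t_n')$ with $t_i' \ordap t_i$, while $t' \ordap \perp$ forces $t' \equiv \perp$, and $t' \ordap X$ (with $t' \in \CTerm_\perp$) forces $t' \equiv \perp$ or $t' \equiv X$.

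With these observations the base and constructor cases are essentially mechanical. If the root is $\crule{B}$ then $t \equiv \perp$, hence $t' \equiv \perp$, and $\crule{B}$ derives $e' \cltop \perp$ in a single step for any $e'$. If the root is $\crule{RR}$ then $e \equiv t \equiv X \in \var$; from $X \ordap e'$ we get $e' \equiv X$, and we conclude with $\crule{RR}$ when $t' \equiv X$ and with $\crule{B}$ when $t' \equiv \perp$, in both cases a one-step proof. If the root is $\crule{DC}$ then $e \equiv c(e_1, \ldots, e_n)$, $t \equiv c(t_1, \ldots, t_n)$ with premises $e_i \cltop t_i$; from $e \ordap e'$ we get $e' \equiv c(e_1', \ldots, e_n')$ with $e_i \ordap e_i'$. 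If $t' \equiv \perp$ we finish by $\crule{B}$; otherwise $t' \equiv c(t_1', \ldots, t_n')$ with $t_i' \ordap t_i$, and applying the induction hypothesis to each strictly smaller subproof $e_i \cltop t_i$ yields proofs of $e_i' \cltop t_i'$ of no greater size, which reassemble with $\crule{DC}$ into a proof of $e' \cltop t'$ no larger than the original.

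The interesting case is the root being $\crule{\apor}$, with $e \equiv f(e_1, \ldots, e_n)$, premises $e_i \cltop p_i\theta_{ij}$ for $i \in \{1,\ldots,n\}$, $j \in \{1, \ldots, m_i\}$, and $r\theta \cltop t$, under a program rule $f(\overline{p}) \tor r \in \prog$ and the side conditions relating $\Theta_i$, $\thetae$ and $\theta$. From $f(e_1, \ldots, e_n) \ordap e'$ we get $e' \equiv f(e_1', \ldots, e_n')$ with $e_i \ordap e_i'$. Since each $p_i\theta_{ij} \in \CTerm_\perp$, the induction hypothesis applied to the subproof $e_i \cltop p_i\theta_{ij}$ with the trivial value ordering $p_i\theta_{ij} \ordap p_i\theta_{ij}$ gives a proof of $e_i' \cltop p_i\theta_{ij}$ of no greater size; and applied to $r\theta \cltop t$ with $r\theta \ordap r\theta$ and $t' \ordap t$ it gives a proof of $r\theta \cltop t'$ of no greater size. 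The key point is that the side conditions of $\crule{\apor}$ mention only the program rule and the substitutions $\theta_{ij}, \thetae, \theta$, not the endpoints, so they remain valid verbatim; reassembling with $\crule{\apor}$ using exactly the same rule and substitutions produces a proof of $e' \cltop t'$ whose size does not exceed that of the original. This closes the induction.

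I do not expect any real obstacle here; the only points that call for mild care are keeping the two polarities straight—the statement is covariant in the expression and contravariant in the value, so $\perp$ is always an available fallback value and $\crule{B}$ is the universal escape hatch—and noticing that the heavy side conditions of $\crule{\apor}$ play no role beyond being copied unchanged.
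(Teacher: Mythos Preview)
Your proof is correct and follows essentially the same approach as the paper, which merely records ``a simple induction on the structure of $e \cltop t$ using basic properties of $\ordap$'' and defers details to a technical report. Your case analysis spells out precisely that induction, including the observation that the side conditions of \crule{\apor} depend only on the substitutions and can be carried over unchanged.
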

\begin{proof}
By a simple induction on the structure of $e \cltop t$ using basic properties of $\ordap$, see \refap{PROOFLMonPlural} for details. 
\end{proof}

\pcrwl\ also has some monotonicity properties related to substitutions. These are formulated using the preorder $\pordap$ over $\icsus$ defined by $\theta \pordap \theta'$ iff $\forall X \in \var$, given $\theta(X) = t_1~?~\ldots~?~t_n$ and $\theta'(X) = t'_1~?~\ldots~?~t'_m$ then $\forall t \in \{t_1, \ldots, t_n\} \exists t' \in \{t'_1, \ldots, t'_m\}$ such that $t \ordap t'$; and the preorder $\adsord$ over $\Subst_\perp$ defined by $\sigma \adsord \sigma'$ iff $\forall X \in \var, ~\denp{\sigma(X)} \subseteq \denp{\sigma'(X)}$.
\begin{proposition}[Monotonicity for substitutions of \apcrwl]\label{LMonSusPlural} For any program, $e \in \Exp_\perp$, $t \in \CTerm_\perp$, $\sigma, \sigma' \in \Subst_\perp$, $\theta, \theta' \in \icsus$:
\begin{enumerate}
  \item \textbf{Strong monotonicity of {\boldmath $\Subst_\perp$}}: If $\forall X \in \var, s \in \CTerm_\perp$ given $\prog \pps \sigma(X) \cltop s$ with size $K$ we also have $\prog \pps \sigma'(X) \cltop s$ with size $K' \leq K$, then 
$\pps e\sigma \cltop t$ with size $L$ implies $\pps e\sigma' \cltop t$ with size $L' \leq L$.
   \item \textbf{Monotonicity of {\boldmath $\CSubst_\perp$}}: If $\theta, \theta' \in \CSubst_\perp$ and $\theta \ordap \theta'$ then $\prog \pps e\theta \cltop t$ with size $K$ implies $\prog \pps e\theta' \cltop t$ with size $K' \leq K$.
   \item \textbf{Monotonicity of {\boldmath $\Subst_\perp$}}: If $\sigma \adsord \sigma'$ then $\denp{e\sigma} \subseteq \denp{e\sigma'}$.
   \item \textbf{Monotonicity of {\boldmath $\icsus$}}: If $\theta \pordap \theta'$ then $\denp{e\theta} \subseteq \denp{e\theta'}$.
\end{enumerate}
\end{proposition}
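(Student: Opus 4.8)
The plan is to prove item~(1) directly, by induction on the size of the \apcrwl-proof, and then to obtain items~(2), (3) and~(4) as consequences of~(1) together with the polarity property (Proposition~\ref{LMonPlural}). For~(1) I would do induction on the size $L$ of the proof of $\prog \pps e\sigma \cltop t$, distinguishing cases on the shape of $e$. If $e \equiv X \in \var$, then $e\sigma \equiv \sigma(X)$ and $e\sigma' \equiv \sigma'(X)$, so the conclusion is exactly the assumed hypothesis instantiated with this $X$ and with $s := t$, $K := L$. If $e \equiv h(e_1, \ldots, e_n)$, then $e\sigma \equiv h(e_1\sigma, \ldots, e_n\sigma)$ and the last rule of the proof is $\crule{B}$, $\crule{DC}$ (when $h$ is a constructor), or \textbf{\apor} (when $h$ is a function). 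The case $\crule{B}$ is immediate ($t \equiv \perp$, rebuilt in one step). For $\crule{DC}$ we have $t \equiv h(t_1, \ldots, t_n)$ with subproofs $e_i\sigma \cltop t_i$ of strictly smaller size, so the induction hypothesis yields $e_i\sigma' \cltop t_i$ with no larger size and we reassemble $\crule{DC}$. For \textbf{\apor}, assuming as usual that the variables of the program rule $f(\overline{p}) \tor r \in \prog$ are renamed apart from those affected by $\sigma$ and $\sigma'$, the side conditions of the rule and the premise $r\theta \cltop t$ (with $\theta = (\biguplus_{i} ?\Theta_i) \uplus \thetae$) do not mention $\sigma$ beyond the arguments $e_i\sigma$; applying the induction hypothesis to each subderivation $e_i\sigma \cltop p_i\theta_{ij}$ and keeping $r\theta \cltop t$ untouched, we rebuild the same \textbf{\apor} step, whose size does not grow.

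Item~(2) is then a corollary: from $\theta \ordap \theta'$ we get $\theta(X) \ordap \theta'(X)$ for every $X$, and polarity (with $s \ordap s$) turns $\prog \pps \theta(X) \cltop s$ of size $K$ into $\prog \pps \theta'(X) \cltop s$ of size at most $K$, which is precisely the hypothesis of~(1) with $\sigma := \theta$, $\sigma' := \theta'$ (note $\CSubst_\perp \subseteq \Subst_\perp$). Item~(3) follows by running the same induction as in~(1) while dropping every reference to proof sizes: the only place where the size-controlled hypothesis of~(1) was used is the base case $e \equiv X$, and there the hypothesis $\sigma \adsord \sigma'$ --- which unfolds to $\denap{\sigma(X)} \subseteq \denap{\sigma'(X)}$, i.e.\ $\sigma(X) \cltop s$ implies $\sigma'(X) \cltop s$ --- is exactly what is needed; hence $\denap{e\sigma} \subseteq \denap{e\sigma'}$.

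For item~(4) I would reduce to~(3) by showing that $\theta \pordap \theta'$ implies $\theta \adsord \theta'$. Fixing $X$ and writing $\theta(X) \equiv t_1~?~\ldots~?~t_n$ and $\theta'(X) \equiv t'_1~?~\ldots~?~t'_m$, one uses the (easily checked, via \textbf{\apor} applied with the rules for $?$) identity $\denap{e_1~?~e_2} = \denap{e_1} \cup \denap{e_2}$, hence $\denap{\theta(X)} = \bigcup_i \denap{t_i}$, together with polarity: every $t_i$ satisfies $t_i \ordap t'_k$ for some $k$, so $\denap{t_i} \subseteq \denap{t'_k}$ and therefore $\denap{\theta(X)} = \bigcup_i \denap{t_i} \subseteq \bigcup_j \denap{t'_j} = \denap{\theta'(X)}$; thus $\theta \adsord \theta'$ and~(3) gives $\denap{e\theta} \subseteq \denap{e\theta'}$.

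I expect the bulk of the work to lie in item~(1), and concretely in the \textbf{\apor} case, where one must check that replacing $\sigma$ by $\sigma'$ does not disturb the program-rule instantiation: this relies on the standard renaming-apart convention so that $p_i$, $r$, the $\theta_{ij}$, $\thetae$ and the whole subderivation of $r\theta \cltop t$ carry over verbatim, the only modified subderivations being those of the arguments. Once this bookkeeping is settled the argument is routine, and items~(2)--(4) follow as indicated, modulo the small denotational identity for $?$ invoked in~(4). Full details would be deferred to \refap{PROOFLMonSusPlural}.
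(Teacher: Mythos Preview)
Your proposal is correct and follows the same approach the paper takes (the detailed proof is deferred to the technical report, but the structure is the standard one): prove (1) by induction on the size of the \apcrwl-proof with a case analysis on the shape of $e$, using the renaming-apart convention in the \textbf{\apor} case so that the $r\theta \cltop t$ premise carries over unchanged, and then derive (2) from (1) plus polarity, (3) by replaying the induction without size bounds, and (4) by reducing $\pordap$ to $\adsord$ via the identity $\denap{e_1~?~e_2} = \denap{e_1}\cup\denap{e_2}$ and polarity.
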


The properties of \apcrwl\ we have seen so far are shared with \crwl, which is something natural taking into account that \apcrwl\ is a modification of that semantics. 
Nevertheless, there are some properties of \crwl---and as a consequence, of call-time choice---that do not hold for \apcrwl. One of these is the correctness of the \textit{bubbling} operational rule \cite{AntoyBrownChiang06Termgraph}, which can be formulated as ``under any program and for any $\con \in \Cntxt$, $e_1, e_2 \in \Exp_\perp$ we have that $\den{\con[e_1~?~e_2]} = \den{\con[e_1]~?~\con[e_2]}$''. 
Note that Examples \ref{EIntro1} and \ref{EIntro2} already show that this property does not hold for run-time choice, the following (counter)example proves that it is not the case for \apcrwl\ neither.
\begin{example}\label{ex:noBubling}
Consider the program $\prog = \{\pair(X) \tor (X,X), X~?~Y \tor X, X~?~Y \tor Y\}$ and the expressions $\pair(0~?~1)$ and $\pair(0)~?~\pair(1)$ which correspond to a bubbling step using $\con = \pair([])$. It is easy to check that $(0, 1) \in \denp{\pair(0~?~1)}$ while $(0, 1) \not\in \denp{\pair(0)~?~\pair(1)}$.
\end{example}
It was very enlightening for us to discover that the correctness of bubbling does not hold for \pcrwl, and in fact in  \cite{rodH08} it was wrongly considered as true. This shows that \crwl\ and \pcrwl\ are more different that it may appear at a first sight. In particular, regarding to bubbling, the important difference is that while \pcrwl\ is only compositional w.r.t.\ subsets of the denotation, \crwl\ is compositional w.r.t.\ single 
values of the denotation, as we saw above. 
Compositionality w.r.t.\ single values is stronger than compositionality w.r.t.\ subsets of the denotation, as the former implies the latter, and this is also exemplified by the fact that we need compositionality w.r.t.\ single values for bubbling to be correct, as we will see soon. On the other hand, compositionality w.r.t.\ subsets of the denotation is enough to obtain the result expressed at the end of Theorem \ref{lCompPS}, showing that expressions with the same values 
are indistinguishable, which corresponds to the value-based philosophy of FLP. 

As the bubbling rule is devised to improve the efficiency of computations \cite{AntoyBrownChiang06Termgraph}, it would be nice to be able to use it in some situations, although it would only be for a restricted class of contexts. 
In this line, we have found that bubbling is still correct under \apcrwl\ for a particular kind of contexts called \emph{constructor contexts} or just \emph{c-contexts}, which are contexts whose holes appear under a nested application of constructor symbols only, that is, $\ccon ::= [~]~|~c(e_1, \ldots, \ccon, \ldots$ $, e_n)$, with $c \in \CS^n, e_1, \ldots, e_n \in \Exp_\perp$. 
%
%
For c-contexts, \apcrwl\ enjoys the same compositionality  for single values as \crwl---that property holds in \crwl\ for arbitrary contexts---, as shown in the following result. 

\begin{proposition}[Compositionality of \apcrwl\ for c-contexts]\label{lemCompAPlCCon}
For any program, c-context $\ccon$ and $e \in \Exp_\perp$:
$$\denap{\ccon[e]} = \bigcup\limits_{t \in \denap{e}} \denap{\ccon[t]}$$
\end{proposition}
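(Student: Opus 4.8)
The plan is to establish the two inclusions separately, each by structural induction on the c-context $\ccon$. The key observation that makes single-value compositionality work here, in contrast with the general situation of Theorem \ref{lCompPS}, is that in a c-context the hole is reached by a chain of \crule{DC} (or \crule{B}) steps only: no application of \apor occurs at or above the hole, so the subproof rooted at the hole position demands exactly one value of $e$.

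For the inclusion $\bigcup_{t \in \denap{e}} \denap{\ccon[t]} \subseteq \denap{\ccon[e]}$ I would prove the statement: for every c-context $\ccon$, every $e \in \Exp_\perp$, every $s \in \denap{e}$ and every $t \in \CTerm_\perp$, $\ccon[s] \cltop t$ implies $\ccon[e] \cltop t$. If $\ccon \equiv [~]$ this reduces to showing that $e \cltop s$ and $s \cltop t$ with $s \in \CTerm_\perp$ imply $e \cltop t$; this follows from the elementary fact that a reduction from a partial c-term can only reach one of its approximations (so $s \in \CTerm_\perp$ and $s \cltop t$ give $t \ordap s$, by induction on $s$) together with the polarity of \apcrwl\ (Proposition \ref{LMonPlural}), applied with the trivial $e \ordap e$ and with $t \ordap s$. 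If $\ccon \equiv c(e_1,\ldots,\ccon',\ldots,e_n)$ with the hole in argument $i$, a proof of $\ccon[s] \cltop t$ ends either with \crule{B}, giving $t \equiv \perp$ and hence $\ccon[e] \cltop \perp$ by \crule{B}, or with \crule{DC}, giving $t \equiv c(t_1,\ldots,t_n)$ with premises $e_j \cltop t_j$ for $j \neq i$ and $\ccon'[s] \cltop t_i$; the induction hypothesis on $\ccon'$ yields $\ccon'[e] \cltop t_i$, and reassembling with \crule{DC} gives $\ccon[e] \cltop t$.

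For the converse inclusion I would prove: for every c-context $\ccon$, $e \in \Exp_\perp$ and $t \in \CTerm_\perp$, if $\ccon[e] \cltop t$ then there is some $s \in \denap{e}$ with $\ccon[s] \cltop t$. If $\ccon \equiv [~]$, take $s \equiv t$, which belongs to $\denap{e}$ by hypothesis, and observe $t \cltop t$ by reflexivity of $\cltop$ on partial c-terms (immediate induction with \crule{RR}, \crule{DC}, \crule{B}). If $\ccon \equiv c(e_1,\ldots,\ccon',\ldots,e_n)$, the last rule of the proof of $\ccon[e] \cltop t$ is \crule{B} --- then $t \equiv \perp$ and $s \equiv \perp \in \denap{e}$ works, since always $e \cltop \perp$ --- or \crule{DC} --- then from the premise $\ccon'[e] \cltop t_i$ the induction hypothesis provides $s \in \denap{e}$ with $\ccon'[s] \cltop t_i$, and \crule{DC} reassembles $\ccon[s] \cltop t$.

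I do not anticipate a genuine obstacle: the whole argument is driven by the observation in the first paragraph, and the only non-bookkeeping ingredients are the two one-line auxiliary facts used in the base cases --- reflexivity of $\cltop$ on c-terms, and that reducing a c-term yields an approximation of it --- plus the already-available polarity property. The subtlety, as the paper emphasises, is conceptual rather than technical: it lies in recognising that c-contexts are precisely the contexts for which no \apor step can intervene between the context and the hole, which is exactly what fails for arbitrary contexts and forces the weaker, subset-based statement of Theorem \ref{lCompPS} there.
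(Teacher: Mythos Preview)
Your proof is correct. The paper's own proof, however, proceeds differently: it says the argument is ``very similar to the proof for the general compositionality of \apcrwl\ from Theorem~\ref{lCompPS}'', and that proof is done by induction on the size of the starting \apcrwl-proof rather than by structural induction on the context. Your structural induction on $\ccon$ exploits directly the defining feature of c-contexts---that the spine down to the hole consists of constructor applications only---so the last rule above the hole is always \crule{B} or \crule{DC}, never \apor; this is precisely the observation you isolate in your first paragraph, and it makes each inductive step a one-line case analysis. The paper's proof-size induction, by contrast, is the same scheme used for arbitrary contexts in Theorem~\ref{lCompPS} and hence reuses that machinery wholesale; what you gain with your route is a self-contained argument tailored to c-contexts, while the paper gains uniformity with the general case.
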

\begin{proof}
Very similar to the proof for the general compositionality of \apcrwl\ from Theorem~\ref{lCompPS}, see \refap{PROOFlemCompAPlCCon} for details. 
\end{proof}

As compositionality for single values is the key property needed for bubbling to be correct, we get the following result for bubbling in \apcrwl.
\begin{proposition}[Bubbling for c-contexts in \apcrwl]\label{bubAPlural}
For any program, c-context $\ccon$ and $e_1, e_2 \in \Exp_\perp$, $\denap{\ccon[e1~?~e_2]} = \denap{\ccon[e_1]~?~\ccon[e_2]}$.
\end{proposition}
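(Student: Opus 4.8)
The plan is to derive the bubbling equality for c-contexts from the compositionality result for single values (Proposition~\ref{lemCompAPlCCon}), exactly as one would derive ordinary bubbling in \crwl\ from its single-value compositionality. The only extra ingredient is the elementary fact that disjunction distributes over denotation, i.e. $\denap{e_1~?~e_2} = \denap{e_1} \cup \denap{e_2}$, which holds because the only rules for $?$ are $\{X~?~Y \tor X, X~?~Y \tor Y\}$ and these are inert under all the semantics considered (this is stated in the introduction and in Section~\ref{sec:pl_sem}, and is immediate from the proof calculus: any proof of $e_1~?~e_2 \cltop t$ must end with an application of $\crule{\apor}$ for $?$, which reduces either to the first or the second argument, and conversely).

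Concretely, I would argue as follows. Apply Proposition~\ref{lemCompAPlCCon} with the expression $e_1~?~e_2$:
\begin{displaymath}
\denap{\ccon[e_1~?~e_2]} = \bigcup_{t \in \denap{e_1~?~e_2}} \denap{\ccon[t]} = \bigcup_{t \in \denap{e_1} \cup \denap{e_2}} \denap{\ccon[t]} = \left(\bigcup_{t \in \denap{e_1}} \denap{\ccon[t]}\right) \cup \left(\bigcup_{t \in \denap{e_2}} \denap{\ccon[t]}\right),
\end{displaymath}
where the middle step uses $\denap{e_1~?~e_2} = \denap{e_1} \cup \denap{e_2}$ and the last step is a routine manipulation of a union indexed by a union. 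Now apply Proposition~\ref{lemCompAPlCCon} again, once to $e_1$ and once to $e_2$, to rewrite the two bracketed terms as $\denap{\ccon[e_1]}$ and $\denap{\ccon[e_2]}$ respectively, obtaining $\denap{\ccon[e_1~?~e_2]} = \denap{\ccon[e_1]} \cup \denap{\ccon[e_2]}$. Finally use $\denap{\ccon[e_1]} \cup \denap{\ccon[e_2]} = \denap{\ccon[e_1]~?~\ccon[e_2]}$ — again the distributivity of $?$ over the denotation — to conclude $\denap{\ccon[e_1~?~e_2]} = \denap{\ccon[e_1]~?~\ccon[e_2]}$, which is the claim.

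I do not expect any serious obstacle here: the substance of the result is entirely contained in Proposition~\ref{lemCompAPlCCon}, and the present proposition is essentially a corollary. The only point requiring a line of care is the handling of $\denap{e_1~?~e_2} = \denap{e_1} \cup \denap{e_2}$ together with the use of $\perp$: one should note that $\denap{e_1~?~e_2}$ also contains $\perp$ via rule $\crule{B}$, but so do $\denap{e_1}$ and $\denap{e_2}$, so the equation still holds, and likewise $\denap{\ccon[e_1]~?~\ccon[e_2]}$ contains $\perp$. Since both $\denap{e_1}$ and $\denap{e_2}$ are non-empty (they always contain $\perp$), the re-indexing of the union over $\denap{e_1}\cup\denap{e_2}$ causes no degenerate empty-union issues. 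Hence the proof is a short chain of equalities invoking Proposition~\ref{lemCompAPlCCon} twice and the distributivity of $?$ twice; full details would be deferred to \refap{PROOFbubAPlural}.
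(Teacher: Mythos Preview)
Your proof is correct and follows essentially the same approach as the paper: both first note that $\denap{e_1~?~e_2} = \denap{e_1} \cup \denap{e_2}$, then apply Proposition~\ref{lemCompAPlCCon} to $\ccon[e_1~?~e_2]$, split the indexed union, apply Proposition~\ref{lemCompAPlCCon} again to each piece, and finish with the distributivity of $?$. Your extra remarks about $\perp$ and non-emptiness are harmless but unnecessary, since the chain of equalities goes through regardless.
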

\begin{proof}
It is easy to prove that $\forall e_1, e_2 \in \Exp_\perp$ we have $\denap{e1~?~e_2} = \denap{e_1} \cup \denap{e_2}$ (see 
\proofs). But then:
$$
\begin{array}{ll}
\denap{\ccon[e1~?~e_2]} \\
= \bigcup_{t \in \denap{e1~?~e_2}} \denap{\ccon[t]} & \mbox{by Proposition \ref{lemCompAPlCCon}} \\
= \bigcup_{t \in \denap{e1} \cup \denap{e_2}} \denap{\ccon[t]} \\ 
= \bigcup_{t \in \denap{e1}} \denap{\ccon[t]} \cup \bigcup_{t \denap{e_2}} \denap{\ccon[t]} \\
= \denap{\ccon[e_1]} \cup \denap{\ccon[e_2]} & \mbox{by Proposition \ref{lemCompAPlCCon}} \\
= \denap{\ccon[e_1]~?~\ccon[e_2]} \\ 
\end{array}
$$
\end{proof}


We end our presentation of \apcrwl\ with an example showing how we can use \apcrwl\ to model problems in which some collecting work has to be done. 
\begin{example}\label{EPlural2}
We want to represent the database of a bank in which we hold some data about its employees. This bank has several branches and we want to organize the information according to them. To do that we define a non-deterministic function $\branches$ to represent the set of branches: a set is then identified with a non-deterministic expression. We also use this technique to define non-deterministic function $\employees$ which conceptually returns, for a given branch, the set of records containing the information regarding the employees that work in that branch. Now we want to search for the names of two clerks, which may be working in different branches. To do that we define the function $\twoclerks$ which is based upon the function $\find$, which forces the desired pattern $e(N,G,\clerk)$ over the set defined by the expression $\employees(\branches)$:
 $$
 \begin{array}{ll}
 \prog = & \{\branches \tor \mad,\\
 & \;\; \branches \tor \vigo,\\
 & \;\; \employees(\mad) \tor e(\pepe, \man, \clerk), \\
 & \;\; \employees(\mad) \tor e(\paco, \man, \clerk), \\
 & \;\; \employees(\vigo) \tor e(\maria, \woman, \clerk), \\
 & \;\; \employees(\vigo) \tor e(\jaime, \woman, \clerk), \\
 & \;\; \twoclerks \tor \find(\employees(\branches)), \\
 & \;\; \find(e(N,G,\clerk)) \tor (N,N) \}\\
 \end{array}
 $$
With term rewriting $\twoclerks \rw \find(\employees(\branches)) \not\rw^* (\pepe,\maria)$, because in that expression the evaluation of $\branches$ is needed and thus one of the branches must be chosen. 
On the other hand with \pcrwl\ the value $(\pepe,\maria)$ can be computed for $\twoclerks$ (some steps have been omitted for the sake of conciseness,
$\mi{emps}$ abbreviates $\mi{employees}$, and $\mi{brs}$ abbreviates $\branches$):
\begin{center}
\begin{footnotesize}
$
\infer[\mbox{\apor}]{\twoclerks \cltop (\pepe, \maria)}
          {
         \infer[\mbox{\apor}]{\find(\mi{emps}(\mi{brs})) \cltop (\pepe, \maria)}
                    {
                    \begin{array}{c}
                     \!\!\!\infer[\mbox{\apor}]{\mi{emps}(\mi{brs}) \cltop e(\pepe, \perp, \clerk)}{\vdots}\\[.5cm]
                \!\!\!\infer[\mbox{\apor}]{\mi{emps}(\mi{brs}) \cltop e(\maria, \perp, \clerk)}{\vdots}
                    \end{array}
            \ \infer[\textbf{\textsc{DC}}]{(\pepe~?~\maria, \pepe~?~\maria) \cltop (\pepe, \maria)}{\vdots}
                    }
          }
$
\end{footnotesize}
\end{center}
where
\begin{center}
\begin{footnotesize}
$
\infer[\mbox{\apor}]{\mi{emps}(\mi{brs}) \cltop e(\pepe, \perp, \clerk)}
      {
      \infer[\mbox{\apor}]{\mi{brs} \cltop \mad}{
		\infer[\textbf{\textsc{DC}}]{\mad \cltop \mad}{}
      } &
      \infer[\textbf{\textsc{DC}}]{e(\pepe, \man, \clerk) \cltop e(\pepe, \perp, \clerk)}{\vdots}
                               }
$
\end{footnotesize}
 \end{center}

\end{example}

\subsection{\bpcrwl}\label{sec:bplural}
So far we have presented our first proposal for a plural semantics for constructor systems, seen some interesting properties, and how to use it to solve collecting problems. Nevertheless this semantics has also some weak points, that will be illustrated by the following example.
\begin{example}\label{ExBPlural1}
Starting from the program of Example \ref{EPlural2}, we want to search for the names of two clerks paired with their corresponding genders. Therefore, following the same ideas, we define a function $\fng$ that forces the desired pattern but now returning both the name and the gender of two clerks, by the rule $\fng(e(N, G, \clerk)) \tor ((N, G), (N, G))$.
Then, $((\pepe, \man), (\maria, \woman))$ would be one of the values computed for the expression $\fng(\employees(\branches))$, as expected.
Nevertheless we can also compute the value $((\pepe, \woman), (\maria, \man))$, which obviously does not correspond to the intended meaning of $\fng$, as can be seen in the following proof (using the abbreviations above and also $m$ for $\man$,
and $w$ for $\woman$).

\begin{footnotesize}
$$
\infer[\mbox{\apor}]{\hspace{2cm}\fng(\mi{emps}(\mi{brs})) \cltop ((\pepe, w), (\maria, m))\hspace{2cm}}
      {
       \begin{array}{l}
       \infer{\mi{emps}(\mi{brs}) \cltop e(\pepe, m, \clerk)}{\vdots} \\
       \infer{\mi{emps}(\mi{brs}) \cltop e(\maria, w, \clerk)}{\vdots} 
       \end{array} 
   \ \begin{array}{l}
   ~ 	\\    ~ 	\\
  \infer{\begin{minipage}{0.522\textwidth}$\begin{array}{l}
((\pepe~?~\maria, m~?~w), (\pepe ~?~ \maria, m~?~w)) \\ 
~~~\cltop ((\pepe, w), (\maria, m))   
   \end{array}$\end{minipage}
     }{\vdots} 
         \end{array}
  }
$$
\end{footnotesize}

\end{example}

This example is interesting because it shows a relevant flaw of \apcrwl, since there the matching substitutions $[N/\pepe, G/\man]$ and $[N/\maria, G/\woman]$ obtained for the different evaluations of the argument $\employees(\branches)$ become wrongly intermingled. Anyway the program is not well conceived, as it does not specify that each of the $(N, G)$ pairs correspond to a particular clerk in the database, thus preventing an unintended information mixup. Nevertheless a better semantic behaviour would have prevented ``mixed'' results like $((\pepe, \woman), (\maria, \man))$ thus getting $((\maria, \woman), (\maria, \woman))$ and $((\pepe, \man), (\pepe, \man))$ as the only total values for $\fng(\employees(\branches))$, which does not 
fix the program but at least avoids wrong information mixup.\footnote{In Section \ref{sec:examples} we will see how to combine singular and plural \emph{function arguments} to solve a generalization of this problem.}

This problem was also pointed out in \cite{BB09}, where an identification between $d(0,0)~?~d(1,1)$ and $d(0~?~1, 0~?~1)$---for $d \in \CS^2$ and $0,1 \in \CS^0$---made by \apcrwl\ for relevant contexts was reported. In the technical setting presented in 
that paper another plural semantics that avoids this problem is proposed, although its technical relation with call-time or run-time choice is not formally stated nor proved. 
In that work, that particular plurality is achieved 
by allowing bubbling steps for constructor applications by means of a rule that could be expressed in our syntax as $\den{c(e_1, \ldots, e'_1~?~e'_2, \ldots, e_n)} = \den{c(e_1, \ldots, e'_1, \ldots, e_n)~?~c(e_1, \ldots, e'_2, \ldots, e_n)}$. This kind of rules are well suited for a step-wise semantics like the one presented in \cite{BB09}, but are more difficult to integrate with a goal-oriented proof calculus in the style of \crwl\ or \apcrwl, 
which---as we saw in the presentation of \apcrwl\ above---perform a kind of innermost evaluation of expressions by exploiting the use of partial values to get a compositional calculus for a lazy semantics.

Hence, in order to adapt this idea to our framework, we could \emph{switch from bubbling under constructors to bubbling of $\icsus$, allowing the combination of substitutions that only differ in the value they assign to a single variable}. This can be realized by defining a binary operator $\bunion$ to combine partial c-substitutions and a reduction notion $\bubsus$ defined by the rule $(\theta ~\uplus~ [X/e_1]) \bunion (\theta ~\uplus~ [X/e_2]) \bubsus \theta ~\uplus~ [X/e_1~?~e_2]$, that corresponds to a bubbling step for substitutions. Using this 
we could for example perform the following bubbling derivation for substitutions.
$$
\begin{array}{l}
\underline{[X/0, Y/0] \bunion [X/0, Y/1]} \bunion [X/1, Y/0] \bunion [X/1, Y/1] \\
\bubsus [X/0, Y/0~?~1] \bunion \underline{[X/1, Y/0] \bunion [X/1, Y/1] }\\
\bubsus \underline{[X/0, Y/0~?~1] \bunion [X/1, Y/0~?~1]} 
\bubsus [X/0~?~1, Y/0~?~1]
\end{array}
$$
This derivation shows a criterion that determines that the set of c-substitutions $\{[X/0, Y/0], [X /0, Y/1], [X/1, Y/0], [X/1, Y/1]\}$ can be safely combined into $[X/0~?$ $1, Y/0~?~1] \in \icsus$ with no wrong substitution mixup. On the other hand, for $[X/0, Y/0] \bunion [X/1, Y/1]$ we should not be able to perform any $\bubsus$ step as these substitutions differ in more than one variable, thus failing to combine those c-substitutions into a single element from $\icsus$. As can be seen in Figure \ref{fig:apcrwl}, the key for getting a plural behaviour in \apcrwl\ is finding a way to combine different matching substitutions obtained from the evaluation of the same expression, therefore this new combination method should give rise to another plural semantic proposal. We conjecture that the resulting semantics expresses the same plural semantics proposed in \cite{BB09}---the one resulting in that setting when only variables of sort $\mi{Ch}$ (as defined in that paper) are used---, although we will not give any formal result relating both proposals. Let us call \bpcrwl\ to this new semantics 
in which parameter passing is only perfomed with substitutions from $\icsus$ that come from a succesful combination of c-substitutions  using the relation $\bubsus$,  
and consider the behaviour of the different plural semantics in the following example. 
\begin{example}\label{EPlural3}
Consider the constructors $c \in \CS^1$, $d \in \CS^2$, $l \in \CS^4$ and $0, 1 \in \CS^0$, and the following program.
$$
\begin{array}{ll}
f(c(X)) \tor d(X, X) & h(d(X,Y)) \tor d(X, X)\\
g(d(X, Y)) \tor l(X, X, Y, Y) & k(d(X, Y)) \tor d(X, Y) \\
\end{array}
$$

\begin{itemize}
	\item $f(c(0)~?~c(1))$ and $f(c(0~?~1))$ behave the same in both \apcrwl\ and \bpcrwl. In this case
there is only one variable involved in the matching substitution and thus no substitution mixup like the ones seen before may appear. That is, for both expressions we only have to combine the substitutions $[X/0]$ and $[X/1]$, thus reaching the values $d(0, 0)$, $d(0, 1)$, $d(1, 0)$, and $d(1,1)$ in both semantics. 
    \item More surprisingly we also get the same behaviour for $h(d(0,0)~?~d(1,1))$ and $h(d(0~?~1, 0~?~1))$ in both \apcrwl\ and \bpcrwl. There the suspicious expression is $h(d(0,0)~?~d(1,1))$ which generates the matching substitutions $[X/0, Y/0]$ and $[X/1, Y/1]$ which are wrongly combined by \apcrwl\ into the substitution $?\{[X/0, Y/0], [X/1, Y/1]\} = [X/0~?~1, Y/0~?~1]$, used to instantiate the right-hand side of the rule for $h$. But this mistake has no consequence because only $X$ appears in the right-hand side of the rule for $h$, therefore it has the same effect as combining $[X/0, Y/\perp]$ and $[X/1, Y/\perp]$ into $[X/0~?~1, Y/\perp]$, which is just what is done in \bpcrwl\ as we will see later on.

On the other hand $h(d(0~?~1, 0~?~1))$ is not problematic as it generates the matching substitutions $[X/0, Y/0]$, $[X/0, Y/1]$, $[X/1, Y/0]$ and $[X/1, Y/1]$ that already cover all the possible instantiations of $X$ and $Y$ caused by its combination in \apcrwl, the substitution $[X/0~?~1, Y/0~?~1]$. The point is that in a sense both $\{[X/0, Y/0], [X/0, Y/1], [X/1, Y/0], [X/1, Y/1]\}$ and $[X/0~?~1, Y/0~?~1]$ have the same power. This will also be reflected by the fact that \bpcrwl\ would be able to combine the former set into the latter $\icsus$.

Again, we can reach the values $d(0, 0)$, $d(0, 1)$, $d(1, 0)$ and $d(1,1)$ for each expression in both semantics. 

    \item It is for the expressions $g(d(0,0)~?~d(1,1))$ and $g(d(0~?~1, 0~?~1))$ that we can see a different behaviour of \apcrwl\ and \bpcrwl. Once again $g(d(0~?~1, 0~?~1))$ is not problematic, and for it we can get the values $l(0, 0, 0, 0)$, $l(0, 0, 0, 1)$, \ldots and all the combinations of $0$ and $1$, in both semantics. But for $g(d(0,0)~?~d(1,1))$ we have that, for example, to compute $l(0, 0, 0, 1)$ we need the expression $d(0,0)~?~d(1,1)$ to generate both $0$ and $1$ for $Y$ in the matching substitutions. 
The only (total) matching substitutions that can be obtained from the evaluation of $d(0,0)~?~d(1,1)$ are
$[X/0, Y/0]$ and $[X/1, Y/1]$, which cannot be combined by \bpcrwl, hence we cannot get both $0$ and
$1$ for $Y$ in the combined substitution.
%
As a consequence
$l(0,0,0,0)$ and $l(1,1,1,1)$ are the only values computed for $g(d(0,0)~?~d(1,1))$ by \bpcrwl. On the other hand, \apcrwl\ computes all the combinations of $0$ and $1$---like it did for $g(d(0~?~1, 0~?~1))$---, as it is able to combine $\{[X/0, Y/0], [X/1, Y/1]\}$ into $[X/0~?~1, Y/0~?~1]$.

    \item A more exotic discovery is that $k(d(0,0)~?~d(1,1))$ does not behave the same for call-time choice, run-time choice, \apcrwl, and \bpcrwl, 
even though it only uses a right-linear program rule, and 
it is a known fact that 
call-time choice and run-time choice are equivalent 
for right-linear programs \cite{hussmann93}. 
\crwl\ (call-time choice), term rewriting (run-time choice), and \bpcrwl\ only compute the values $d(0,0)$ and $d(1,1)$ for $k(d(0,0)~?~d(1,1))$, in the case of \bpcrwl\ because it fails to combine $[X/0, Y/0]$ and $[X/1, Y/1]$. Nevertheless \apcrwl\ is able to combine those substitutions into $[X/0~?~1,$ $Y/0~?~1]$, thus getting the additional values $d(0,1)$ and $d(1,0)$ for the expression $k(d(0,0)~?~d(1,1))$. 
However we still strongly conjecture that \bpcrwl---as formulated below---is equivalent to call-time and run-time choice for right-linear programs.
\end{itemize}

\end{example}

The previous example motivates the interest of a formal definition of \bpcrwl. It would be nice if it were by means of a proof calculus similar to \crwl\ and \apcrwl, because then their comparison would be easier, and maybe they could even share some of their properties, in particular compositionality. The ideas above regarding bubbling derivations for substitutions have given us the right intuitions, but those derivations are not so easy to handle as the following characterization of \emph{compressible sets of c-substitutions}
illustrates, which will be the only sets of substitutions that will be combined by \bpcrwl.

\begin{definition}[Compressible set of $\CSubst_\perp$]\label{def:compress}~\\
A finite set $\Theta \subseteq \CSubst_\perp$ is \emph{compressible} iff 
for $\{X_1, \ldots, X_n\} = \bigcup_{\theta \in \Theta} \mi{dom}(\theta)$
$$
\{(X_1\theta, \ldots, X_n\theta)~|~\theta \in \Theta\} = \{X_1\theta_1~|~\theta_1 \in \Theta\} \times \ldots \times \{X_n\theta_n~|~\theta_n \in \Theta\}
$$

Note that this property is easily computable for $\Theta$ finite, as we only consider finite domain substitutions.
\end{definition}
\begin{example}\label{ex:compressSetCSubst}
Let us see how 
the notion of compressible set of c-substitutions can be used to replace the relation $\bubsus$ sketched above. We have seen that the substituions $[X/0, Y/0]$ and $[X/1, Y/1]$ should not be combined in order to prevent a wrong substitution mixup. This is reflected in the fact that the set $\{[X/0, Y/0], [X/1, Y/1]\}$ is not compressible, because:
$$
\begin{array}{l}
\{(X\theta, Y\theta)~|~\theta \in \{[X/0, Y/0], [X/1, Y/1]\}\} 
= \{(0,0), (1,1)\} \\
\neq \{(0, 0), (0, 1), (1, 0), (1, 1)\} 
= \{0, 1\} \times \{0, 1\} \\
= \{X\theta_x~|~\theta_x \in \{[X/0, Y/0], [X/1, Y/1]\}\} \times \{Y\theta_y~|~\theta_y \in \{[X/0, Y/0], [X/1, Y/1]\}\}
\end{array}
$$
On the other hand for $\Theta = \{[X/0, Y/0], [X/0, Y/1], [X/1, Y/0], [X/1, Y/1]\}$ the substitutions it contains can be safely combined, therefore we should have that $\Theta$ is compressible, as it happens:
$$
\begin{array}{l}
\{(X\theta, Y\theta)~|~\theta \in \Theta\} 
= \{(0, 0), (0, 1), (1, 0), (1, 1)\} \\
= \{0, 1\} \times \{0, 1\} 
= \{X\theta_x~|~\theta_x \in \Theta\} \times \{Y\theta_y~|~\theta_y \in \Theta\}
\end{array}
$$
\end{example}

Our last proposal for a plural semantics for \ctrss\ is based on the notion of compressible set of c-substitutions, and it is defined by the \bpcrwl-proof calculus in Figure~\ref{fig:bpcrwlAltD}. 
Note that the only difference with \apcrwl\ is that the rule \apor\ is replaced by \bpor, that now demands the different matching substitutions obtained from the evaluation of each function argument to be compressible. Apart from that, compressible sets of partial c-substitutions are combined just like in \apcrwl, by means of the $?$ operator.

This calculus, like \crwl\ and \apcrwl, also derives \emph{reduction statements} of the form $\prog \vdbp e \cltop t$, 
which expresses that $t$ is (or approximates to) a possible value for $e$ in this semantics, under the program $\prog$. 
Then the \emph{\bpcrwl-denotation} of an expression $e \in \Exp_\perp$ under a program $\prog$ in \bpcrwl\ is defined as $\denbpp{e}{\prog} = \{t \in \CTerm_\perp~|~\prog \vdbp e \cltop t\}$. In the following, we will usually omit the reference to ${\cal  P}$ when implied by the context.

\begin{figure}[t]
\begin{center}
 \framebox{
\begin{minipage}{.95\textwidth}
\begin{center}
\begin{small}
\begin{tabular}{l@{~~~~~}l}
\regla{RR}{X \clto X}{}{$X \in \var$} \hspace{2ex}
\regla{DC}{c(e_1, \ldots, e_n) \clto c(t_1, \ldots, t_n)}{e_1 \clto t_1 \ldots e_n \clto t_n}{$c \in {\CS}^n$} \\[.25cm]
\regla{B}{e \clto \perp}{}{} \hspace{2ex}
\regla{\bpor}{f(e_1, \ldots, e_n) \clto t}{\begin{array}{c}
e_1 \cltop p_1\theta_{11} \\
 \ldots \\
e_1 \cltop p_1\theta_{1 m_1} \\
\end{array}
\ldots~ \begin{array}{c}
		e_n \cltop p_n\theta_{n1} \\
	      \ \ldots \\
              \  e_n \cltop p_n\theta_{n m_n} \\
	 \end{array}~
\begin{array}{c}
          ~ \\
          ~ \\
          r\theta \cltop t
          \end{array}
}{} \\
\hspace{23ex} $\begin{array}{l}
\textrm{if } (f(\overline{p}) \tor r)\in {\cal P}\mbox{, }
\forall i \in \{1, \ldots, n\}~\Theta_i = \{\theta_{i 1}, \ldots, \theta_{i m_i}\}\\
\mbox{is compressible,  }
\theta = (\biguplus\limits_{i=1}^n ?\Theta_i) \uplus \thetae, 
\forall i \in \{1, \ldots, n\}, \\
 j \in \{1, \ldots, m_i\} 
\mi{dom}(\theta_{i j}) \subseteq \mi{var}(p_i), \forall i \in \{1, \ldots, n\}~
m_{i} > 0\\ 
\mi{dom}(\thetae) \subseteq \vextra{f(\overline{p}} \tor r), \thetae \in \icsus
  \end{array}$
\end{tabular}
\end{small}
\end{center}
\end{minipage}
}
\end{center}
    \caption{Rules of \bpcrwl}
    \label{fig:bpcrwlAltD}
\end{figure}

\begin{example}\label{ExBPlural2}
Consider the program of Example \ref{ExBPlural1}, \bpcrwl\ is able to avoid computing the value $((\pepe,$ $\woman), (\maria, \man))$ for the expression $\fng(\employees(\branches))$ because the set of matching substitutions $\{[N/\pepe, G/\man], [N/\maria, G/\woman]\}$ is not compressible, as 
can be easily checked by applying Definition \ref{def:compress} in a way similar to Example \ref{ex:compressSetCSubst}.  
%
Nevertheless, 
the values $((\maria, \woman), (\maria, \woman))$ and $((\pepe, \man), (\pepe, \man))$ can be computed for $\fng(\employees(\branches))$ by using the sets of substitutions $\{[N/\pepe, G/\man]\}$ and $\{[N/\maria, G/\woman]\}$, respectively, for parameter passing, which are compressible as they are singletons. 
As we saw in Example \ref{ExBPlural1}, the function $\fng$ is wrongly conceived because it does not specify that in each pair $(N, G)$ the name $N$ and the genre $G$ must  correspond to the same clerk. \bpcrwl\ cannot fix a wrong program, but at least is able to prevent ``mixed'' results like $((\pepe, \woman), (\maria, \man))$. 

It is also easy to check that \bpcrwl\ has the same behaviour that \apcrwl\ for Example \ref{EPlural2}, as sets like $\{[N/\pepe, G/\!\!\!\perp], [N/\maria, G/\!\!\!\perp]\}$ are compressible. Similarly, in Example \ref{EPlural3} 
the functions $f$ and $h$ behave the same under both semantics, and \bpcrwl\ also behaves for $h$ and $k$ as specified there, because $\{[X/0, Y/0], [X/1, Y/1]\}$ is not compressible, just like $\{[N/\pepe, G/\man], [N/\maria,$ $G/\woman]\}$, while for $\Theta = \{[X/0, Y/0], [X/0, Y/1], [X/1, Y/0], [X/1, Y/1]\}$ we have that $\Theta$ is compressible, as 
seen in Example \ref{ex:compressSetCSubst}. 
\end{example}

The following result shows that part of the equality that defines compressibility always holds trivially, thus simplifying the definition of compressible set of c-substitutions. 

\begin{lemma}\label{lemmaCompress1}
For any finite set $\Theta \subseteq \CSubst_\perp$ for $\{X_1, \ldots, X_n\} = \bigcup_{\theta \in \Theta} \dom(\theta)$ we have 
$$
\{(X_1\theta, \ldots, X_n\theta)~|~\theta \in \Theta\} \subseteq \{X_1\theta_1~|~\theta_1 \in \Theta\} \times \ldots \times \{X_n\theta_n~|~\theta_n \in \Theta\}
$$
As a consequence $\Theta$ is compressible iff
$$
\{(X_1\theta, \ldots, X_n\theta)~|~\theta \in \Theta\} \supseteq \{X_1\theta_1~|~\theta_1 \in \Theta\} \times \ldots \times \{X_n\theta_n~|~\theta_n \in \Theta\}
$$
This gives another criterion to prove compressibility: $\Theta$ is compressible iff $\forall \theta_1, \ldots, \theta_n$ $\in \Theta.~ \exists \theta \in \Theta$ such that $\forall i. X_i\theta_i \equiv X_i\theta$ (which implies that $(X_1\theta_1, \ldots, X_n\theta_n) \equiv (X_1\theta, \ldots, X_n\theta)$).
\end{lemma}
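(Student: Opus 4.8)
The plan is to unfold the definitions; all three assertions are essentially bookkeeping about Cartesian products and carry no real difficulty. First I would establish the always-true inclusion. Take any element of the left-hand side, that is, a tuple $(X_1\theta, \ldots, X_n\theta)$ arising from a single $\theta \in \Theta$. Its $i$-th component is $X_i\theta$, and since $\theta \in \Theta$ we have $X_i\theta \in \{X_i\theta_i \mid \theta_i \in \Theta\}$; hence the tuple lies in the product $\{X_1\theta_1 \mid \theta_1 \in \Theta\} \times \ldots \times \{X_n\theta_n \mid \theta_n \in \Theta\}$, witnessed by choosing the same $\theta$ in every coordinate. One only needs to note here that $X_i\theta$ is a well-defined c-term even when $X_i \notin \dom(\theta)$, in which case it is the variable $X_i$ itself, which is still a c-term, so the expressions involved always make sense.

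The ``as a consequence'' clause is then immediate: Definition~\ref{def:compress} requires the equality of the two sets, and since the first inclusion has just been shown to hold for every finite $\Theta \subseteq \CSubst_\perp$, that equality is equivalent to the remaining inclusion $\supseteq$. For the final criterion I would rewrite this $\supseteq$ pointwise. An arbitrary element of the product $\{X_1\theta_1 \mid \theta_1 \in \Theta\} \times \ldots \times \{X_n\theta_n \mid \theta_n \in \Theta\}$ is exactly a tuple of the form $(X_1\theta_1, \ldots, X_n\theta_n)$ for some choice of $\theta_1, \ldots, \theta_n \in \Theta$, one $\theta_i$ per coordinate, chosen independently. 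Saying such an element belongs to $\{(X_1\theta, \ldots, X_n\theta) \mid \theta \in \Theta\}$ means there is a \emph{single} $\theta \in \Theta$ with $X_i\theta \equiv X_i\theta_i$ for all $i$. Quantifying over all such tuples yields precisely the stated condition $\forall \theta_1, \ldots, \theta_n \in \Theta.\ \exists \theta \in \Theta.\ \forall i.\ X_i\theta_i \equiv X_i\theta$, and the parenthetical remark that this implies $(X_1\theta_1, \ldots, X_n\theta_n) \equiv (X_1\theta, \ldots, X_n\theta)$ is just the componentwise definition of tuple equality read backwards.

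There is no genuine obstacle in this proof; the only point deserving a line of care is the observation that in the product the coordinates range independently, so a generic element really does have the shape $(X_1\theta_1, \ldots, X_n\theta_n)$ with possibly distinct witnesses $\theta_i$, and reindexing these existential witnesses down to a single $\theta$ is exactly what converts the set equality into the combinatorial criterion. The hypothesis that $\Theta$ is finite and all substitutions have finite domain, already in force, guarantees that every set mentioned is finite (hence the criterion is effectively checkable, as remarked after Definition~\ref{def:compress}), but it plays no role in the logical argument itself.
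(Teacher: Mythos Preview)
Your argument is correct. The paper does not include a proof of this lemma in the main text; it is stated without proof, presumably because the authors regard it as an immediate unfolding of Definition~\ref{def:compress} (detailed proofs are deferred to the companion technical report \proofs). Your proposal is exactly that unfolding and matches what one would expect such a proof to look like: pick the same $\theta$ in every coordinate to get the always-true inclusion, reduce the defining equality to the remaining inclusion, and read that inclusion elementwise to obtain the combinatorial criterion. There is nothing to add or correct.
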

%
In a way this result exemplifies why \bpcrwl\ is smaller than \apcrwl\ in the sense that in general it computes less values for a given expression under a given program, 
as $\{X_1\theta_1~|~\theta_1 \in \Theta\} \times \ldots \times \{X_n\theta_n~|~\theta_n \in \Theta\}$ corresponds to the substitution $?\Theta$ that is always used for parameter passing in \apcrwl, with no previous compressibility test. We will see more about the relations between call-time choice, run-time choice, \apcrwl, and \bpcrwl\ in Section \ref{sec:hierarchy}.

~\\
We have just seen how \bpcrwl\ corrects the excessive permissiveness of the combinations of substitutions performed by \apcrwl\ but, will it be able to do it while keeping the nice properties of \apcrwl---in particular compositionality---at the same time? Fortunately the answer is yes, as shown by the following result. 
\begin{theorem}[Basic properties of \bpcrwl]\label{ThPropsBetaPlHeredadas}
The basic properties of \apcrwl\ also hold for \bpcrwl\ under any program, i.e, the corresponding versions of Theorem \ref{lCompPS}, Proposition \ref{LMonPlural}, Proposition \ref{LMonSusPlural}, Proposition \ref{lemCompAPlCCon}, and Proposition \ref{bubAPlural} also hold for \bpcrwl. 

For Proposition \ref{LMonSusPlural} in particular we replace $\adsord$ with $\bdsord$, which is defined in terms of \bpcrwl\ instead of \apcrwl, i.e., $\sigma \bdsord \sigma'$ iff $\forall X \in \var, \denbp{\sigma(X)} \subseteq \denbp{\sigma'(X)}$. 
Nevertheless, in the following we will often omit the superscripts $\alpha \mi{pl}$ and $\beta \mi{pl}$ in $\adsord$ and $\bdsord$ when those are implied by the context. 
\end{theorem}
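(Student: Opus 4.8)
The plan is to revisit the \apcrwl\ versions of these five results one at a time and check that each proof survives the compressibility side condition that distinguishes \bpor\ from \apor. All those proofs work by induction, transforming one \apcrwl-proof into another: a proof about $\con[e]$ into a proof about $\con[t_1~?~\ldots~?~t_n]$ and back, a proof using $\sigma$ into one using $\sigma'$, a proof of $e~\cltop~t$ into one of $e'~\cltop~t'$, and so on. When we redo such a construction inside \bpcrwl, the single extra obligation is that every \bpor\ step we build uses a \emph{compressible} set of matching substitutions. The whole strategy rests on one invariant: each of these transformations preserves the sets of matching substitutions \emph{verbatim} --- at every \bpor\ step of the transformed proof the set $\Theta_i$ for the $i$-th argument equals one of the sets $\Theta_i$ occurring at the matching \bpor\ step of the original proof, the only change being that its members may be re-derived by the induction hypothesis over a modified argument. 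Hence compressibility is inherited for free, and the \apcrwl\ arguments carry over unchanged.

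Concretely, I would proceed in the order: polarity, then the four monotonicity statements, then compositionality and bubbling for c-contexts, and finally general compositionality. For polarity (the \bpcrwl\ analogue of Proposition \ref{LMonPlural}) the induction is on the structure of $e~\cltop~t$; in the \bpor\ case one replaces each premise $e_i~\cltop~p_i\theta_{ij}$ by $e'_i~\cltop~p_i\theta_{ij}$ via the induction hypothesis (using $p_i\theta_{ij} \ordap p_i\theta_{ij}$), keeps every $\Theta_i$, and replaces $r\theta~\cltop~t$ by $r\theta~\cltop~t'$. For the monotonicity statements of Proposition \ref{LMonSusPlural} (now with $\bdsord$ in place of $\adsord$), parts 1 and 2 are again inductions that in the \bpor\ case push the change in the substitution into the arguments and leave the $\Theta_i$ and the subproof of $r\theta~\cltop~t$ intact; parts 3 and 4 are the denotational consequences of part 1, exactly as for \apcrwl, using along the way the equality $\denbp{e_1~?~e_2} = \denbp{e_1}\cup\denbp{e_2}$ --- whose proof needs no compressibility check, since the \bpor\ steps for $?$ involve only single-variable-domain matching sets, which are trivially compressible --- and that $\theta \pordap \theta'$ implies $\denbp{\theta(X)} \subseteq \denbp{\theta'(X)}$.

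Compositionality and bubbling for c-contexts (Propositions \ref{lemCompAPlCCon} and \ref{bubAPlural}) are the easy cases: because a c-context $\ccon$ is built from constructors only, any \bpcrwl-proof of $\ccon[e]~\cltop~s$ consists of \crule{DC} and \crule{B} steps stripping the constructors of $\ccon$ down to a subproof of $e~\cltop~s'$, so \emph{no} \bpor\ step ever occurs in the $\ccon$-part and the \apcrwl\ argument applies literally; bubbling for c-contexts then follows by the same algebra as in Proposition \ref{bubAPlural}, using $\denbp{e_1~?~e_2} = \denbp{e_1}\cup\denbp{e_2}$. For general compositionality (the \bpcrwl\ analogue of Theorem \ref{lCompPS}) both directions go by induction on proof size. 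In the ``replace the disjunction by $e$'' direction one turns a proof of $\con[t_1~?~\ldots~?~t_n]~\cltop~t$ into a proof of $\con[e]~\cltop~t$ by replacing each subproof $t_1~?~\ldots~?~t_n~\cltop~t_i$ with a proof of $e~\cltop~t_i$; at any \bpor\ step whose $i$-th argument has the form $\con_i[t_1~?~\ldots~?~t_n]$, each premise is re-derived as $\con_i[e]~\cltop~p_i\theta_{ij}$ by the induction hypothesis, so $\Theta_i$ is unchanged and stays compressible. In the ``collect into $S$'' direction one gathers into a finite $S \subseteq \denbp{e}$ all values that occurrences of $e$ are reduced to in the proof of $\con[e]~\cltop~t$ and builds a proof of $\con[t_1~?~\ldots~?~t_n]~\cltop~t$ for an enumeration of $S$; at each \bpor\ step whose $i$-th argument is $\con_i[e]$, the induction hypothesis applied to $\con_i[e]~\cltop~p_i\theta_{ij}$ yields a finite $S_j \subseteq \denbp{e}$ with $\con_i[\,\ldots S_j\ldots\,]~\cltop~p_i\theta_{ij}$, and since $S_j \subseteq S$ the monotonicity of substitutions already established for \bpcrwl\ gives $\con_i[t_1~?~\ldots~?~t_n]~\cltop~p_i\theta_{ij}$ with the \emph{same} $\theta_{ij}$, so $\Theta_i$ is again unchanged and compressible. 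The final indistinguishability equivalence is a formal consequence of the equation, just as for \apcrwl.

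The delicate point is precisely this preservation-of-matching-sets invariant in the compositionality proof: one must check that the transformations never need to \emph{add} or \emph{remove} elements of a matching set $\Theta_i$ --- only to re-derive its existing members over a modified argument --- because, as Example \ref{ex:compressSetCSubst} shows, compressibility is preserved neither under supersets nor under subsets. Once that invariant is secured, the rest is a faithful replay of the \apcrwl\ development; the full details are deferred to \proofs.
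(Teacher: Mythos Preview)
Your proposal is correct and follows essentially the same approach as the paper: both argue that the \apcrwl\ proofs transform one derivation into another while reusing the very same matching-substitution sets $\Theta_i$ at each outer-reduction step, so the compressibility side condition of \bpor\ is inherited from the hypothesis rather than having to be re-established. Your write-up is considerably more explicit than the paper's one-paragraph sketch---in particular, your observation that compressibility is closed neither under subsets nor under supersets, and hence that the invariant must be literal preservation of each $\Theta_i$, is a useful sharpening of what the paper leaves implicit.
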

\begin{proof}
In each proof for the \apcrwl\ versions of these results we start from a given \apcrwl-proof and build another one using a bigger expression w.r.t.\ $\ordap$, a more powerful substitution, interchanging an expression with an alternative of some of its values\ldots\ Therefore we can use the same technique for \bpcrwl\ to replicate any \bpor\ step in the starting \bpcrwl-proof by using the substitution used there for parameter passing, which must be compressible by hypothesis, and that we are able to obtain by using a similar reasoning to that performed in the proof for the corresponding result for \apcrwl.
\end{proof}

~\\
In this section we have presented two different proposals for a plural semantics for non-deterministic constructor systems that are different from run-time choice. The first one, \apcrwl, is a pretty simple extension of \crwl\ that comes up naturally from allowing the combination of several matching substitution through the operator $?$ for c-substitutions. But it is precisely the simplicity of that combination which leads to a wrong information mixup in some situations. These problems are solved in \bpcrwl, in which a compressibility test is added to prevent a wrong combination of substitutions. This could suggest that \apcrwl\ is only a preliminary attempt that should now be put aside and forgotten. Nevertheless \apcrwl\ will still be very useful for us, again because of its simplicity, as we will see in subsequent sections. 

Finally note that both \apcrwl\ and \bpcrwl\ have been devised starting from \crwl\ and then adding some criterion for combining different matching substitutions for the same argument, so any number of alternative plural---and even also compositional, possibly---semantics for constructor systems could be conceived just by defining new combination procedures.

\section{Hierarchy, equivalence, and simulation}\label{sec:hierarchy}

In this section we will first compare the different characteristics of the semantics considered so far, with a special emphasis in the set of computed c-terms. 
Then we will present a class of programs characterized by a simple syntactic criterion under which our two plural semantics are equivalent. Finally we will conclude the section presenting 
a program transformation that can be used to simulate our plural semantics by using term rewriting. 

\subsection{A hierarchy of semantics}\label{sect:hierarchy}
We have already seen that \crwl, \apcrwl, and \bpcrwl\ enjoy similar properties like polarity, monotonicity for substitutions and, above all, compositionality, which implies that two expressions have the same denotation if and only if they have the same denotation when put under the same arbitrary context. 
This is not the case for run-time choice, as we saw when switching from $f(c(0~?~1))$ to $f(c(0)~?~c(1))$ in Examples \ref{EIntro1} and \ref{EIntro2}, taking into account that for the expressions $c(0~?~1)$ and $c(0)~?~c(1)$ the same values are computed under run-time choice, i.e., the same c-terms are reached by a term rewriting derivation.\footnote{In fact compositionality can be achieved for run-time choice by using a different set of values instead of the partial c-terms considered in this work. Those values essentially are recursively nested applications of constructor symbols to sets of values structured in the same way, therefore intrinsically more complicated than plain c-terms, and anyway not considered in the present work---See \cite{Lopez-FraguasRS09-RTA09} for details.}

But our main goal in this section is to study the relationship between call-time choice, run-time choice, \apcrwl, and \bpcrwl\ w.r.t.\ the denotations they define, which express the set of values computed by each semantics. To do that we will lean on a traditional notion from the \crwl\ framework, the notion of {\it shell $|e|$ of an expression $e$}, which represents the outer constructor (thus partially computed) part of $e$, defined as $|\!\!\perp\!\!| = \perp$, $|X| = X$, $c(e_1, \ldots, e_n) = c(|e_1|, \ldots, |e_n|)$, $|f(e_1, \ldots, e_n)| = \perp$, for $X \in \var, c \in \CS, f \in \FS$. Now we can define our notion of denotation of an expression in each of the semantics considered.

\begin{definition}[Denotations]\label{def:denotations}
For any program $\prog$, $e \in \Exp$ we define the denotation of $e$ under the different semantics as follows
\begin{itemize}
	\item $\densp{e}{\prog} = \{t \in \CTerm_\perp \mid \prog \vds e \clto t\}$.
	\item $\denrp{e}{\prog} = \{t \in \CTerm_\perp \mid \prog \vdash e \rw^* e' \wedge t \ordap |e'|\}$.
	\item $\denapp{e}{\prog} = \{t \in \CTerm_\perp \mid \prog \vdap e \clto t\}$.
	\item $\denbpp{e}{\prog} = \{t \in \CTerm_\perp \mid \prog \vdbp e \clto t\}$.
\end{itemize}
In the following, we will usually omit the reference to ${\cal  P}$ when implied by the context.
\end{definition}

As \apcrwl\ and \bpcrwl\ are modifications of \crwl, the relation between these three semantics is straightforward.
\begin{theorem}\label{tCrwlVsPCRWL}
For any \crwl-program $\prog$, $e \in \Exp_\perp$
$$
\dens{e} \subseteq \denbp{e} \subseteq \denap{e}
$$
None of the converse inclusions holds in general.
\end{theorem}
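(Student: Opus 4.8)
The plan is to prove the two inclusions $\dens{e} \subseteq \denbp{e} \subseteq \denap{e}$ by exhibiting, for each derivation in the smaller calculus, a derivation of the same reduction statement in the larger one, proceeding by induction on the size of the proof. For the inclusion $\dens{e} \subseteq \denbp{e}$, I would take a \crwl-proof of $e \clto t$ and transform it into a \bpcrwl-proof of the same statement. The rules $\crule{RR}$, $\crule{DC}$, and $\crule{B}$ are shared verbatim, so the only interesting case is $\crule{OR}$. Given an application of $\crule{OR}$ using a rule $f(\overline{p}) \tor r$ and a substitution $\theta \in \CSubst_\perp$, with premises $e_i \clto p_i\theta$ and $r\theta \clto t$, I would simulate it with a $\crule{\bpor}$ step where each $\Theta_i = \{\theta|_{\mi{var}(p_i)}\}$ is a singleton, hence trivially compressible by Lemma~\ref{lemmaCompress1} (or directly from Definition~\ref{def:compress}, since a one-element Cartesian product collapses), and the combined substitution $?\Theta_i$ reproduces $\theta|_{\mi{var}(p_i)}$ up to the harmless insertion of the variable itself as a disjunct when it lies outside a domain; extra variables are handled by placing their bindings in $\thetae$. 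Applying the induction hypothesis to each premise gives the required \bpcrwl-subproofs, and reassembling yields the \bpcrwl-proof. The same construction, with singleton $\Theta_i$'s and ignoring the compressibility test, also gives $\denbp{e} \subseteq \denap{e}$; in fact the second inclusion is even more immediate because $\crule{\bpor}$ is literally $\crule{\apor}$ with an extra side condition, so any \bpcrwl-proof is already an \apcrwl-proof.

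The main subtlety I expect is the bookkeeping around the $?$ operator and restricted domains: in $\crule{OR}$ the substitution $\theta$ ranges over all of $\var$, whereas in $\crule{\apor}$/$\crule{\bpor}$ each $\theta_{ij}$ must satisfy $\mi{dom}(\theta_{ij}) \subseteq \mi{var}(p_i)$ and the pieces are combined by disjoint union. One must check that restricting $\theta$ to $\mi{var}(p_i)$ on each argument and to $\vextra{f(\overline{p}) \tor r}$ for the extra variables, then recombining via $\biguplus_i ?\Theta_i \uplus \thetae$, produces a substitution that agrees with the original $\theta$ on all variables relevant to $r$ (using left-linearity of $f(\overline{p})$ so the $\mi{var}(p_i)$ are pairwise disjoint), so that $r\theta$ is unchanged and the premise $r\theta \clto t$ transfers directly. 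When $?$ inserts $X$ as an extra disjunct $X~?~\ldots$ for a variable $X \notin \mi{dom}(\theta_{i1})$, one uses that such $X$ does not occur in $r$ (it is neither a pattern variable touched by matching nor an extra variable), or if it does, that adding a disjunct only enlarges the denotation and does not affect derivability of $r\theta \clto t$ for the specific $t$; Lemma~\ref{lemIntCjtoCsus} pins down the domain of $?\Theta_i$ and helps here.

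For the final claim that none of the converse inclusions holds in general, I would supply two small counterexamples. For $\denbp{e} \not\subseteq \dens{e}$, the program $\{f(c(X)) \tor d(X,X), X~?~Y \tor X, X~?~Y \tor Y\}$ from Example~\ref{EIntro1} with $e \equiv f(c(0)~?~c(1))$ works: $d(0,1) \in \denbp{e}$ (the set $\{[X/0],[X/1]\}$ is compressible, having a single pattern variable) but $d(0,1) \notin \dens{e}$, since call-time choice fixes one partial value for $X$. For $\denap{e} \not\subseteq \denbp{e}$, I would use the program of Example~\ref{EPlural3} with $e \equiv g(d(0,0)~?~d(1,1))$: the value $l(0,0,0,1)$ is in $\denap{e}$ because \apcrwl\ combines $\{[X/0,Y/0],[X/1,Y/1]\}$ into $[X/0~?~1, Y/0~?~1]$, but it is not in $\denbp{e}$ because that set is not compressible (as shown in Example~\ref{ex:compressSetCSubst}), so the only total values \bpcrwl\ computes are $l(0,0,0,0)$ and $l(1,1,1,1)$. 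Both non-membership facts can be argued by a straightforward (if slightly tedious) inspection of the possible proof trees in the respective calculi, using polarity (Proposition~\ref{LMonPlural}) to bound which substitutions can appear. The hard part is really just making the domain/operator-$?$ manipulations in the $\crule{OR}$-to-$\crule{\bpor}$ step fully precise; the rest is routine once that lemma-level plumbing is in place.
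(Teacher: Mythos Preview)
Your proposal is correct and follows essentially the same approach as the paper: replace each \crule{OR} step by a \crule{\bpor} step with singleton $\Theta_i$'s (trivially compressible, with $?\{\theta\} = \theta$), and then observe that \crule{\bpor} is \crule{\apor} with an extra side condition so every \bpcrwl-proof is already an \apcrwl-proof. Your counterexamples differ from the paper's (which uses the smaller program $\{\mi{pair}(X) \tor d(X,X),\ g(d(X,Y)) \tor d(X,Y)\}$) but are equally valid, and the domain/$?$-operator bookkeeping you worry about dissolves once you note that for a singleton set $?\{\theta\}$ is literally $\theta$.
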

\begin{proof}
Given a \crwl-proof for $\vds e \clto t$ we can build a \apcrwl-proof for $\vdap e \cltop t$ just replacing every \crule{OR} step by the corresponding \crule{\bpor} step, as it is easy to see that any singleton set of c-substitutions is compressible, and that $?\{\theta\} = \theta$. As a consequence $\dens{e} \subseteq \denbp{e}$. On the other hand we can turn any \bpcrwl-proof into a \apcrwl-proof just replacing any \crule{\bpor} step by the corresponding \crule{\apor}, as \crule{\bpor} has stronger premises than \crule{\apor}, and the same consequence. Therefore $\denbp{e} \subseteq \denap{e}$.

Regarding the failure of the converse inclusions in the general case, consider the program $\{pair(X) \tor d(X, X), g(d(X, Y)) \tor d(X, Y)\}$ for which it is easy to check that $\dens{pair(0?1)} \not\ni d(0,1) \in \denbp{pair(0?1)}$ and $\denbp{g(d(0, 0) ? d(1,1))} \not\ni d(0,1) \in \denap{g(d(0, 0) ? d(1,1))}$.
\end{proof}

Concerning the relation between call-time choice and run-time choice, it was already explored in previous 
works of the authors \cite{lrs07,LMRS10LetRwTPLP}, and we recast it here in the following theorem.
\begin{theorem}\label{TCompRwvsCrwl}
For any \crwl-program $\prog$, $e \in \Exp$, $\dens{e} \subseteq \denr{e}$. The converse inclusion does not hold in general (as shown by Example \ref{EIntro1}).
\end{theorem}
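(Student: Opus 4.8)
The plan is to establish the inclusion $\dens{e} \subseteq \denr{e}$ by induction on the structure of a \crwl-proof of $\prog \vds e \clto t$, exhibiting for each such proof a term rewriting derivation $\prog \vdash e \rw^* e'$ with $t \ordap |e'|$. The base cases are immediate: for \crule{RR} we have $e \equiv X \equiv t$ and take the empty derivation since $|X| = X$; for \crule{B} we have $t \equiv \perp$ and again the empty derivation works, since $\perp \ordap |e|$ for every $e$.

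For the inductive step with \crule{DC}, from $\prog \vds e_i \clto t_i$ we obtain by induction hypothesis derivations $e_i \rw^* e_i'$ with $t_i \ordap |e_i'|$; combining these rewrites inside the constructor context $c(\ldots)$ yields $c(e_1,\ldots,e_n) \rw^* c(e_1',\ldots,e_n')$, and $c(t_1,\ldots,t_n) \ordap c(|e_1'|,\ldots,|e_n'|) = |c(e_1',\ldots,e_n')|$ by monotonicity of $\ordap$ under contexts. The interesting case is \crule{OR}: here $e \equiv f(e_1,\ldots,e_n)$, there is a rule $f(p_1,\ldots,p_n) \tor r \in \prog$ and a $\theta \in \CSubst_\perp$ with $\prog \vds e_i \clto p_i\theta$ and $\prog \vds r\theta \clto t$. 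The subtlety is that $\theta$ is a \emph{partial} c-substitution, so $p_i\theta$ and $r\theta$ may contain $\perp$, whereas term rewriting only manipulates total expressions. The standard fix is to work with a total version: since $\perp$ does not occur in programs, one can choose a total substitution $\theta^{\star}$ refining the non-$\perp$ part of $\theta$ — but this breaks the match with $p_i\theta$.

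The cleaner route, which I expect to be the main technical point, is to prove the inclusion through the well-known equivalence between \crwl-derivability and reduction in the \emph{let-rewriting} or term-graph calculus, and then appeal to the already-published results of the authors relating let-rewriting to plain term rewriting — indeed the theorem statement itself defers to \cite{lrs07,LMRS10LetRwTPLP}. Concretely: by the adequacy theorems of those works, $\prog \vds e \clto t$ implies that $e$ reduces in the sharing-based calculus to some $e'$ with $t \ordap |e'|$, and since every let-rewriting (or term-graph) step can be simulated by a sequence of ordinary term rewriting steps (by duplicating shared subterms), we get $\prog \vdash e \rw^* e''$ with $|e'| \ordap |e''|$, hence $t \ordap |e''|$ and $t \in \denr{e}$. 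Thus the proof reduces to invoking these prior results rather than redoing the induction from scratch.

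For the failure of the converse inclusion, it suffices to point to Example \ref{EIntro1} as the statement already does: with the program $\{f(c(X)) \tor d(X,X), X\mathbin{?}Y \tor X, X\mathbin{?}Y \tor Y\}$, term rewriting gives $f(c(0\mathbin{?}1)) \rw f(c(0)) \rw d(0,0)$ and also $f(c(0\mathbin{?}1)) \rw d(0\mathbin{?}1, 0\mathbin{?}1) \rw^* d(0,1)$, so $d(0,1) \in \denr{f(c(0\mathbin{?}1))}$, whereas under call-time choice $d(0,1) \notin \dens{f(c(0\mathbin{?}1))}$ because the single value chosen for $X$ is shared between both occurrences in $d(X,X)$. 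Hence $\denr{e} \not\subseteq \dens{e}$ in general, completing the argument.
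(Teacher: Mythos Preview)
Your proposal is correct and aligns with the paper's treatment: the paper gives no proof of this theorem at all, merely stating that the result ``was already explored in previous works of the authors \cite{lrs07,LMRS10LetRwTPLP}'' and recasting it. Your eventual route---deferring to the adequacy results of those works relating \crwl\ to a sharing-based calculus that in turn is simulated by plain term rewriting---is exactly that citation unpacked, and your counterexample for the failure of the converse is the one the statement itself points to.
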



On the other hand we cannot rely on any precedent in order to study the relation between $\bpcrwl$ and run-time choice. Therefore, putting run-time choice in the right place in the semantics inclusion chain from Theorem \ref{tCrwlVsPCRWL} will be one of the contributions of this work. We anticipate that the conclusion is that $\bpcrwl$ computes more values in general. 
\begin{theorem}\label{TCompPlvsRw}
For any \crwl-program $\prog$, $e \in \Exp$, $\denr{e} \subseteq \denbp{e}$. The converse inclusion does not hold in general.
\end{theorem}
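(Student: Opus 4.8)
The plan is to show $\denr{e} \subseteq \denbp{e}$ by taking an arbitrary $t \in \denr{e}$, unfolding the definition of the run-time choice denotation, and building a \bpcrwl-proof of $e \cltop t$. By Definition \ref{def:denotations}, $t \in \denr{e}$ means there is a rewrite derivation $\prog \vdash e \rw^* e'$ with $t \ordap |e'|$. First I would reduce to the case $t \equiv |e'|$: since $|e'| \ordap |e'|$ and $t \ordap |e'|$, once we have shown $e \cltop |e'|$ in \bpcrwl, an application of the polarity property (the \bpcrwl\ version of Proposition \ref{LMonPlural}, available through Theorem \ref{ThPropsBetaPlHeredadas}, using $e \ordap e$ and $t \ordap |e'|$) yields $e \cltop t$. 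So the core of the argument is: whenever $e \rw^* e'$ then $e \cltop |e'|$ in \bpcrwl.

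The natural approach is induction on the length of the rewrite derivation $e \rw^* e'$. The base case ($e \equiv e'$) amounts to showing $e \cltop |e|$ in \bpcrwl\ for every $e \in \Exp_\perp$, which is a routine structural induction on $e$ using \crule{RR}, \crule{DC}, and \crule{B} (for the function-rooted case, $|f(\ldots)| = \perp$, handled by \crule{B}). For the inductive step, suppose $e \rw e_1 \rw^* e'$. By the induction hypothesis $e_1 \cltop |e'|$ in \bpcrwl, and it remains to show that a single rewrite step $e \rw e_1$ can be "absorbed": i.e., if $e_1 \cltop t$ in \bpcrwl\ and $e \rw e_1$, then $e \cltop t$ in \bpcrwl. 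This single-step absorption lemma is the heart of the proof. Writing $e \equiv \con[l\sigma]$ and $e_1 \equiv \con[r\sigma]$ for a context $\con$, a rule $l \tor r \in \prog$ and $\sigma \in \Subst$, I would proceed by induction on the structure of the context $\con$: the contextual cases ($\con = h(\ldots, \con', \ldots)$) follow by applying the IH to the relevant subproof inside the \crule{DC} or nested \crule{\bpor} step; the base case $\con = [\ ]$ is the key one, where $e \equiv l\sigma \equiv f(p_1,\ldots,p_n)\sigma$ and we have a \bpcrwl-proof of $r\sigma \cltop t$, and we must produce a \bpcrwl-proof of $f(p_1,\ldots,p_n)\sigma \cltop t$.

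In that base case I would use a \crule{\bpor} step with the rule $f(p_1,\ldots,p_n) \tor r$, choosing for each argument $i$ the singleton set of matching substitutions $\Theta_i = \{\sigma|_{\mi{var}(p_i)}\}$ (restricting $\sigma$ to the variables of the linear pattern $p_i$; linearity of $(p_1,\ldots,p_n)$ ensures these restrictions have disjoint domains and their disjoint union agrees with $\sigma$ on $\mi{var}(l)$). Each premise $p_i\sigma \cltop p_i(\sigma|_{\mi{var}(p_i)})$ holds by the base-case fact just proved ($e \cltop |e|$ for $e \equiv p_i\sigma$, noting $|p_i\sigma| \equiv p_i\sigma$ since $p_i$ — hence $p_i\sigma$ — may contain only constructors above the variables, and matching substitutions produce c-terms; one should be slightly careful and observe $\den{p_i\sigma} \ni p_i\sigma$ using only \crule{RR}/\crule{DC}, and more precisely we want $p_i\sigma \cltop p_i\theta_{i1}$ which is exactly this). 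Singletons are trivially compressible (as noted in the proof of Theorem \ref{tCrwlVsPCRWL}), and $?\Theta_i = \sigma|_{\mi{var}(p_i)}$, so $\theta = \biguplus_i ?\Theta_i \uplus \thetae$ with $\thetae = \sigma|_{\vextra{l \tor r}}$; then $r\theta \equiv r\sigma$, and the last premise $r\theta \cltop t$ is our hypothesis. One subtlety: $\sigma$ is an arbitrary substitution, not a c-substitution, whereas \crule{\bpor} requires $\theta_{ij} \in \CSubst_\perp$ and $\thetae \in \icsus$. However $p_i\sigma$ being matched against the pattern $p_i$ forces $\sigma$ restricted to $\mi{var}(p_i)$ to already be instantiated to c-terms in the relevant part — but this is not literally true for arbitrary rewriting, where $\sigma$ can map pattern variables to arbitrary expressions.

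The main obstacle, then, is precisely this mismatch: in \bpcrwl\ parameter passing uses c-substitutions, while term rewriting substitutes arbitrary expressions. The standard way around this (as in the analogous \crwl-vs-rewriting results of \cite{lrs07,LMRS10LetRwTPLP}) is to not try to mimic one rewrite step by one \crule{\bpor} step, but instead to first push all the evaluation work that \crule{\bpor} needs down into the derivation: one shows that if $r\sigma \cltop t$ then for any subexpression that $\sigma$ places at a pattern-variable position, its contribution to $t$ can be extracted as a c-term value, so one can replace $\sigma$ by a suitable c-substitution $\sigma' \in \icsus$ with $\sigma' \ordSus$-below or denotationally-below $\sigma$ in the right sense, using the monotonicity-for-substitutions properties (Proposition \ref{LMonSusPlural}, via Theorem \ref{ThPropsBetaPlHeredadas}) to transfer the proof of $r\sigma \cltop t$ to $r\sigma' \cltop t$. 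Concretely I would prove the absorption lemma in the sharper form: if $\con[r\sigma] \cltop t$ then $\con[l\sigma] \cltop t$, by induction on the size of the \bpcrwl-proof, and in the $\con = [\ ]$ case inspect the topmost \bpcrwl\ step applied to $r\sigma$ to read off, for each variable $X$ of $l$, the finite set of partial values actually demanded from $X\sigma$, assemble them via $?$ into a c-substitution, check compressibility is trivial for the singleton family we build for each $p_i$, and close with \crule{\bpor}. This "extract the demanded values" bookkeeping, together with verifying the compressibility side-condition is met (it is, since we only ever use singletons $\Theta_i$), is where essentially all the real work lies; everything else is routine induction. I would refer the reader to \proofs\ for the full details, presenting here the reduction to the single-step absorption lemma and the construction in the $\con = [\ ]$ case as the intuition.
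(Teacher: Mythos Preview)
Your high-level decomposition matches the paper's: reduce to a one-step soundness lemma (if $e \rw e'$ then $\denbp{e'} \subseteq \denbp{e}$), extend to $\rw^*$ by induction on the length of the derivation, then use $|e'| \in \denbp{e'}$ together with polarity to conclude. The paper also reduces the one-step lemma to the root case, via compositionality rather than your induction on the context, but that is a cosmetic difference.

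The genuine gap is in the root case, and it is precisely your claim that ``compressibility is trivial for the singleton family we build for each $p_i$''. Singleton $\Theta_i$ do \emph{not} suffice. Take the rule $f(c(X)) \to d(X,X)$ and the rewrite step $f(c(e)) \to d(e,e)$: if $\vdbp d(e,e) \cltop d(0,1)$ then the two copies of $e$ have contributed different c-terms $0$ and $1$, so any \crule{\bpor} step on $f(c(e))$ reproducing this value must use $\Theta_1 \supseteq \{[X/0],[X/1]\}$, not a singleton. In that particular example compressibility happens to be free because $p_1 = c(X)$ has a single variable, but with a multi-variable pattern and a right-hand side duplicating several of them---e.g.\ $g(d(X,Y)) \to l(X,X,Y,Y)$---the set of c-substitutions you extract from a proof of $\vdbp l(e_X,e_X,e_Y,e_Y) \cltop l(0,1,0,1)$ is something like $\{[X/0,Y/0],[X/1,Y/1]\}$, which is \emph{not} compressible. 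Your sketch offers no mechanism to repair this, and ``assembling via $?$ into a c-substitution'' does not help: the premises of \crule{\bpor} require a set of elements of $\CSubst_\perp$, not a single element of $\icsus$.

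The missing ingredient, and the actual technical heart of the paper's proof, is the \emph{compressible completion} $\cc{\Theta}$. After extracting a finite (possibly non-compressible) $\Theta \subseteq \denbp{\sigma}$ with $\vdbp r(?\Theta) \cltop t$ (this is Lemma~\ref{lDenSubst1}, which formalizes your ``extract the demanded values'' step), one closes $\Theta$ under all recombinations $[X_1/X_1\theta_1,\ldots,X_n/X_n\theta_n]$ with $\theta_1,\ldots,\theta_n \in \Theta$. One then shows that $\cc{\Theta}$ is still contained in $\denbp{\sigma}$ (Lemma~\ref{LemPropsCC2}), is compressible by construction, and satisfies $?\Theta \pordap ?\cc{\Theta}$ (Proposition~\ref{LemPropsCC1}), so monotonicity transports the proof to $\vdbp r(?\cc{\Theta}) \cltop t$. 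Restricting $\cc{\Theta}$ to each $\mivar(p_i)$ then yields compressible $\Theta_i$, and the \crule{\bpor} step can be fired; Lemma~\ref{lDenSubst2} and monotonicity finally carry the conclusion from $f(\overline{p})(?\cc{\Theta})$ back to $f(\overline{p})\sigma$. This completion is exactly what distinguishes the \bpcrwl\ argument from the easier \apcrwl\ one (where any extracted $\Theta$ can be used directly); without it your root-case argument does not go through.
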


It is easy to prove the last statement of Theorem \ref{TCompPlvsRw}, as in fact Example \ref{EIntro2} is a valid counterexample for that, but proving the first part is far more complicated. The key for this proof is the following lemma stating that every term rewriting step is sound w.r.t.\ \bpcrwl.
\begin{lemma}[One step soundness of $\rw$ w.r.t.\  \bpcrwl]\label{LCompPlvsRw1}
For any {\it CRWL}-program $\prog$, $e, e' \in \Exp$ if $e \rw e'$ then $\denbp{e'} \subseteq \denbp{e}$.
\end{lemma}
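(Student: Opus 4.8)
The plan is to peel off the surrounding context with compositionality, reducing the statement to a top-level soundness claim for program rules, and then to settle that claim via a \emph{plural substitution lemma} that lets us rebuild a single $\crule{\bpor}$ step.

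\textbf{Reduction to the rule level.} From $e \rw e'$ we get $e \equiv \con[l\sigma]$ and $e' \equiv \con[r\sigma]$ for some rule $(l \tor r) \in \prog$, context $\con$, and $\sigma \in \Subst$; write $l \equiv f(p_1, \ldots, p_n)$. By the \bpcrwl\ version of the compositionality theorem (Theorem~\ref{lCompPS}, available through Theorem~\ref{ThPropsBetaPlHeredadas}) we have, for any $a$, $\denbp{\con[a]} = \bigcup_{\{t_1,\ldots,t_m\}\subseteq\denbp{a}} \denbp{\con[t_1~?~\ldots~?~t_m]}$. Hence, once we know $\denbp{r\sigma} \subseteq \denbp{l\sigma}$, every finite subset of $\denbp{r\sigma}$ is a finite subset of $\denbp{l\sigma}$, so the union indexed by subsets of $\denbp{r\sigma}$ is contained in the one indexed by subsets of $\denbp{l\sigma}$, giving $\denbp{e'} = \denbp{\con[r\sigma]} \subseteq \denbp{\con[l\sigma]} = \denbp{e}$. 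So it suffices to prove $\denbp{r\sigma} \subseteq \denbp{f(p_1,\ldots,p_n)\sigma}$.

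\textbf{Plural substitution lemma.} The core ingredient is: if $\prog \vdbp e_0\sigma \cltop t$ then there is $\theta' \in \icsus$ with $\dom(\theta') = \var(e_0)$ such that $\prog \vdbp e_0\theta' \cltop t$ and, for every $X \in \var(e_0)$ and every c-term $s$ occurring in the disjunction $\theta'(X)$, $\prog \vdbp X\sigma \cltop s$. This is the plural analogue of the classical \crwl\ substitution lemma; I would prove it by induction on the size of the \bpcrwl-proof, by cases on its last rule. The cases $\crule{RR}$ and $\crule{B}$ are immediate. For $\crule{DC}$ one applies the induction hypothesis to the arguments and forms $\theta'$ by concatenating, for each variable, the disjunctions returned by the sub-proofs (padding absent variables with $\perp$), transporting each sub-derivation along this enlargement with the monotonicity and polarity properties of \bpcrwl\ (Proposition~\ref{LMonPlural}, Proposition~\ref{LMonSusPlural}). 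The interesting case is $\crule{\bpor}$: one applies the induction hypothesis to the derivations of the function arguments, merges the resulting substitutions as above, and then \emph{replays the very same} $\crule{\bpor}$ step keeping its internal matching sets $\Theta_i$ untouched---so that their compressibility is automatically preserved---only feeding richer disjunctions into the variables of $e_0$; the monotonicity properties again justify that the replayed premises still hold.

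\textbf{Rebuilding the $\crule{\bpor}$ step.} Given $t \in \denbp{r\sigma}$, apply the lemma with $e_0 \equiv r$ to obtain $\theta'$. For each $i \in \{1, \ldots, n\}$ and each $X \in \var(p_i)$, let $V_X$ be the set of c-terms occurring in $\theta'(X)$ when $X \in \var(r)$ (a nonempty finite set), and $V_X := \{\perp\}$ otherwise. Put $\Theta_i := \{\theta \in \CSubst_\perp \mid \dom(\theta) = \var(p_i),\ \forall X \in \var(p_i).\ \theta(X) \in V_X\}$, i.e.\ the full Cartesian product of the $V_X$. Each $\Theta_i$ is finite and nonempty, so $m_i = |\Theta_i| > 0$, all its substitutions have domain $\var(p_i)$, and compressibility holds trivially by Lemma~\ref{lemmaCompress1} since $\Theta_i$ equals the Cartesian product $\prod_{X \in \var(p_i)} \{X\theta \mid \theta \in \Theta_i\}$. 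Moreover, for each $\theta_{ij} \in \Theta_i$ one derives $p_i\sigma \cltop p_i\theta_{ij}$ by descending through the constructor skeleton of $p_i$ with $\crule{DC}$ and, at each variable position $X$, using $\crule{B}$ when $\theta_{ij}(X) = \perp$ and the lemma's clause $X\sigma \cltop \theta_{ij}(X)$ otherwise. Finally set $\thetae := \theta'|_{\vextra{l \tor r}} \in \icsus$ (legal since $\vextra{l \tor r} \subseteq \var(r) = \dom(\theta')$) and $\theta := (\biguplus_{i=1}^{n} ?\Theta_i) \uplus \thetae$. By construction $\theta$ and $\theta'$ assign to every variable of $r$ disjunctions with the same set of c-term values (up to repetitions and a harmless extra $\perp$), and agree on the extra variables, so $\denbp{r\theta} = \denbp{r\theta'}$ by the $\pordap$-monotonicity of \bpcrwl\ (Proposition~\ref{LMonSusPlural}), whence $\prog \vdbp r\theta \cltop t$. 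These ingredients are exactly the premises of a $\crule{\bpor}$ step concluding $f(p_1,\ldots,p_n)\sigma \cltop t$, so $t \in \denbp{f(p_1,\ldots,p_n)\sigma}$, as required.

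\textbf{Main obstacle.} The delicate part is the plural substitution lemma, and within it the bookkeeping of compressibility. When we extract per-variable sets of values from a proof and recombine them we must guarantee that (i) the matching sets of $\crule{\bpor}$ steps occurring \emph{inside} the proof stay compressible---which is why the construction replays them verbatim rather than rebuilding them---and (ii) the fresh matching sets $\Theta_i$ created at the root are compressible---which is why we take them to be full Cartesian products, for which Lemma~\ref{lemmaCompress1} applies at once. A secondary, purely technical point is that pattern variables of $l$ not occurring in $r$ must still be matched; assigning them $\perp$ and invoking $\crule{B}$ disposes of this.
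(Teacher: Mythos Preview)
Your proof is correct and follows the same high-level decomposition as the paper---peel off the context via compositionality (Theorem~\ref{ThPropsBetaPlHeredadas}), then establish root-level soundness $\denbp{r\sigma}\subseteq\denbp{f(\overline{p})\sigma}$ by extracting a finite plural substitution from the derivation of $r\sigma\cltop t$ and rebuilding a single $\crule{\bpor}$ step. The organization of the root-level argument differs, however. The paper first proves Lemma~\ref{lDenSubst1} (yielding a finite nonempty $\Theta\subseteq\denbp{\sigma}$ with $r(?\Theta)\cltop t$), then introduces the \emph{compressible completion} $\cc{\Theta}$ and shows it stays inside $\denbp{\sigma}$ (Lemma~\ref{LemPropsCC2}), packages this as Lemma~\ref{lDenSubstBPl1}, builds the $\crule{\bpor}$ step for $f(\overline{p})(?\Theta)$ using $\Theta_i=\{\theta|_{\var(p_i)}:\theta\in\Theta\}$, and finally transports the conclusion to $f(\overline{p})\sigma$ via $?\Theta\bdsord\sigma$ (Lemma~\ref{lDenSubst2}) and monotonicity. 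Your route is more direct: your plural substitution lemma is Lemma~\ref{lDenSubst1} rephrased with the output written as a single $\theta'\in\icsus$ together with the per-disjunct property $X\sigma\cltop s$; you then form each $\Theta_i$ as the full Cartesian product of the per-variable value sets, which is automatically compressible and lets you derive the $\crule{\bpor}$ premises $p_i\sigma\cltop p_i\theta_{ij}$ \emph{directly}, bypassing both the named completion machinery and the detour through $f(\overline{p})(?\Theta)$. Your Cartesian-product construction is exactly the compressible completion specialised per argument, so the two arguments are morally the same; your version is leaner for this lemma alone, while the paper's abstractions (compressible completion, Lemma~\ref{lDenSubst2}) are reusable elsewhere. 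One minor point: your claim $\dom(\theta')=\var(e_0)$ can fail in the degenerate case $e_0\equiv X$ with $X\sigma\equiv X$, but relaxing it to $\dom(\theta')\subseteq\var(e_0)$ fixes this without affecting the rest of the argument.
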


Note that any term rewriting step is of the shape $\con[f(\overline{p})\sigma] \tor \con[r\sigma]$ for some $\sigma \in \Subst$ and some program rule $f(\overline{p}) \tor r$. If we could prove Lemma \ref{LCompPlvsRw1} for any step performed at the root of the starting expression, i.e. that $f(\overline{p})\sigma \tor r\sigma$ implies $\denbp{r\sigma} \subseteq \denbp{f(\overline{p})\sigma}$, then we could use the compositionality of \bpcrwl\ from Theorem \ref{ThPropsBetaPlHeredadas} to propagate the result $\denbp{r\sigma} \subseteq \denbp{f(\overline{p})\sigma}$ to $\denbp{\con[r\sigma]} \subseteq \denbp{\con[f(\overline{p})\sigma]}$. 
To do that we will use the following notion of \bpcrwl-denotation of a substitution.
\begin{definition}[Denotation of substitutions]
For any \crwl-program $\prog$, $\sigma \in \Subst_\perp$ the \bpcrwl-denotation of $\sigma$ under $\prog$ is 
$$
\denbpp{\sigma}{\prog} = \{\theta \in \CSubst_\perp \mid \forall X \in \var,~ \prog \vdbp \sigma(X) \cltop \theta(X)\}
$$
\end{definition}

Denotations of substitutions enjoy several interesting properties. For example every $\sigma \in Subst_\perp$ is more powerful than any combination of substitutions from its denotation by means of the $?$ operator, in the sense that $\sigma$ is bigger than the combination w.r.t.\ the preorder $\bdsord$---which implies that if we apply $\sigma$ to an arbitrary expression we get an expression with a bigger denotation that if we apply the combination, thanks to the monotonicity of $Subst_\perp$ enjoyed by $\bpcrwl$. 
This is something natural, because c-substitutions in $\denbp{\sigma}$ only contain a finite part of the possibly infinite set of values generated for each expression in the range of $\sigma$. 
\begin{lemma}\label{lDenSubst2}
For any finite not empty $\Theta \subseteq \denbp{\sigma}$ we have $?\Theta \bdsord \sigma$.
\end{lemma}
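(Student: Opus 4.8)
The plan is to reduce $?\Theta \bdsord \sigma$ to a pointwise statement. By definition of $\bdsord$ it suffices to prove, for every $X \in \var$, that $\denbp{(?\Theta)(X)} \subseteq \denbp{\sigma(X)}$. First I would unfold the definition of the $?$ operator on $\Theta = \{\theta_1,\ldots,\theta_n\}$: for a fixed $X$, $(?\Theta)(X)$ is a non-empty disjunction $u_1~?~\ldots~?~u_k$ whose disjuncts are, by the two cases of the definition, either of the form $\theta_i(X)$ or (when some $\theta_i$ does not bind $X$) the variable $X$ itself; in that latter situation such a $\theta_i$ also satisfies $\theta_i(X) \equiv X$, since $\theta_i \in \CSubst_\perp$ does not act on $X$. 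Hence every disjunct $u_j$ equals $\theta_i(X)$ for some $\theta_i \in \Theta$ --- and it is precisely here that the hypothesis $\Theta \neq \emptyset$ is used, to guarantee that such a witnessing $\theta_i$ exists even when $(?\Theta)(X) \equiv X$ because none of the substitutions in $\Theta$ binds $X$. Since each $\theta_i \in \Theta \subseteq \denbp{\sigma}$, the defining property of $\denbp{\sigma}$ gives $\prog \vdbp \sigma(X) \cltop \theta_i(X)$, so $u_j \in \denbp{\sigma(X)}$ for every $j$.

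Next I would compute the denotation of the disjunction. Iterating the identity $\denbp{e_1~?~e_2} = \denbp{e_1} \cup \denbp{e_2}$ --- which is available for \bpcrwl\ by Theorem~\ref{ThPropsBetaPlHeredadas}, cf.\ the way it is used for \apcrwl\ in the proof of Proposition~\ref{bubAPlural} --- yields $\denbp{(?\Theta)(X)} = \bigcup_{j=1}^{k} \denbp{u_j}$; since union is commutative and associative this also shows the statement does not depend on the arbitrary ordering of $\Theta$ implicit in the notation $?\Theta$. It then remains to show $\denbp{u_j} \subseteq \denbp{\sigma(X)}$ for each $j$. Each $u_j$ is a c-term, so a routine induction on the \bpcrwl-calculus (rules \crule{RR}, \crule{DC}, \crule{B}) gives $\denbp{u_j} = \{s \in \CTerm_\perp \mid s \ordap u_j\}$, i.e.\ the denotation of a c-term is its downward closure under $\ordap$. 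On the other hand $\denbp{\sigma(X)}$ is itself downward closed under $\ordap$, by the polarity of \bpcrwl\ (the analogue of Proposition~\ref{LMonPlural} granted by Theorem~\ref{ThPropsBetaPlHeredadas}). As $u_j \in \denbp{\sigma(X)}$, every $\ordap$-approximation of $u_j$ stays in $\denbp{\sigma(X)}$, so $\denbp{u_j} \subseteq \denbp{\sigma(X)}$. Putting the pieces together, $\denbp{(?\Theta)(X)} \subseteq \denbp{\sigma(X)}$ for every $X$, which is exactly $?\Theta \bdsord \sigma$.

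The argument is essentially bookkeeping and I do not anticipate a real obstacle; the only delicate point is the first step, namely reading off from the two-case definition of the $?$ operator that every disjunct of $(?\Theta)(X)$ is a value of $\sigma(X)$. The awkward sub-case is the fresh-looking $X$ that the operator inserts whenever the substitutions in $\Theta$ have differing domains, which I would handle explicitly, noting both that in that sub-case $X$ coincides with $\theta_i(X)$ for a suitable $\theta_i$ and that non-emptiness of $\Theta$ supplies such a $\theta_i$. Everything else reduces to two facts about \bpcrwl\ that are already available: downward-closedness of denotations under $\ordap$, and the characterisation of the denotation of a c-term as its set of $\ordap$-approximations (the latter by a short induction if it is not recorded earlier).
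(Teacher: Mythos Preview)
Your argument is correct. The paper does not include a proof of this lemma in the main text; it is deferred to the accompanying technical report \proofs, so there is nothing to compare against directly here. Your route---unfold $\bdsord$ pointwise, observe that every disjunct of $(?\Theta)(X)$ coincides with $\theta_i(X)$ for some $\theta_i\in\Theta$ (handling the inserted-$X$ case via non-emptiness of $\Theta$), then close under $\ordap$ using polarity and the characterisation of the denotation of a partial c-term---is the natural one and uses only facts already available in the paper (Theorem~\ref{ThPropsBetaPlHeredadas} for polarity and for $\denbp{e_1~?~e_2}=\denbp{e_1}\cup\denbp{e_2}$). The one place to be tidy is the sub-case where \emph{no} $\theta_i$ binds $X$: then $(?\Theta)(X)\equiv X$ outright (cf.\ Lemma~\ref{lemIntCjtoCsus}), and your appeal to non-emptiness of $\Theta$ to produce a witnessing $\theta_i$ with $\theta_i(X)\equiv X$ covers this cleanly.
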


Besides, it is clear that in any \bpcrwl-proof that uses some $\sigma \in \Subst_\perp$ only a finite amount of the information contained in $\sigma$. Therefore, in $\denbp{\sigma}$ is employed, just like in any proof for a statement $\vdbp e \clto t$ only a finite amount of the information in $e$ is used.
This follows because $t$ is a finite element and the \bpcrwl-proof is also finite, otherwise the statement $\vdbp e \clto t$ could not have been proved. These intuitions are formalized in the following result. 

\begin{lemma}\label{lDenSubst1}
For any $\sigma \in \Subst_\perp, e \in \Exp_\perp, t \in \CTerm_\perp$ 
if $\vdbp e\sigma \cltop t$ then $\exists \Theta \subseteq \denbp{\sigma}$ finite and not empty such that $\vdbp e(?\Theta) \cltop t$
\end{lemma}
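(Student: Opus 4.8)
The plan is to proceed by induction on the size of the given \bpcrwl-proof for $e\sigma \cltop t$, proving a slightly stronger statement that also tracks the domain relevant to $e$: namely, if $\vdbp e\sigma \cltop t$ with size $K$, then there is a finite non-empty $\Theta \subseteq \denbp{\sigma}$ with $\vdbp e(?\Theta) \cltop t$ using a proof of size at most $K$, and moreover we may take $\dom(\theta) \subseteq \vran(\sigma|_{\mi{var}(e)})$-compatible domains so the combinations behave well. The intuition, as stated before the lemma, is that both $t$ and the proof are finite, so only finitely many of the (possibly infinitely many) values generated for each variable in $\dom(\sigma)$ are actually used.

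First I would treat the base cases. For \crule{RR}, $e\sigma \equiv e \equiv X$ for a variable not touched by $\sigma$, or $X\sigma \equiv X$; in either case $t$ is a variable and we can pick any singleton $\Theta = \{\theta\} \subseteq \denbp{\sigma}$ (non-empty since $\CSubst_\perp$ contains the substitution mapping everything to $\perp$, and $\sigma(X) \cltop \perp$ always by \crule{B}), adjusting $\theta$ so that $X(?\Theta) = X$ when needed. For \crule{B}, $t \equiv \perp$ and again any singleton from $\denbp{\sigma}$ works, using \crule{B} on $e(?\Theta)$. The interesting inductive cases are \crule{DC} and \crule{\bpor}. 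For \crule{DC}, $e \equiv c(e_1,\ldots,e_k)$, $t \equiv c(t_1,\ldots,t_k)$, and we have subproofs $e_j\sigma \cltop t_j$; by induction each yields a finite non-empty $\Theta_j \subseteq \denbp{\sigma}$ with $e_j(?\Theta_j) \cltop t_j$. The key move is to \emph{merge}: take $\Theta = \Theta_1 \cup \ldots \cup \Theta_k$, still finite, non-empty, and a subset of $\denbp{\sigma}$. By Lemma \ref{lDenSubst2}, $?\Theta_j \bdsord \sigma$ and also $?\Theta_j \pordap ?\Theta$ (since enlarging the set of combined substitutions can only grow the disjunction assigned to each variable, up to $\ordap$), so by monotonicity of $\icsus$ for \bpcrwl\ (Proposition \ref{LMonSusPlural}(4), in its \bpcrwl\ version from Theorem \ref{ThPropsBetaPlHeredadas}), $\denbp{e_j(?\Theta_j)} \subseteq \denbp{e_j(?\Theta)}$, hence $e_j(?\Theta) \cltop t_j$, and \crule{DC} reassembles $c(e_1,\ldots,e_k)(?\Theta) \cltop t$.

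The \crule{\bpor} case is analogous but heavier: $e \equiv f(e_1,\ldots,e_k)$ reduces via a rule $f(\overline{p}) \tor r$ using subproofs $e_i\sigma \cltop p_i\tau_{ij}$ and $r\tau \cltop t$, where $\tau = (\biguplus_i ?\Theta_i) \uplus \tau_e$ and each $\Theta_i = \{\tau_{i1},\ldots,\tau_{im_i}\}$ is compressible. Here I would apply the induction hypothesis to each $e_i\sigma \cltop p_i\tau_{ij}$ and to $r\tau \cltop t$. The subproof for $r\tau \cltop t$ is a proof about $r$ with substitution $\tau$, not $\sigma$, so I would first need a small auxiliary observation: $\tau$ is built from data ultimately coming from $\sigma$ (the $\tau_{ij}$ arise from evaluating $e_i\sigma$) plus the extra-variable substitution $\tau_e \in \icsus$ which is freely chosen and untouched by $\sigma$; so the induction hypothesis applied at $e_i\sigma$ gives $\Theta_i' \subseteq \denbp{\sigma}$, and we again merge all of them into one $\Theta \subseteq \denbp{\sigma}$, then re-run the \crule{\bpor} step using $e_i(?\Theta)$ in place of $e_i\sigma$, keeping exactly the same matching substitutions $\tau_{ij}$ (which is legitimate because $\denbp{e_i(?\Theta)} \supseteq \denbp{e_i(?\Theta_i')} \ni p_i\tau_{ij}$ by the same monotonicity argument), the same compressible $\Theta_i$, the same $\tau$, and the same subproof $r\tau \cltop t$.

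The main obstacle I anticipate is the bookkeeping around \emph{merging} the finitely many $\Theta_j$ obtained from different subproofs into a single $\Theta \subseteq \denbp{\sigma}$ while ensuring that each original subproof still goes through after replacing $?\Theta_j$ by $?\Theta$. This is exactly where one needs that $?\Theta_j \pordap ?\Theta$ and the monotonicity of \bpcrwl\ w.r.t.\ $\pordap$ (Proposition \ref{LMonSusPlural}(4) for \bpcrwl); the subtlety is checking that $\pordap$ really does hold for the union—i.e. that combining more c-substitutions via $?$ yields, variablewise, a disjunction each of whose c-term alternatives is $\ordap$-below some alternative of the bigger disjunction, being slightly careful about the first clause of the definition of $?$ which inserts the variable $X$ itself when some $\theta_i$ omits $X$ from its domain. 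Once that monotonicity bridge is in place, the induction is routine; the finiteness of $\Theta$ is automatic since it is a finite union of finite sets, and non-emptiness is inherited from the base cases via \crule{B}.
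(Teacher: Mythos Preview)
Your overall strategy---induction on the \bpcrwl-proof, merging the finite sets obtained from the subproofs by taking their union, and then lifting each subproof to the merged $\Theta$ via the property ``$\Theta_j \subseteq \Theta \Rightarrow\ ?\Theta_j \pordap\ ?\Theta$'' combined with monotonicity of $\icsus$ (Proposition~\ref{LMonSusPlural}(4) in its \bpcrwl\ version)---is exactly the approach the paper takes. Your treatment of the \crule{\bpor} case is also right: the premise $r\tau \cltop t$ does not involve $\sigma$ at all and can simply be reused, so the induction hypothesis is only needed on the premises $e_i\sigma \cltop p_i\tau_{ij}$.

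There is, however, a genuine gap in your case analysis. You split by the rule used at the root of the proof, and in the \crule{DC} and \crule{\bpor} cases you assume $e \equiv c(e_1,\ldots,e_k)$ resp.\ $e \equiv f(e_1,\ldots,e_k)$. But this need not hold: if $e \equiv X$ with $X \in \dom(\sigma)$ and $X\sigma$ is compound (say $X\sigma \equiv c(a_1,\ldots,a_k)$ or $X\sigma \equiv f(a_1,\ldots,a_k)$), then the root rule is \crule{DC} or \crule{\bpor}, yet $e$ is a variable and cannot be decomposed as you write. Your \crule{RR} case only covers $X\sigma$ being a variable. The paper avoids this by \emph{first} dispatching the case $e \equiv X$ entirely, independently of the root rule: one defines $\theta$ by $\theta(X) = t$, $\theta(Y) = \perp$ for $Y \in \dom(\sigma)\setminus\{X\}$, $\theta(Y) = Y$ otherwise (and $\theta = [\overline{Y/\perp}]$ for $\overline{Y} = \dom(\sigma)$ when $X \notin \dom(\sigma)$), so that $\Theta = \{\theta\} \subseteq \denbp{\sigma}$ and $X(?\Theta) \cltop t$ is immediate. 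With the variable case out of the way, the induction on the proof for compound $e$ proceeds exactly as you outline.
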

\begin{proof}[Proof (sketch)]
First we prove the case where $e \equiv X \in \var$. If $X \in \mi{dom}(\sigma)$ then we define some $\theta \in \CSubst_\perp$ as
$$
\theta(Y) = \left\{\begin{array}{ll}
t & \mbox{ if } Y \equiv X \\
\perp & \mbox{ if } Y \in (\dom(\sigma) \setminus \{X\})\\
Y & \mbox{ if } Y \not\in \dom(\sigma)
\end{array} \right.
$$
Otherwise if $X \not\in dom(\sigma)$ then given $\overline{Y} = \dom(\sigma)$ we define $\theta = [\overline{Y/\perp}]$. In both cases it is easy to see that taking $\Theta = \{\theta\}$ then the conditions of the lemma are granted. To prove the general case where $e$ is not restricted to be a variable we perform an easy induction over the structure of $e\sigma \clto t$, using the property that for any $\Theta, \Theta' \subseteq CSubst_\perp$, if $\Theta \subseteq \Theta'$ then $?\Theta \pordap ?\Theta'$, combined with the monotonicity under substitutions of \bpcrwl. See \refap{PROOFlDenSubst1} for details. 
\end{proof}

This result is very interesting because it expresses a particular property of our plural semantics, as it can be also proved true for the corresponding definition of \apcrwl-denotation of a substitution. 
The key in this result is that the substitution obtained for rebuilding the starting derivation is 
a substitution from $\icsus$, which are precisely the kind of substitutions used for parameter passing in our plural semantics. 
On the other hand this is not true for \crwl, and it is one of the reasons why in general call-time choice computes less values than run-time choice: just consider the derivation $\vds d(X,X)[X/0~?~1] \clto d(0,1)$ for which there is no substitution $\theta$ in $\CSubst_\perp$---the kind of substitutions used for parameter passing in \crwl---such that $\vds d(X,X)\theta \clto d(0,1)$. Nevertheless if we restrict to deterministic programs this property becomes true for \crwl---and besides in that case run-time choice and call-time choice are equivalent too, see \cite{lrs07,LMRS10LetRwTPLP} for details.

Although Lemma \ref{lDenSubst1} is a nice result we still need an extra ingredient to be able to use it for proving Lemma \ref{LCompPlvsRw1}, thus enabling an easy proof for Theorem \ref{TCompPlvsRw}. The point is that we cannot use an arbitrary substitution from $\icsus$ for parameter passing in \bpcrwl\ but only a substitution which would be also compressible, in order to ensure that no wrong substitution mixup is performed, which is precisely the main feature of \bpcrwl. Therefore although a version of Lemma \ref{lDenSubst1} for \apcrwl\ can be used for proving that term rewriting is sound w.r.t.\ \apcrwl---as in fact it was done in \cite{rodH08}---, for proving its soundness w.r.t.\ \bpcrwl\ we will still need to do a little extra effort. And the missing piece is the following notion of compressible completion of a set of c-substitutions, which adds some additional c-substitutions to its input set in order to ensure that the resulting set is then compressible. 
\begin{definition}[Compressible completion]
Given $\Theta \subseteq \CSubst_\perp$ finite such that $\{X_1, \ldots, X_n\} = \bigcup_{\theta \in \Theta}\dom(\theta)$, its compressible completion $\cc{\Theta}$ is defined as
$$
\cc{\Theta} = \{[X_1/X_1\theta_1, \ldots, X_n/X_n\theta_n] \mid \theta_1, \ldots, \theta_n \in \Theta\}
$$
\end{definition}

Every compressible completion enjoys the following basic properties, which 
explain why we call it ``completion'' and also ``compressible.''
\begin{proposition}[Properties of $\cc{\Theta}$]\label{LemPropsCC1} For any $\Theta \subseteq CSubst_\perp$ finite such that $\{X_1, \ldots, X_n\} = \bigcup_{\theta \in \Theta}\dom(\theta)$
\begin{enumerate}
	\item[a)] $\cc{\Theta} \subseteq \CSubst_\perp$ and it is finite.
	\item[b)] $\Theta \subseteq \cc{\Theta}$. As a result, $?\Theta \pordap ?\cc{\Theta}$.
	\item[c)] $\bigcup_{\mu \in \cc{\Theta}}\dom(\mu) = \{X_1, \ldots, X_n\}$.
	\item[d)] $\cc{\Theta}$ is compressible.
\end{enumerate}
\end{proposition}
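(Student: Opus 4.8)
The plan is to dispatch the four items in order: (a) is a direct unfolding of the definition, (b) is used in (c), and (c) together with Lemma~\ref{lemmaCompress1} makes (d) nearly immediate. For (a) I would simply note that every $\mu \in \cc{\Theta}$ has, by construction, the form $\mu = [X_1/X_1\theta_1, \ldots, X_n/X_n\theta_n]$ with each $\theta_i \in \Theta \subseteq \CSubst_\perp$, so each binding $X_j\theta_j$ is either the variable $X_j$ itself (when $X_j \notin \dom(\theta_j)$) or a partial c-term; in both cases $\mu \in \CSubst_\perp$. Finiteness is clear, since $\cc{\Theta}$ has at most $|\Theta|^{n}$ elements and both $|\Theta|$ and $n$ are finite by hypothesis.

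For (b), given $\theta \in \Theta$ I would instantiate the defining tuple with $\theta_1 = \ldots = \theta_n = \theta$, obtaining $\mu = [X_1/X_1\theta, \ldots, X_n/X_n\theta] \in \cc{\Theta}$; since $\dom(\theta) \subseteq \{X_1, \ldots, X_n\} = \bigcup_{\theta' \in \Theta}\dom(\theta')$, a trivial case split on whether a given variable belongs to $\{X_1, \ldots, X_n\}$ shows that $\mu$ and $\theta$ agree on all of $\var$, so $\theta \equiv \mu \in \cc{\Theta}$. The consequence $?\Theta \pordap ?\cc{\Theta}$ then follows from $\Theta \subseteq \cc{\Theta}$ together with the monotonicity of $?$ with respect to set inclusion, i.e.\ the property $\Theta \subseteq \Theta' \Rightarrow ?\Theta \pordap ?\Theta'$ already used in the proof of Lemma~\ref{lDenSubst1}.

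For (c), the inclusion $\bigcup_{\mu \in \cc{\Theta}}\dom(\mu) \subseteq \{X_1, \ldots, X_n\}$ is immediate from the shape of the elements of $\cc{\Theta}$, which can only bind variables among $X_1, \ldots, X_n$; the reverse inclusion is a direct consequence of (b), since $\{X_1, \ldots, X_n\} = \bigcup_{\theta \in \Theta}\dom(\theta) \subseteq \bigcup_{\mu \in \cc{\Theta}}\dom(\mu)$. For (d) I would invoke Lemma~\ref{lemmaCompress1}: by (c) we know $\bigcup_{\mu \in \cc{\Theta}}\dom(\mu) = \{X_1, \ldots, X_n\}$, so compressibility of $\cc{\Theta}$ reduces to checking that for all $\mu_1, \ldots, \mu_n \in \cc{\Theta}$ there is some $\mu \in \cc{\Theta}$ with $X_i\mu \equiv X_i\mu_i$ for every $i$. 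Writing each $\mu_j$ in its defining form $\mu_j = [X_1/X_1\theta^{j}_1, \ldots, X_n/X_n\theta^{j}_n]$ with $\theta^{j}_i \in \Theta$, so that $X_j\mu_j \equiv X_j\theta^{j}_j$, the witness is the \emph{diagonal} substitution $\mu = [X_1/X_1\theta^{1}_1, X_2/X_2\theta^{2}_2, \ldots, X_n/X_n\theta^{n}_n]$, which lies in $\cc{\Theta}$ by construction and satisfies $X_j\mu \equiv X_j\theta^{j}_j \equiv X_j\mu_j$ for all $j$.

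I do not expect a genuine obstacle here; the only delicate points are bookkeeping ones. In (a) and (b) one must be careful with trivial bindings $X_j/X_j$ (the case $X_j\theta_i \equiv X_j$) so that domains are computed correctly, and in (d) one must make the reindexing explicit: $\bigcup_{\mu \in \cc{\Theta}}\dom(\mu)$ is presented a priori as an unordered set, so identifying it with the tuple $(X_1, \ldots, X_n)$ used in Definition~\ref{def:compress} and Lemma~\ref{lemmaCompress1} needs a word of justification (the set equality in Definition~\ref{def:compress} is insensitive to the chosen enumeration of the domain). The diagonal argument is the conceptual core of (d), but it collapses to a single line once Lemma~\ref{lemmaCompress1} is available.
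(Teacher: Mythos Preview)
Your proof is correct and follows the natural route; the paper itself does not spell out a proof of this proposition in the main text but defers it to the accompanying technical report, so there is no in-text argument to compare against. The diagonal construction you give for (d), combined with the alternative compressibility criterion of Lemma~\ref{lemmaCompress1}, is exactly the expected one-line argument, and your handling of the bookkeeping issues (trivial bindings, enumeration of the domain) is appropriately careful.
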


But, for the current task, the most interesting property of compressible completions is the following.
\begin{lemma}\label{LemPropsCC2} For any $\sigma \in \Subst_\perp$ and any $\Theta \subseteq \denbp{\sigma}$ finite and not empty we have that $\cc{\Theta} \subseteq \denbp{\sigma}$ too.
\end{lemma}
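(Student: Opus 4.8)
The plan is to unfold the two relevant definitions --- that of the compressible completion $\cc{\Theta}$ and that of the \bpcrwl-denotation of a substitution --- and then verify the membership $\mu \in \denbp{\sigma}$ pointwise over all variables, for an arbitrary $\mu \in \cc{\Theta}$. There is essentially no inductive content here: Lemma~\ref{LemPropsCC2} is a direct consequence of the fact that $\denbp{\sigma}$ is defined by a condition quantified over \emph{all} variables of $\var$, so that it is ``coordinatewise'' in a way that is compatible with the way $\cc{\Theta}$ picks witnesses independently for each $X_i$.

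Concretely, first I would fix an arbitrary $\mu \in \cc{\Theta}$. Writing $\{X_1, \ldots, X_n\} = \bigcup_{\theta \in \Theta} \dom(\theta)$, by definition of compressible completion there are $\theta_1, \ldots, \theta_n \in \Theta$ with $\mu = [X_1/X_1\theta_1, \ldots, X_n/X_n\theta_n]$. Since each $\theta_i \in \CSubst_\perp$ we get $X_i\theta_i \in \CTerm_\perp$, hence $\mu \in \CSubst_\perp$; it only remains to show $\prog \vdbp \sigma(X) \cltop \mu(X)$ for every $X \in \var$, which is exactly the defining condition for $\mu \in \denbp{\sigma}$. I would then split on whether $X$ is one of the $X_i$. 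If $X \equiv X_i$, then $\mu(X_i) \equiv X_i\theta_i \equiv \theta_i(X_i)$, and since $\theta_i \in \Theta \subseteq \denbp{\sigma}$ we already have $\prog \vdbp \sigma(X_i) \cltop \theta_i(X_i)$, which is what we need. If $X \notin \{X_1, \ldots, X_n\}$, then $X \notin \dom(\mu)$, so $\mu(X) \equiv X$; here I would use that $\Theta$ is nonempty to pick some $\theta \in \Theta \subseteq \denbp{\sigma}$, and note that $X \notin \bigcup_{\theta' \in \Theta}\dom(\theta') = \{X_1,\ldots,X_n\}$ forces $X \notin \dom(\theta)$, so $\theta(X) \equiv X$; then $\theta \in \denbp{\sigma}$ gives $\prog \vdbp \sigma(X) \cltop \theta(X) \equiv X \equiv \mu(X)$. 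Both cases being covered, $\mu \in \denbp{\sigma}$, and since $\mu$ was arbitrary we conclude $\cc{\Theta} \subseteq \denbp{\sigma}$.

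The proof is a short bookkeeping argument, so I do not expect any genuine obstacle; the one point that requires a small amount of care is the treatment of variables outside $\{X_1, \ldots, X_n\}$. There one cannot read off a witness from the ``tuple part'' of $\mu$, and one must instead exploit both the nonemptiness of $\Theta$ and the fact that $\denbp{\sigma}$-membership constrains the behaviour on every variable of $\var$ (so that elements of $\Theta$ already reduce $\sigma(X)$ to $X$ for such $X$). Everything else is immediate from the definitions.
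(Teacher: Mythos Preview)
Your proof is correct and is exactly the natural direct argument: unfold the definitions of $\cc{\Theta}$ and $\denbp{\sigma}$, then verify membership coordinatewise. The paper itself does not spell out a proof for this lemma in the main text (it is deferred to the companion technical report), but your argument is the expected one and there is no gap; the only subtlety, which you handle correctly, is the case $X \notin \{X_1,\ldots,X_n\}$, where nonemptiness of $\Theta$ supplies a witness $\theta \in \denbp{\sigma}$ with $\theta(X) \equiv X$.
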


This is precisely the result we need to strengthen Lemma \ref{lDenSubst1} so it now becomes applicable for \bpcrwl, as it allows us to shift from any subset of the \bpcrwl-denotation of a substitution to its compressible completion, which will be also more powerful than the starting subset thanks to Proposition \ref{LemPropsCC1} b). 
\begin{lemma}\label{lDenSubstBPl1}
For any $\sigma \in \Subst_\perp, e \in \Exp_\perp, t \in \CTerm_\perp$ if $\vdbp e\sigma \cltop t$ then $\exists \Theta \subseteq \denbp{\sigma}$ finite, not empty, and compressible such that $\vdbp e(?\Theta) \cltop t$. 
\end{lemma}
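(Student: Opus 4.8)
The plan is to obtain the desired $\Theta$ by first invoking Lemma~\ref{lDenSubst1} and then replacing the set it yields by its compressible completion, which by construction repairs the only property that Lemma~\ref{lDenSubst1} does not already guarantee, namely compressibility, without destroying soundness for parameter passing. In other words, the proof is essentially an assembly of results already established: Lemma~\ref{lDenSubst1} reduces $\sigma$ to a finite subset of its denotation, and the properties of $\cc{\cdot}$ let us enlarge that subset to a compressible one still inside $\denbp{\sigma}$.

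Concretely, I would proceed as follows. First, apply Lemma~\ref{lDenSubst1} to the hypothesis $\vdbp e\sigma \cltop t$: this produces a finite, non-empty $\Theta \subseteq \denbp{\sigma}$ such that $\vdbp e(?\Theta) \cltop t$. Next, consider its compressible completion $\cc{\Theta}$. By Proposition~\ref{LemPropsCC1}(a) and (d), $\cc{\Theta}$ is a finite subset of $\CSubst_\perp$ and it is compressible; by Proposition~\ref{LemPropsCC1}(b), $\Theta \subseteq \cc{\Theta}$, so $\cc{\Theta}$ is non-empty as well; and by Lemma~\ref{LemPropsCC2}, $\cc{\Theta} \subseteq \denbp{\sigma}$. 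Thus $\cc{\Theta}$ already satisfies all the structural requirements on the set to be produced, and it only remains to check that it is still sound for parameter passing, i.e., that $\vdbp e(?\cc{\Theta}) \cltop t$. For this, recall from Proposition~\ref{LemPropsCC1}(b) that $?\Theta \pordap ?\cc{\Theta}$, and that by Theorem~\ref{ThPropsBetaPlHeredadas} the monotonicity of $\icsus$ from Proposition~\ref{LMonSusPlural}(4) also holds for \bpcrwl, whence $\denbp{e(?\Theta)} \subseteq \denbp{e(?\cc{\Theta})}$. Since $t \in \denbp{e(?\Theta)}$ by the previous step, we conclude $t \in \denbp{e(?\cc{\Theta})}$, i.e., $\vdbp e(?\cc{\Theta}) \cltop t$. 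Taking $\Theta := \cc{\Theta}$ finishes the argument.

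There is no genuinely hard step here; the real work has already been done in Lemmas~\ref{lDenSubst1} and~\ref{LemPropsCC2} and in the properties of $\cc{\cdot}$. The only points requiring a moment of care are purely bookkeeping: that $?\cc{\Theta}$ is well defined, which needs $\cc{\Theta}$ to be a non-empty subset of $\CSubst_\perp$ (guaranteed by Proposition~\ref{LemPropsCC1}(a) and (b)); and that the monotonicity result indeed applies, which it does since $?\Theta$ and $?\cc{\Theta}$ both lie in $\icsus$ and $\pordap$ compares exactly such substitutions. So the proof amounts to chaining these ingredients together in the order above.
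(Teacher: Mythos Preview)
Your proposal is correct and follows essentially the same approach as the paper: apply Lemma~\ref{lDenSubst1} to obtain a finite non-empty $\Theta \subseteq \denbp{\sigma}$ with $\vdbp e(?\Theta) \cltop t$, pass to its compressible completion $\cc{\Theta}$ using Proposition~\ref{LemPropsCC1} and Lemma~\ref{LemPropsCC2}, and conclude via the monotonicity for $\icsus$ from Theorem~\ref{ThPropsBetaPlHeredadas}. Your version is in fact slightly more explicit in pinpointing which items of Proposition~\ref{LemPropsCC1} are used at each step, but the argument is the same.
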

\begin{proof}
By Lemma \ref{lDenSubst1} we get some $\Theta \subseteq \denbp{\sigma}$ finite and not empty such that $\vdbp e(?\Theta) \cltop t$. Then by Lemma \ref{LemPropsCC2} we get that $\cc{\Theta} \subseteq \denbp{\sigma}$ too, and that it is finite, not empty (as $\Theta \subseteq \cc{\Theta}$ and $\Theta$ is not empty), compressible and $?\Theta \pordap ?\cc{\Theta}$ by Proposition \ref{LemPropsCC1}. But then we can apply the monotonicity of Theorem \ref{ThPropsBetaPlHeredadas} to get $\vdbp e(?\cc{\Theta}) \cltop t$, so we are done.
\end{proof}

We can now use this result to prove a particularization of Lemma \ref{LCompPlvsRw1} (one step soundness of $\rw$ w.r.t.\ \bpcrwl) for steps performed at the root of the expression, i.e., of the shape $f(\overline{p})\sigma \tor r\sigma$. Thus, given some $t \in \denbp{r\sigma}$ our goal is proving that $t \in \denbp{f(\overline{p})\sigma}$.  First of all by Lemma \ref{lDenSubstBPl1} we get some compressible $\Theta \subseteq \denbp{\sigma}$ such that $t \in \denbp{r(?\Theta)}$. If we could use it to prove that $t \in \denbp{f(\overline{p})(?\Theta)}$ then  by Lemma \ref{lDenSubst2} we would get $?\Theta \bdsord \sigma $, so by the monotonicity of Theorem \ref{ThPropsBetaPlHeredadas} we could obtain $t \in \denbp{f(\overline{p})\sigma}$ as we wanted. 
As $\overline{p} \subseteq \CTerm_\perp$ and $\Theta \subseteq \CSubst_\perp$ we can easily prove that $\forall p_i \in \overline{p}, \theta_j \in \Theta$ we have $\vdbp p_i(?\Theta) \clto p_i\theta_j$. All this can be used to perform the following step, assuming $\Theta = \{\theta_1, \ldots, \theta_m\}$.
$$
\begin{scriptsize}
\infer[\mbox{\crule{\bpor}}]{f(p_1, \ldots, p_n)(?\Theta) \clto t}
           {
	\!\!\begin{array}{c}
        p_1(?\Theta) \cltop p_1\theta_1 \equiv p_1\theta_1|_{var(p_1)} \\
	      \ \ldots \\
		p_1(?\Theta) \cltop p_1\theta_m \equiv p_1\theta_m|_{var(p_1)} \\
	\end{array}~
       \ldots
       \ \!\!\begin{array}{c}
		p_n(?\Theta) \cltop p_n\theta_1 \equiv p_n\theta_1|_{var(p_n)} \\
	      \ \ldots \\
		p_n(?\Theta) \cltop p_n\theta_m \equiv p_n\theta_m|_{var(p_n)} \\
	\end{array}
       \ \!\!\begin{array}{c}
         ~ \\
         ~ \\
         r\theta' \equiv r(?\Theta) \cltop t
         \end{array}
            }
\end{scriptsize}
$$
for $\theta' = (\biguplus ?\Theta_i) \uplus \theta_e$ where $\forall i \in \{1, \ldots, n\}. \Theta_i = \{\theta_j|_{\mivar(p_i)}~|~\theta_j \in \Theta\}$, $\theta_e = (?\Theta)|_\vE$ for $\vE = \vextra{f(\overline{p}) \tor r}$. It can be easily proved that having $\Theta$ compressible implies that each $\Theta_i$ is also compressible---so the \bpor\ step above is valid---, and that $r\theta' \equiv r(?\Theta)$.

Therefore we have just proved the soundness w.r.t.\ \bpcrwl\ of term rewriting steps performed at the root of the starting expression. So all that is left is using the compositionality of \bpcrwl\ from Theorem \ref{ThPropsBetaPlHeredadas} for propagating this result for steps performed in an arbitrary context. A fully detailed proof for Lemma \ref{LCompPlvsRw1} can be found in \refap{PROOFLCompPlvsRw1}.

~\\
And now we are finally ready to prove Theorem \ref{TCompPlvsRw}.
\begin{proof}[Proof for Theorem \ref{TCompPlvsRw}]
Given some $t \in \denr{e}$, by definition $\exists e' \in \Exp$ such that $t \ordap |e'|$ and $e \rw^* e'$. We can extend Lemma \ref{LCompPlvsRw1} to $\rw^*$ by a simple induction on the length of $e \rw^* e'$, hence $\denbp{e'} \subseteq \denbp{e}$. As $\forall e \in Exp_\perp, |e| \in \denbp{e}$ (by a simple induction on the structure of $e$), then $t \ordap |e'| \in \denbp{e'} \subseteq \denbp{e}$, hence $t \in \denbp{e}$ by the polarity of Theorem \ref{ThPropsBetaPlHeredadas}. Example 
\ref{EIntro1} shows that the converse inclusion does not hold in general.\\
\end{proof}

The evident corollary for all these results is the 
following inclusion chain.
\begin{corollary}\label{CHierChain}
For any \crwl-program $\prog$, $e \in \mi{Exp}$ 
$$
\dens{e} \subseteq \denr{e} \subseteq \denbp{e} \subseteq \denap{e}
$$
Hence for any $t \in \CTerm$, $\prog \vds e \clto t$ implies $\prog \vdash e \rw^* t$, which implies $\prog \vdbp e \cltop t$, which implies $\prog \vdap e \cltop t$.
\end{corollary}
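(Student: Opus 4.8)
The plan is to obtain this inclusion chain simply by composing the three comparison results already established, since Corollary~\ref{CHierChain} is explicitly stated to be a corollary of them. First I would observe that $\dens{e} \subseteq \denr{e}$ is exactly Theorem~\ref{TCompRwvsCrwl}, that $\denr{e} \subseteq \denbp{e}$ is exactly Theorem~\ref{TCompPlvsRw}, and that $\denbp{e} \subseteq \denap{e}$ is the right-hand inclusion of Theorem~\ref{tCrwlVsPCRWL}. Chaining these three by transitivity of set inclusion yields $\dens{e} \subseteq \denr{e} \subseteq \denbp{e} \subseteq \denap{e}$, which is the first claim.

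For the second claim, concerning total c-terms, I would unfold the definitions at the two ends of the chain. Given $t \in \CTerm$ with $\prog \vds e \clto t$, we have $t \in \dens{e}$, hence $t \in \denr{e}$ by the first inclusion; by Definition~\ref{def:denotations} this means $\prog \vdash e \rw^* e'$ with $t \ordap |e'|$ for some $e' \in \Exp$. The only point requiring a moment of care is the passage from this denotational fact to $\prog \vdash e \rw^* t$: since $t$ contains no occurrence of $\perp$, the relation $t \ordap |e'|$ forces $|e'| \equiv t$, and since the shell of $e'$ is then a total c-term and the shell operator strips exactly the non-constructor part of $e'$, the expression $e'$ must itself be function-free and $\perp$-free, i.e.\ $e' \equiv t$; thus $\prog \vdash e \rw^* t$. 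Conversely, from $\prog \vdash e \rw^* t$ with $t \in \CTerm$ we get $t \equiv |t| \ordap |t|$, so $t \in \denr{e}$, and then $t \in \denbp{e} \subseteq \denap{e}$ by Theorems~\ref{TCompPlvsRw} and~\ref{tCrwlVsPCRWL}, i.e.\ $\prog \vdbp e \cltop t$ and $\prog \vdap e \cltop t$, which gives the announced chain of implications.

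Since the genuine content has all been proved in the preceding theorems, I do not expect any real obstacle in this argument; the only subtlety worth spelling out is precisely the equivalence, for total $t$, between $t \in \denr{e}$ and $\prog \vdash e \rw^* t$, which relies on a total c-term being $\ordap$-maximal so that a total shell $|e'|$ can only arise from a c-term $e' \equiv t$.
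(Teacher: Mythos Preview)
Your proposal is correct and follows essentially the same approach as the paper's own proof: both obtain the inclusion chain by combining Theorems~\ref{tCrwlVsPCRWL}, \ref{TCompRwvsCrwl}, and~\ref{TCompPlvsRw}, and both handle the second claim by observing that for a total $t$ the condition $t \ordap |e'|$ forces $t \equiv |e'| \equiv e'$ (using $\ordap$-maximality of total c-terms and the fact that a total shell can only come from a total c-term), and then re-enter the denotational chain via $t \equiv |t|$. The only cosmetic difference is that your word ``Conversely'' refers to the reverse direction of the equivalence between $t \in \denr{e}$ and $\prog \vdash e \rw^* t$ for total $t$, whereas the paper phrases this step as a continuation of the same implication chain; the content is identical.
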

\begin{proof}
The first part holds just combining Theorems \ref{tCrwlVsPCRWL}, \ref{TCompRwvsCrwl}, and \ref{TCompPlvsRw}.\\
Concerning the second part, assume $\vdash_{\mi{CRWL}} e \clto t$, in other words, $t \in \dens{e}$. Then by the first part $t \in \denr{e}$, hence $e \rw^* e'$ such that $t \ordap |e'|$. But as $t \in \CTerm$ it is total and then $t$ is maximal w.r.t.\ $\ordap$ (a know property of $\ordap$ easy to check by induction on the structure of expressions), and so $t \ordap |e'|$ implies $t \equiv |e'|$, which implies $t \equiv e'$, as $t$ is total (easy to check by induction on the structure of $t$). Therefore $e \rw^* e' \equiv t \in \CTerm$, which implies $t \in \denr{e}$ by definition, as for c-terms $t$ we have $t \ordap t \equiv |t|$ (a property of shells proved by induction on the structure of $t$), but then $t \in \denbp{e} \subseteq \denap{e}$ by the first part, and so both $\vdbp e \cltop t$ and $\vdap e \cltop t$.
\end{proof}

\subsection{Restricted equivalence of \apcrwl\ and \bpcrwl}\label{sect:equivaABPlural}
In this section we will present a class of programs for which \apcrwl\ and \bpcrwl\ behave the same, thus yielding exactly the same denotation for any expression. In the previous section we saw that $\denbp{e} \subseteq \denap{e}$ for any expression and program, therefore we just have to find a class of programs such that $\denap{e} \subseteq \denbp{e}$ also holds for programs in that class.

The intuitions and ideas behind the characterization of that class of programs come from Example \refp{EPlural3}. The program used there contains two functions $f$ and $h$ defined by the rules $\{f(c(X)) \tor d(X,X), h(d(X, Y)) \tor d(X, X)\}$, with $d \in \CS^2$, under which it is easy to check that \apcrwl\ and \bpcrwl\ behave the same for the expressions $f(c(0)~?~c(1))$ and $h(d(0,0)~?~d(1,1))$.
\begin{itemize}
 	\item Regarding $f(c(0)~?~c(1))$, it is pretty natural for both plural semantics to behave the same, as no wrong information mixup can be performed when combining two substitutions with singleton domain, like $[X/0]$ and $[X/1]$, coming when evaluating $c(0)~?~c(1)$ to get an instance of $c(X)$. 
    \item The case for $h(d(0,0)~?~d(1,1))$ is more surprising at a first look, because then we can obtain the matching substitutions $[X/0, Y/0]$ and $[X/1, Y/1]$, which cannot be safely combined because the set $\{[X/0, Y/0], [X/1, Y/1]\}$ is not compressible. But, as seen in Example \ref{EPlural3}, this poses no problem, because the wrongly intermingled substitution $[X/0~?~1, Y/0~?~1]$ used by \apcrwl\ has the same effect over the right-hand side $d(X,X)$ of the rule for $h$ as the substitution $[X/0~?~1, Y/\perp]$, that can be obtained from combining the compressible set $\{[X/0, Y/\perp], [X/1, Y/\perp]\}$. This compressible set not only can be used for parameter passing by \bpcrwl, but also can be generated by evaluating the arguments of $h(d(0,0)~?~d(1,1))$ to get an instance of the left-hand side of the rule for $h$, as $[X/0, Y/\perp] \ordap [X/0, Y/0]$ and $[X/1, Y/\perp] \ordap [X/1, Y/1]$. 
 \end{itemize} 
What the functions $f$ and $h$ have in common is that, for each argument of the left-hand side of each of their program rules, at most one variable in that argument appears also in the right-hand side. If we only have to care about one variable then we can lower to $\perp$ the value obtained for the other variables in the matching substitution, thus getting a smaller---w.r.t.\ to $\ordap$---matching substitution corresponding to a smaller value, that then can be computed thanks to the polarity of \apcrwl\ from Proposition \ref{LMonPlural}. The effect of this is that we would get a compressible substitution that can be used 
by \apcrwl\ to turn a \apor\ step using a possibly non-compressible substitutions into a \apor\ step using a compressible substitutions, that would be then a valid \bpor\ step as well. 
Note that in this case extra variables pose no problem, as the only difference between \apor\ and \bpor\ is the way they handle the matching substitutions obtained by the evaluation of function arguments. Then, as extra variables are instantiated freely and independently of the matching substituions, they always behave the same both under \apcrwl\ and \bpcrwl. 

In the following definition we formally define the class $\classAB$ of programs in which the ideas above are materialized.

\begin{definition}[Class of programs $\classAB$]\label{def:classAB}
The class of programs $\classAB$ is defined by
$$
\prog \in \classAB \mbox{ iff } \forall (f(p_1, \ldots, p_n) \tor r) \in \prog. \forall i \in \{1, \ldots, n\}. \#(\mivar(p_i) \cap \mivar(r)) \leq 1
$$
\end{definition}

\noindent where, given a set $S$, $\#(S)$ stands for the cardinality of $S$.
Note that any program rule in which every argument in its left-hand side is ground or a variable passes the test that characterizes $\classAB$: for ground arguments no parameter passing is performed, only matching, so we conjecture that if the arguments in the left-hand side of each program rule 
are ground then both \crwl, term rewriting, \apcrwl, and \bpcrwl\ behave the same; on the other hand, for variable arguments we have the converse situation so matching is trivial and parameter passing is the important thing, so we conjecture that if the arguments in the left-hand side of each program rule are variables then both term rewriting, \apcrwl, and \bpcrwl\ behave the same---\crwl\ remains as the smaller semantics in this case, just consider the program $\{pair(X) \tor d(X, X)\}$ and the expression $pair(0~?~1)$ for which $d(0,1)$ cannot be computed by \crwl\ but it can be by any of the other three semantics.

Anyway, the class $\classAB$ is defined by a simple syntactic criterion, which can be easily implemented in any mechanized program analysis tool, 
and that we have implemented in our prototype from Section \ref{sec:examples}. 

~\\
The following theorem formalizes the expected equivalence between \apcrwl\ and \bpcrwl\ for programs in the class $\classAB$.

\begin{theorem}[Equivalence of \apcrwl\ and \bpcrwl\ for the class $\classAB$]\label{ThEquivABPClass}
For any program $\prog \in \classAB$, $e \in \Exp_\perp$
$$
\denapp{e}{\prog} = \denbpp{e}{\prog}
$$
\end{theorem}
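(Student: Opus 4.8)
## Proof proposal

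The plan is to prove the nontrivial inclusion $\denapp{e}{\prog} \subseteq \denbpp{e}{\prog}$ (the reverse inclusion is already Theorem~\ref{tCrwlVsPCRWL}), by induction on the size of an \apcrwl-proof of $\prog \vdap e \cltop t$. The only case that is not immediate is the one where the last rule applied is \apor, since \crule{RR}, \crule{DC}, and \crule{B} are shared verbatim by both calculi and their subproofs shrink, so the induction hypothesis applies directly to the premises. So assume the last step is an \apor\ step for a rule $f(p_1,\ldots,p_n) \tor r \in \prog$, with matching substitution sets $\Theta_i = \{\theta_{i1},\ldots,\theta_{im_i}\}$ (each $\mi{dom}(\theta_{ij}) \subseteq \mivar(p_i)$), extra-variable substitution $\thetae \in \icsus$, and $\theta = (\biguplus_i ?\Theta_i) \uplus \thetae$. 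By the induction hypothesis each premise $e_i \cltop p_i\theta_{ij}$ and $r\theta \cltop t$ also holds in \bpcrwl.

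First I would handle the arguments. Fix an argument position $i$, and use the hypothesis $\prog \in \classAB$: $\#(\mivar(p_i) \cap \mivar(r)) \leq 1$. If $\mivar(p_i) \cap \mivar(r) = \emptyset$, then $p_i\theta = p_i$ regardless of the $\theta_{ij}$, and I replace $\Theta_i$ by the singleton $\{\epsilon|_{\mivar(p_i)}\}$ (or any single $\theta_{ij}$), which is trivially compressible; the premise $e_i \cltop p_i\theta_{ij}$ becomes $e_i \cltop p_i$, which still holds in \bpcrwl\ since $p_i\theta_{ij} \equiv p_i$ in this case once we lower the irrelevant variables — more precisely, use the polarity of \bpcrwl\ (Theorem~\ref{ThPropsBetaPlHeredadas}, the \bpcrwl\ version of Proposition~\ref{LMonPlural}) to pass from $e_i \cltop p_i\theta_{ij}$ to $e_i \cltop p_i\theta'_{ij}$ where $\theta'_{ij}$ lowers to $\perp$ all variables of $p_i$ not in $\mivar(r)$. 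If $\mivar(p_i) \cap \mivar(r) = \{X_i\}$ is a singleton, define $\theta'_{ij} = [X_i / X_i\theta_{ij}] \uplus [\overline{Y/\perp}]$ where $\overline{Y} = \mivar(p_i) \setminus \{X_i\}$; then $\theta'_{ij} \ordap \theta_{ij}$ and $t' \ordap t$ trivially, so $e_i \cltop p_i\theta'_{ij}$ holds in \bpcrwl\ by polarity with a proof no larger. The key combinatorial point: the set $\Theta'_i = \{\theta'_{i1},\ldots,\theta'_{im_i}\}$ is compressible, because all its substitutions agree on every variable except possibly $X_i$, so by the criterion of Lemma~\ref{lemmaCompress1} (for any choice $\theta'_{ij_1},\ldots$, the witness is obtained by picking the value of $X_i$ from the appropriate one, which exists since that is the only varying coordinate). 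Moreover $?\Theta'_i \pordap ?\Theta_i$, since lowering entries to $\perp$ only shrinks w.r.t.\ $\pordap$.

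Next I assemble the \bpor\ step. Set $\theta^* = (\biguplus_i ?\Theta'_i) \uplus \thetae$. Since each $\Theta'_i$ is compressible, this is a legal \bpor\ instantiation. It remains to check that the instantiated right-hand side premise $r\theta^* \cltop t$ still holds in \bpcrwl. Because $\theta^*(X) \ordap \theta(X)$ for every $X \in \mivar(r)$ — on the single relevant variable of each $p_i$ the value is unchanged, on extra variables $\thetae$ is unchanged, and on variables of $p_i$ not appearing in $r$ the change is irrelevant to $r$ — we have, via the monotonicity of $\icsus$ for \bpcrwl\ (part 4 of the \bpcrwl\ analogue of Proposition~\ref{LMonSusPlural}), or more directly via $\theta^* \pordap \theta$ restricted to $\mivar(r)$, that $\denbp{r\theta} \subseteq \denbp{r\theta^*}$ is the wrong direction — so instead I observe $\theta^* \pordap \theta$ only fails on irrelevant variables, hence $r\theta^* \equiv r\theta'$ for a $\theta' \pordap \theta$ that agrees with $\theta^*$ on $\mivar(r)$, and then $\denbp{r\theta'} \subseteq \denbp{r\theta}$ would again be backwards; the correct move is that $r\theta^*$ and $r\theta$ differ only by $\theta^* \pordap \theta$ on $\mivar(r)$, giving $\denbp{r\theta^*} \subseteq \denbp{r\theta} \ni t$ — wait, that is the inclusion I want: since $t \in \denbp{r\theta}$ by IH I need $t \in \denbp{r\theta^*}$, i.e.\ I need $\theta \pordap \theta^*$ on $\mivar(r)$, which is FALSE in general. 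The resolution, and the genuinely delicate point of the whole argument, is that on $\mivar(r)$ the substitutions $\theta$ and $\theta^*$ actually \emph{agree}: the only variable of $p_i$ that can occur in $r$ is $X_i$, and on $X_i$ we did not change the value ($?\Theta'_i(X_i) = X_i\theta'_{i1} ? \cdots ? X_i\theta'_{im_i} = X_i\theta_{i1} ? \cdots = ?\Theta_i(X_i)$), and on extra variables $\thetae$ is untouched; hence $r\theta^* \equiv r\theta$ syntactically (modulo the usual renaming conventions), so the premise $r\theta^* \cltop t$ is literally the IH premise. Thus the \bpor\ step goes through and $\prog \vdbp e \cltop t$, completing the induction.

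The main obstacle is exactly the bookkeeping in the last paragraph: one must verify carefully that the surgery on the matching substitutions (lowering to $\perp$ all but the one shared variable per argument) (i) produces compressible sets — this is where $\classAB$ is used and where Lemma~\ref{lemmaCompress1}'s simplified criterion makes the check painless; (ii) is justified premise-by-premise by the polarity of \bpcrwl; and (iii) leaves $r\theta$ unchanged, so no new obligation on the right-hand side is created. A secondary subtlety is the interaction with extra variables, but as noted in the text preceding the theorem, $\thetae$ is orthogonal to the matching machinery and is carried along identically, so it requires no special treatment. I would also state explicitly at the outset that we may assume, by the usual variable-renaming conventions on program rules, that $\mivar(p_i) \cap \mivar(p_j) = \emptyset$ for $i \neq j$ and that extra variables are disjoint from all $\mivar(p_i)$, so that the disjoint unions $\biguplus_i ?\Theta'_i \uplus \thetae$ are well defined.
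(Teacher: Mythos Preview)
Your strategy is correct and matches the argument the paper sketches in the discussion immediately preceding the theorem: for each argument $p_i$, lower to $\perp$ every variable of $p_i$ that does not occur in $r$, use polarity of \bpcrwl\ to recover the argument premises for the lowered substitutions $\theta'_{ij}$, observe that the resulting $\Theta'_i$ is compressible because it varies on at most one coordinate (this is precisely where $\prog \in \classAB$ is used), and then reuse the right-hand side premise. The induction structure and the treatment of extra variables are fine.

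There is one small slip. Your claim that $r\theta^* \equiv r\theta$ holds \emph{syntactically} is not correct in general: the operator $?$ is applied to $\Theta'_i$ as a \emph{set}, and after lowering, two distinct $\theta_{ij}, \theta_{ij'} \in \Theta_i$ that agree on $X_i$ but differ on some other variable of $p_i$ collapse to the same element of $\Theta'_i$. Hence $?\Theta'_i(X_i)$ may have strictly fewer disjuncts than $?\Theta_i(X_i)$ (for instance $a$ versus $a~?~a$), so your chain $?\Theta'_i(X_i) = X_i\theta'_{i1}~?~\cdots~?~X_i\theta'_{im_i} = ?\Theta_i(X_i)$ breaks down. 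What does hold is that the \emph{set} of disjuncts of $?\Theta'_i(X_i)$ and of $?\Theta_i(X_i)$ coincide, so $\theta|_{\mivar(r)} \pordap \theta^*|_{\mivar(r)}$; the monotonicity of $\icsus$ for \bpcrwl\ (Theorem~\ref{ThPropsBetaPlHeredadas}, the \bpcrwl\ version of Proposition~\ref{LMonSusPlural}.4) then yields $\denbp{r\theta} \subseteq \denbp{r\theta^*}$, which is exactly the direction you need to feed $r\theta^* \cltop t$ into the \bpor\ step. With this repair the proof is complete. As a side remark, the disjointness $\mivar(p_i) \cap \mivar(p_j) = \emptyset$ that you propose to assume by convention is already guaranteed by the left-linearity built into the definition of a constructor system.
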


This equivalence between \apcrwl\ and \bpcrwl\ 
will be very useful for us for several reasons. First of all, as we will see in Section \ref{sec:simulAPlural}, \apcrwl\ can be simulated by term rewriting through a simple program transformation, which implies that the same transformation can be used to simulate \bpcrwl\ for the class of programs $\classAB$, thanks to the equivalence from Theorem \ref{ThEquivABPClass}. On the other hand the class $\classAB$ is defined by a simple syntactic criterion, which allows its application to mechanized program analysis. 
Finally, this equivalence 
grows in importance after realising that 
the class $\classAB$ contains many relevant programs: as a matter of fact all the programs considered in Section \ref{sec:examples}---where we explore the expressive capabilities of our plural semantics---belong to the class $\classAB$.

\subsection{Simulating plural semantics with term rewriting}\label{sec:simulAPlural} 


In \cite{lrs07,lsr09,LMRS10LetRwTPLP} it was shown that neither \crwl\ can be simulated by term rewriting with a simple program transformation, nor vice versa. Nevertheless, $\apcrwl$ can be simulated by term rewriting using the transformation presented in the current section, which 
can then be used as the basis for a first implementation of \apcrwl. 
First we will present a naive version of this transformation, and show its adequacy; later we will propose some simple optimizations for it.
%

\emph{In this section we will restrict ourselves to programs not containing extra variables}, i.e., 
such that for any 
program rule $l \tor r$ 
we have that $\mivar(r) \subseteq \mivar(l)$ holds, a restriction usually adopted in texts devoted to term rewriting systems \cite{baader-nipkow,Terese03} for which term rewriting with extra variables is normally considered as an extension of standard term rewriting. Besides, in practical implementations extra variables are usually handled by using narrowing \cite{LS99,Han06curry} or additional conditions to restrict their possible instantiations \cite{maude-book}, in order to avoid 
a state space explosion in the search process. Therefore we leave the extension of our work to completely deal with extra variables as a subject of future work.

\subsubsection{A simple transformation}\label{subsect:simpletrans}
The main idea in our transformation is to postpone the pattern matching process in order to prevent an early resolution of non-determinism. Instead of presenting the transformation directly, we will first illustrate this concept by applying the transformation over the program $\prog = \{f(c(X)) \tor d(X, X)\}$ from Example \ref{EIntro1}, which results in the following program $\progh$. 
\begin{center}
\scalebox{0.91}{
$
\begin{array}{ll}
\progh = \{ & f(Y) \tor \mi{if}~\match(Y)~\thn~d(\prj(Y), \prj(Y)),\\
& \match(c(X)) \tor \tru, \prj(c(X)) \tor X ~~\} \\ 
\end{array}
$
}
\end{center}
In the resulting program $\progh$ the only rule for function $f$ has been transformed so matching is transferred from the left-hand side to the right-hand side of the rule, by means of the auxiliary functions $\match$ and $\prj$. As a consequence, when we evaluate by term rewriting under $\progh$ the function call to $f$, in the expression $f(c(0)~?~c(1))$ we are not forced anymore to solve the non-deterministic choice between $c(0)$ and $c(1)$ before parameter passing, because any expression matches the variable pattern $Y$. Therefore the term rewriting step
\begin{center}
\scalebox{0.91}{
$
f(c(0)~?~c(1)) \rw \mi{if}~\match(c(0)~?~c(1))~\thn~d(\prj(c(0)~?~c(1)), \prj(c(0)~?~c(1)))
$
}
\end{center}
is sound, thus replicating the argument of $f$ freely without demanding any evaluation, this way keeping its \apcrwl-denotation untouched: this is the key to achieve completeness w.r.t.\ \apcrwl. Note that the guard $\mi{if}~\match(c(0)~?~c(1))$ is needed to ensure that at least one of the values of the argument matches the original pattern, otherwise the soundness of the step could not be granted. For example if we drop this condition in the translation of the rule `$\mi{null}(\mi{nil}) \tor \tru$' for defining an emptiness test for the classical representation of lists in functional programming, we would get `$\mi{null}(Y) \tor \tru$', which is clearly unsound because it allows us to rewrite $null(cons(0, nil))$ into $\tru$. 
Later on, after resolving the guard, different evaluations of the occurrences of $\prj(c(0)~?~c(1))$ will solve the non-deterministic choice implied by $?$, and project the argument of $c$, thus leading us to the final values $d(0,0)$, $d(1,1)$, $d(0,1)$, and $d(1,0)$, which are the expected values for the expression in the original program under $\apcrwl$. 

In the following definition we formalize the transformation by means of the function $pST$, which for any program rule returns a rule to replace it, and a set of auxiliary $\match$ and $\prj$ rules for the replacement.  
\begin{definition}[\apcrwl\ to term rewriting transformation, simple version]
\label{DefPstNaive}
Given a program $\prog$, our transformation proceeds rule by rule. For every program rule $(f(p_1, \ldots, p_n) \tor r) \in \prog$ such that $f \not\in \{?, \mi{if}\,\thn\,\}$ we define its transformation as: \\[.2cm]
$
\begin{array}{l}
\pst{f(p_1, \ldots, p_n) \tor r} \\
 ~~~~~
= f(Y_1, \ldots, Y_n) \tor \mi{if}~\match(Y_1, \ldots, Y_n)~\thn~r[\overline{X_{i j}/\prj_{i j}(Y_i)}]
\end{array}
$\\[.2cm]
where\\
- $\forall i \in \{1, \ldots, n\}$, $\{X_{i1}, \ldots, X_{ik_i}\} = \mivar(p_i) \cap \mivar(r)$ and $Y_i \in \var$ is fresh.\\
- $\match \in \FS^{n}$ is a fresh function defined by the rule $\match(p_1, \ldots, p_n) \tor \tru$. \\
- Each $\prj_{ij} \in \FS^1$ 
is a fresh symbol defined by the single rule $\prj_{ij}(p_i) \tor X_{ij}$.\\[.2cm]
For $f \in \{?, \mi{if}\,\thn\,\}$ the transformation leaves its rules untouched.
\end{definition}

It is easy to check that if we use the program $\prog$ from Example \ref{EIntro1} as input for this transformation then it outputs the program $\progh$ from the discussion above, under which we can perform the following term rewriting derivation. 

\begin{small}
$$
\begin{array}{l}
\underline{f(c(0)?c(1))} 
\rw \mi{if}~\underline{\match(c(0)?c(1))}~\thn~d(\prj(c(0)?c(1)), \prj(c(0)?c(1))) \\
\rw^* \underline{\mi{if}~\tru~\thn~d(\prj(c(0)?c(1)), \prj(c(0)?c(1)))}\\
\rw d(\prj(\underline{c(0)?c(1)}), \prj(\underline{c(0)?c(1)}))
\rw^* d(\underline{\prj(c(0))}, \underline{\prj(c(1))}) 
\rw^* d(0, 1)
\end{array}
$$
\end{small}

We do not only claim that this transformation is sound, but also have technical results about the strong adequacy of our transformation $\pst{\_}$ for simulating the \apcrwl\ logic using term rewriting. The first one is a soundness result, stating that if we rewrite an expression under the transformed program then we cannot get more results that those we can get in \apcrwl\ under the original program. 
\begin{theorem}\label{TPstSound}
For any \crwl-program $\prog$, and any $e \in \Exp_\perp$ built up 
on the signature of $\prog$, we have 
$$\denapp{e}{\pst{\prog}} \subseteq \denapp{e}{\prog}$$
As a consequence $\denrp{e}{\pst{\prog}} \subseteq \denapp{e}{\prog}$.
\end{theorem}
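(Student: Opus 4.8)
The plan is to prove the inclusion by transforming every \apcrwl-derivation $\pst{\prog} \vdap e \cltop t$ into an \apcrwl-derivation $\prog \vdap e \cltop t$; the ``as a consequence'' part then follows immediately, since $\pst{\prog}$ is by Definition~\ref{DefPstNaive} a constructor system without extra variables, so $\denrp{e}{\pst{\prog}} \subseteq \denapp{e}{\pst{\prog}}$ by Corollary~\ref{CHierChain} applied to $\pst{\prog}$, which we then chain with the main inclusion. The transformation of derivations proceeds by well-founded induction on the size of the starting \apcrwl-proof. Since a derivation under $\pst{\prog}$ wanders into subterms built with the fresh symbols $\match$ and $\prj_{ij}$ --- absent from the signature of $\prog$ --- the induction hypothesis must be stated for expressions of the shape $e_0[\overline{X_{ij}/\prj_{ij}(w_i)}]$ with $e_0$ over the signature of $\prog$ and each $w_i$ a $?$-combination of partial c-terms; these are exactly the shapes that arise, because $\crule{\apor}$ binds pattern variables only to partial c-terms and the operator $?$ produces only $?$-combinations of such. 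I would accompany it with two decoding lemmas: (a) if $\pst{\prog} \vdap \match(\overline{u}) \cltop \tru$ for the $\match$ symbol of a rule $f(\overline{p}) \tor r \in \prog$, then for each $i$ there is a non-empty finite $\Theta_i \subseteq \CSubst_\perp$ with $\mi{dom}(\theta) \subseteq \mi{var}(p_i)$ and $\prog \vdap u_i \cltop p_i\theta$ for all $\theta \in \Theta_i$; and (b) if $\pst{\prog} \vdap \prj_{ij}(u) \cltop v$ then there is $\mu \in \CSubst_\perp$ with $\mi{dom}(\mu) \subseteq \mi{var}(p_i)$, $\prog \vdap u \cltop p_i\mu$, and $v \ordap \mu(X_{ij})$. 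Both reduce to the analysis of the argument premises of the single $\crule{\apor}$ step applied to the one-rule definitions of $\match$ and $\prj_{ij}$, together with the identity $\denap{e_1~?~e_2} = \denap{e_1} \cup \denap{e_2}$, which for a $?$-combination $u$ of partial c-terms collapses any statement $u \cltop s$ (with $s$ a partial c-term) to the condition that $s$ is $\ordap$-below one of the disjuncts of $u$.

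For the induction itself, the cases $\crule{RR}$, $\crule{DC}$, $\crule{B}$ and the $\crule{\apor}$ cases for $?$ and $\mi{if}\,\thn\,$ are routine: those rules are literally identical in $\prog$ and $\pst{\prog}$, so it suffices to apply the induction hypothesis to the strictly smaller premises (and the constructor unfoldings inside (a) and (b) bottom out the same way). The only substantial case is an $\crule{\apor}$ step under $\pst{\prog}$ using a transformed rule $f(\overline{Y}) \tor \mi{if}~\match(\overline{Y})~\thn~r[\overline{X_{ij}/\prj_{ij}(Y_i)}]$ arising from $f(p_1, \ldots, p_n) \tor r \in \prog$. As the patterns $Y_i$ of the transformed rule are fresh variables, its argument premises read $e_i \cltop s_{ij}$ with $s_{ij} \in \CTerm_\perp$, and the combined substitution sends $Y_i$ to $\hat{e}_i := s_{i1} ~?~ \ldots ~?~ s_{i m_i}$; the induction hypothesis turns each $e_i \cltop s_{ij}$ into a derivation under $\prog$. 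The remaining premise is $\mi{if}~\match(\hat{e}_1, \ldots, \hat{e}_n)~\thn~r\rho \cltop t$ with $\rho = [\overline{X_{ij}/\prj_{ij}(\hat{e}_i)}]$, which is itself an $\crule{\apor}$ step on the unchanged rule for $\mi{if}\,\thn\,$; unwinding it, and using the $?$-identity together with polarity (Proposition~\ref{LMonPlural}) to collapse the disjunction that $?\Theta$ introduces on the then-branch, yields --- with strictly smaller proofs --- both $\pst{\prog} \vdap \match(\hat{e}_1, \ldots, \hat{e}_n) \cltop \tru$ and $\pst{\prog} \vdap r\rho \cltop t$.

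From here I would rebuild a single $\crule{\apor}$ step under $\prog$ for the rule $f(\overline{p}) \tor r$. For each argument $i$, take $\Theta_i$ to be the union of the c-substitution set furnished by decoding (a) on $\match(\hat{e}_1, \ldots, \hat{e}_n) \cltop \tru$ (this keeps $\Theta_i$ non-empty even when no variable of $p_i$ occurs in $r$) with the c-substitutions $\mu$ furnished by decoding (b) at every occurrence of a subterm $\prj_{ij}(\hat{e}_i)$ that is actually reduced inside the derivation of $r\rho \cltop t$. Each $\mu \in \Theta_i$ satisfies $\prog \vdap \hat{e}_i \cltop p_i\mu$; since $\hat{e}_i$ is a $?$-combination of the $s_{ij}$ and $p_i\mu$ is a partial c-term, the $?$-identity gives some $s_{iq}$ with $p_i\mu \ordap s_{iq}$, and combining $\prog \vdap e_i \cltop s_{iq}$ with polarity yields $\prog \vdap e_i \cltop p_i\mu$, so the argument premises of the new step are in hand. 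For the right-hand side premise, set $\theta := \biguplus_{i=1}^{n} ?\Theta_i \in \icsus$ and observe that by decoding (b) every value $v$ reached by an occurrence of $\prj_{ij}(\hat{e}_i)$ inside the derivation of $r\rho \cltop t$ is $\ordap \mu(X_{ij})$ for some $\mu \in \Theta_i$, hence $\prog \vdap \theta(X_{ij}) \cltop v$ because $\mu(X_{ij})$ is one of the disjuncts of $\theta(X_{ij})$; feeding this domination into the monotonicity of \apcrwl\ for substitutions (Proposition~\ref{LMonSusPlural}), equivalently into the compositionality of \apcrwl\ (Theorem~\ref{lCompPS}) applied at the variable occurrences of $r$, converts $\pst{\prog} \vdap r\rho \cltop t$ into $\prog \vdap r\theta \cltop t$. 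Assembling the argument premises with this last premise gives the step $\prog \vdap f(e_1, \ldots, e_n) \cltop t$, closing the case.

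The delicate part --- and the main obstacle I expect --- is the plurality bookkeeping in the last step: in the derivation under $\pst{\prog}$ the several occurrences of a single $X_{ij}$ in $r$ have become several occurrences of $\prj_{ij}(\hat{e}_i)$ that may reduce to different values, so one must gather them all into one $\Theta_i$, check that the globally combined $?\Theta_i$ remains powerful enough to replay every local reduction, and simultaneously certify $\prog \vdap e_i \cltop p_i\mu$ for each contributing $\mu$ --- the $\mi{if}~\match(\ldots)$ guard being precisely what guarantees a coherent family of matching substitutions exists for all arguments at once. Getting the generalized induction hypothesis right (so that the class of admissible expressions is closed under the $\crule{DC}$, $?$, $\mi{if}\,\thn\,$ and transformed-rule steps reached while processing $r\rho$) and checking that the proof-size measure strictly decreases across the unwinding of the $\mi{if}\,\thn\,$ and $\prj_{ij}$ steps is where the real work lies; once that is done, the ``as a consequence'' clause is immediate as noted at the outset.
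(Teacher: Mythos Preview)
Your proposal is correct and follows essentially the same approach as the paper: induction on the size of the starting \apcrwl-proof to transform derivations under $\pst{\prog}$ into derivations under $\prog$, with the ``as a consequence'' clause obtained by chaining Corollary~\ref{CHierChain} (applied to $\pst{\prog}$) with the main inclusion. The paper's proof sketch gives only the high-level strategy and defers the details to the companion technical report; your proposal fills in considerably more of those details --- the need to generalize the induction hypothesis to expressions carrying the fresh $\prj_{ij}$ symbols, the decoding lemmas for $\match$ and $\prj_{ij}$, the unwinding of the $\mi{if}\,\thn\,$ guard, and the gathering of the contributing c-substitutions into a single $\Theta_i$ --- all of which are the natural ingredients one would expect in a full proof along the paper's line.
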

\begin{proof}[Proof (sketch)]
The first part states the soundness within \apcrwl\ of the transformation. Assuming a \apcrwl-proof for a statement $\pst{\prog} \vdap e \clto t$ for some $t \in \CTerm_\perp$, we can then build another \apcrwl-proof for $\prog \vdap e \clto t$, by induction on the size of the starting proof---measured as the number of rules of \apcrwl\ used. Full details for that proof can be found in \refap{PROOFTPstSound}. 

Concerning the second part, it follows from combining the first part with Corollary \ref{CHierChain}, because then we can chain $\denrp{e}{\pst{\prog}} \subseteq \denapp{e}{\pst{\prog}} \subseteq \denapp{e}{\prog}$.
\end{proof}

~\\
Regarding completeness of the transformation we have obtained the following result stating that, for any expression one can build in the original program, we can refine by term rewriting under the transformed program any value computed for that expression by \apcrwl\ under the original program. 
\begin{theorem}\label{lCompPstAlt}
For any \crwl-program $\prog$, and any $e \in \Exp, t \in \CTerm_\perp$ built up 
on the signature of $\prog$, if $\prog \pps e \cltop t$ then exists some $e' \in \Exp$ built using symbols of the signature of $\pst{\prog}$ such that $\pst{\prog} \vdash e\rw^* e'$ and $t \ordap |e'|$. 
In other words, $\denapp{e}{\prog} \subseteq \denrp{e}{\pst{\prog}}$. 
\end{theorem}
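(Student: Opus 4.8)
The plan is to proceed by induction on the size of the \apcrwl-proof of $\prog \pps e \cltop t$, where the size is the number of rule applications, following exactly the structure of the \apcrwl\ calculus in Figure \ref{fig:apcrwl}. The statement needs a slight strengthening to make the induction go through: rather than fixing $e \in \Exp$, I would prove simultaneously that for any $e \in \Exp$ built on the signature of $\prog$ and any $t \in \CTerm_\perp$, if $\prog \pps e \cltop t$ then there is $e' \in \Exp$ over the signature of $\pst{\prog}$ with $\pst{\prog} \vdash e \rw^* e'$ and $t \ordap |e'|$. The cases for $\crule{RR}$, $\crule{B}$, and $\crule{DC}$ are routine: for $\crule{RR}$ and $\crule{B}$ take $e' \equiv e$ (note $|\perp| = \perp \ordap t$ trivially when $t = \perp$, and $|X| = X$); for $\crule{DC}$, $e \equiv c(e_1, \ldots, e_n)$ and $t \equiv c(t_1, \ldots, t_n)$, apply the induction hypothesis to each premise $e_i \cltop t_i$ getting $e_i \rw^* e'_i$ with $t_i \ordap |e'_i|$, and assemble $e' \equiv c(e'_1, \ldots, e'_n)$, using that $\rw$ is closed under contexts and that $|c(e'_1,\ldots,e'_n)| = c(|e'_1|,\ldots,|e'_n|)$.

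The heart of the argument is the $\crule{\apor}$ case. Here $e \equiv f(e_1, \ldots, e_n)$ with $f \notin \{?, \mi{if}\,\thn\,\}$ (the cases $f \in \{?, \mi{if}\,\thn\,\}$ are handled directly, since their rules are untouched by the transformation and term rewriting reproduces the obvious \apor\ steps), there is a rule $f(p_1, \ldots, p_n) \tor r \in \prog$, sets $\Theta_i = \{\theta_{i1}, \ldots, \theta_{i m_i}\}$ of partial c-substitutions with $\mi{dom}(\theta_{ij}) \subseteq \mivar(p_i)$, subproofs $e_i \cltop p_i \theta_{ij}$ for each $i,j$, and a subproof $r\theta \cltop t$ for $\theta = \biguplus_i ?\Theta_i$ (no extra variables by the standing assumption of this section, so $\thetae$ is empty). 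The transformed rule is $f(Y_1, \ldots, Y_n) \tor \mi{if}~\match(Y_1,\ldots,Y_n)~\thn~r[\overline{X_{ij}/\prj_{ij}(Y_i)}]$. The plan is: first rewrite at the root, $f(e_1,\ldots,e_n) \rw \mi{if}~\match(e_1,\ldots,e_n)~\thn~r[\overline{X_{ij}/\prj_{ij}(e_i)}]$. Then I must (a) rewrite $\match(e_1,\ldots,e_n)$ to $\tru$, and (b) show that the instantiated body $r[\overline{X_{ij}/\prj_{ij}(e_i)}]$ can be rewritten to some $e'$ with $t \ordap |e'|$. For (a), I would pick one $j$, say $j=1$ for each $i$ — we have subproofs $e_i \cltop p_i \theta_{i1}$; by soundness of \apcrwl\ w.r.t.\ term rewriting this does not immediately give a rewrite sequence, so instead I would invoke Corollary \ref{CHierChain} in the reverse direction carefully: actually the cleanest route is to first observe $p_i\theta_{i1}$ is a \emph{pattern instance}, hence a c-term, and reduce $e_i$ far enough to expose it. More precisely, I would use that $\denap{e_i} \ni p_i\theta_{i1}$ together with the induction hypothesis (the subproof $e_i \cltop p_i\theta_{i1}$ has strictly smaller size) to get $e_i \rw^* e_i''$ with $p_i\theta_{i1} \ordap |e_i''|$; then $\match(e_1'', \ldots, e_n'') \rw^* \match(p_1\sigma_1, \ldots) \rw \tru$ by matching the $p_i$ (filling the holes of $|e_i''|$ beyond the pattern is fine since we only need \emph{a} matching instance — here the right-linearity/linearity of patterns and the fact that $\match$'s rule is just $\match(\overline{p}) \tor \tru$ is what we use). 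For (b), the key computation is that in $r[\overline{X_{ij}/\prj_{ij}(Y_i)}]$ with $Y_i \mapsto e_i$, each occurrence of $\prj_{ij}(e_i)$ can independently be rewritten: using the induction hypothesis on the subproofs $e_i \cltop p_i\theta_{ij}$, each $\prj_{ij}(e_i)$ can be reduced towards $X_{ij}\theta_{ij}$ — but since $\theta = \biguplus_i ?\Theta_i$ combines \emph{all} the $\theta_{ij}$ via $?$, and $? $ is realized by the $?$ function symbol present in both programs, each $\prj_{ij}(e_i)$ can in fact be reduced to reproduce, via the $?$-rules, exactly the disjunction $\prj_{ij}$ applied after a $?$-choice, matching what $?\Theta_i$ does to $X_{ij}$. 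Thus $r[\overline{X_{ij}/\prj_{ij}(e_i)}]$ rewrites to something that subsumes (w.r.t.\ the available rewrites) $r\theta$; then I apply the induction hypothesis to the subproof $r\theta \cltop t$.

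The main obstacle will be step (b): making precise the claim that the duplicated occurrences of $\prj_{ij}(e_i)$ in the transformed body, together with the $?$-rules, can \emph{independently} reproduce the plural parameter passing $r\theta$ of the original \apor\ step. Concretely, I need a lemma of the form: for any $e_i$ with subproofs $e_i \cltop p_i\theta_{ij}$ for $j = 1,\ldots,m_i$, and any c-term $s$ with $\prog \pps X_{ij}(?\Theta_i) \cltop s$ in a subderivation, one has $\pst{\prog} \vdash \prj_{ij}(e_i) \rw^* e''$ with $s \ordap |e''|$ — i.e.\ the single subexpression $\prj_{ij}(e_i)$, being freely duplicable under term rewriting (call-by-name), can yield any value that $?\Theta_i$ assigns to $X_{ij}$. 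This is essentially "one-occurrence completeness" for the projection, and it rests on two facts: first, $\prj_{ij}$'s rule $\prj_{ij}(p_i) \tor X_{ij}$ together with how a reduction of $e_i$ to an instance of $p_i$ exposes the relevant subterm; second, that the $?$ inside $?\Theta_i$ corresponds, after reduction of $e_i$, to the ordinary non-deterministic choice $\rw$ already provides. I would isolate this as a preliminary lemma (or cite the analogous lemma from \proofs), prove it by induction on the subproof size mirroring the $\crule{DC}$/$\crule{\apor}$ structure, and then the $\crule{\apor}$ case of the main induction becomes an assembly of: one root step, the $\match$ reduction via (a), $m_i$-fold use of the projection lemma for the duplicated holes, a context-closure argument to fire all these in parallel inside $r$, and finally the induction hypothesis on $r\theta \cltop t$. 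The polarity of \apcrwl\ (Proposition \ref{LMonPlural}) and monotonicity for substitutions (Proposition \ref{LMonSusPlural}) will be used freely to pass between $p_i\theta_{ij}$ and its shell-approximations, and to combine the independently-obtained approximations of the holes into one approximation of $r\theta$.
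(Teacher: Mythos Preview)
Your overall structure---induction on the size of the \apcrwl-proof, case analysis on the last rule, and isolating \crule{\apor} as the hard case---is right, and your projection lemma is the correct local observation. But the ``assembly'' you sketch in step~(b) contains a genuine gap, and it is precisely this gap that forces the paper to take a different route.

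Here is the problem. Your induction hypothesis, as you state it, applies to $r\theta \cltop t$ and yields a rewrite $r\theta \rw^* e'$ in $\pst{\prog}$ with $t \ordap |e'|$. You then want to transport this to a rewrite starting from $r[\overline{X_{ij}/\prj_{ij}(e_i)}]$. But your projection lemma only says that a \emph{single} occurrence of $\prj_{ij}(e_i)$ can be rewritten to (an approximant of) a \emph{single} disjunct $X_{ij}\theta_{ik}$; it does \emph{not} say that $\prj_{ij}(e_i)$ can be rewritten to the whole disjunction $X_{ij}\theta \equiv X_{ij}\theta_{i1}~?~\ldots~?~X_{ij}\theta_{im_i}$. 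Hence you cannot reach $r\theta$ (or anything that ``subsumes it w.r.t.\ the available rewrites'') from $r[\overline{X_{ij}/\prj_{ij}(e_i)}]$ by firing the projection lemma at each hole. Conversely, if you try to replay the sequence $r\theta \rw^* e'$ step by step, note that an occurrence of $X_{ij}\theta$ may be \emph{duplicated} by later rule applications before its internal $?$'s are resolved, and the different copies may then be resolved to different disjuncts; so there is no straightforward bijection between holes in the two sequences on which your ``context-closure argument to fire all these in parallel'' could rest. Making that simulation precise amounts to proving a monotonicity-under-substitution property for $\denr{\cdot}$ in the transformed program---roughly, $\denrp{\prj_{ij}(e_i)}{\pst{\prog}} \supseteq \denrp{X_{ij}\theta}{\pst{\prog}}$ implies $\denrp{r[\overline{X_{ij}/\prj_{ij}(e_i)}]}{\pst{\prog}} \supseteq \denrp{r\theta}{\pst{\prog}}$---which fails for term rewriting in general (that is exactly the non-compositionality of run-time choice from Examples~\ref{EIntro1}--\ref{EIntro2}), and which you have neither stated nor argued holds for the particular shape of $\pst{\prog}$.

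The paper avoids this by a \emph{different} strengthening of the induction hypothesis. Your strengthening (``for any $e$ over the signature of $\prog$'') is actually no strengthening at all---the theorem already quantifies over $e$. The paper instead proves Lemma~\ref{lAuxCompPstAlt2}, which allows $e$ to range over the signature of $\prog \uplus \progm$, i.e.\ expressions that may already contain the auxiliary $\match$ and $\prj$ symbols, and which works from \apcrwl-proofs in $\prog \uplus \progm$. With that in hand, after the root step one can bring the expression $r[\overline{X_{ij}/\prj_{ij}(e_i)}]$ \emph{itself} under the scope of the induction, rather than detouring through $r\theta$ and then trying to bridge back by a simulation argument. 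The paper explicitly flags that even with this generalization the proof ``is pretty complicated and relies on several auxiliary notions''; your plan underestimates how much work remains once the root step has been taken.
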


The proof for this result is technically very involved. First of all we have to slightly generalize Theorem \ref{lCompPstAlt} to consider not only the functions of the original program but also the auxiliary $\match$ and $\prj$ functions generated by the transformation, in order to obtain strong enough induction hypothesis.

\begin{lemma}\label{lAuxCompPstAlt2}
Given a \crwl-program $\prog$ let $\progh \uplus \progm = \pst{\prog}$, where $\progm$ is the set containing the rules for
the new functions $\match$ and $\prj$, and $\progh$ contains the new versions of the original rules of $\prog$---note that by an abuse of notation the rules for $?, \mi{if}\,\thn\,$ presented in Section \ref{sec:cterms_intro} belong implictly both to $\prog \uplus \progm $ and $\progh \uplus \progm$. 

Then for any $e \in \Exp_\perp, t \in \CTerm_\perp$ constructed using just symbols in the signature of $\prog \uplus \progm$ we have $\prog \uplus \progm \pps e \cltop t$ implies $\progh \uplus \progm \vdash e \rw^* e'$ such that $t \ordap |e'|$.
\end{lemma}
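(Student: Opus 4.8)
\noindent\emph{Proof plan.} I would prove this by induction on the size of the $\apcrwl$-proof of $\prog \uplus \progm \vdap e \cltop t$, but first generalizing the statement so that the induction hypothesis is strong enough to cope with the substitutions introduced by the $\crule{\apor}$ rule. The strengthened claim is: for every $e \in \Exp_\perp$ over the (common) signature of $\prog \uplus \progm$ and $\progh \uplus \progm$, every $\theta \in \icsus$, and every family $\{F_Z\}$ of expressions indexed by the variables $Z$ of $e$ lying in $\mi{dom}(\theta)$, such that writing $\theta(Z) \equiv u_{Z,1}~?~\ldots~?~u_{Z,n_Z}$ one has, for all $Z$ and all $k$, a reduct $\progh \uplus \progm \vdash F_Z \rw^* g$ with $u_{Z,k} \ordap |g|$ --- if $\prog \uplus \progm \vdap e\theta \cltop t$ then $\progh \uplus \progm \vdash e[\overline{Z/F_Z}] \rw^* e'$ for some $e'$ with $t \ordap |e'|$. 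Lemma~\ref{lAuxCompPstAlt2} is then the particular case $\theta = \epsilon$, and Theorem~\ref{lCompPstAlt} is obtained from it by restricting to expressions over the original signature.

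The cases in which the last inference is $\crule{RR}$ or $\crule{B}$ are immediate (for $\crule{B}$, $\perp$ approximates every shell; for $\crule{RR}$, if $e$ is a variable outside $\mi{dom}(\theta)$ then $e[\overline{Z/F_Z}] \equiv e$ needs no reduction). The case $e \equiv Z \in \mi{dom}(\theta)$ is settled directly by the hypothesis on $F_Z$: using that $\denap{v_1~?~\ldots~?~v_m} = \bigcup_i \denap{v_i}$ and that, for a c-term $v$, $\denap{v} = \{s \in \CTerm_\perp \mid s \ordap v\}$ (both routine facts already invoked in the paper), $t$ approximates some disjunct $u_{Z,k}$ of $\theta(Z)$, so the reduct $g$ associated to $F_Z$ and $k$ does the job since $t \ordap u_{Z,k} \ordap |g|$. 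The $\crule{DC}$ case follows from the congruence of $\ordap$ and the definition of the shell on constructor-rooted expressions. When the last rule is $\crule{\apor}$ with head symbol among $?$, $\mi{if}\,\thn\,$, $\match$, or one of the $\prj_{il}$ --- whose defining rules are literally the same in both programs --- one again appeals to the induction hypothesis on the premises $e_i\theta \cltop p_i\theta_{ij}$ and $r\theta^{\star} \cltop t$, together with the two denotation facts above and the polarity of $\apcrwl$ (Proposition~\ref{LMonPlural}); for $\match$ it suffices to fire one family of argument premises so as to expose, in each argument, a syntactic instance of the pattern $p_i$ (here left-linearity of the rule's left-hand side is used).

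The heart of the argument is the $\crule{\apor}$ case for an original function $f \in \FS^\prog$, whose rule $f(p_1,\ldots,p_n)\tor r$ of $\prog$ is replaced in $\progh$ by $f(Y_1,\ldots,Y_n) \tor \mi{if}~\match(Y_1,\ldots,Y_n)~\thn~r[\overline{X_{il}/\prj_{il}(Y_i)}]$, where the $X_{il}$ range over $\mi{var}(p_i)\cap\mi{var}(r)$. Writing $\hat e_i \equiv e_i[\overline{Z/F_Z}]$, the plan is: (i)~fire the transformed rule, obtaining $f(\hat e_1,\ldots,\hat e_n) \rw \mi{if}~\match(\hat e_1,\ldots,\hat e_n)~\thn~r[\overline{X_{il}/\prj_{il}(\hat e_i)}]$; (ii)~apply the induction hypothesis to the premises $e_i\theta \cltop p_i\theta_{i1}$ to get reducts $\hat e_i \rw^* e'_i$ whose shells have $p_i\theta_{i1}$ as a lower bound w.r.t.\ $\ordap$; since such a shell carries all the constructor structure of the linear pattern $p_i$, each $e'_i$ is a syntactic instance of $p_i$, so $\match(\hat e_1,\ldots,\hat e_n) \rw^* \tru$ and hence $f(\hat e_1,\ldots,\hat e_n) \rw^* r[\overline{X_{il}/\prj_{il}(\hat e_i)}]$; (iii)~reduce this last term by invoking the \emph{generalized} induction hypothesis on the premise $r\theta^{\star} \cltop t$, where $\theta^{\star}$ is the combined matching substitution of the $\crule{\apor}$ step (there are no extra variables in this section), taking as auxiliary family $F'_{X_{il}} \equiv \prj_{il}(\hat e_i)$. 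Its hypothesis is met because every disjunct of $\theta^{\star}(X_{il})$ is either some $\theta_{ij}(X_{il})$ or $X_{il}$ itself (the ``prefix'' clause in the definition of the $?$ operator), and the induction hypothesis applied to the matching premise $e_i\theta \cltop p_i\theta_{ij}$ yields a reduct $\hat e_i \rw^* p_i\mu$ with $\theta_{ij}(X_{il}) \ordap |\mu(X_{il})|$ (resp.\ $\mu(X_{il}) \equiv X_{il}$), so that $\prj_{il}(\hat e_i) \rw^* \prj_{il}(p_i\mu) \rw \mu(X_{il})$ supplies the required reduct. Chaining (i)--(iii) and composing the approximation relations closes the case.

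The step I expect to be the main obstacle is (iii): it is precisely there that the several independent term-rewriting reductions of the syntactic copies of $\prj_{il}(\hat e_i)$ created by the transformation must realize, occurrence by occurrence inside $r$, the various disjuncts of $\theta^{\star}(X_{il})$ that the original $\apcrwl$-proof of $r\theta^{\star} \cltop t$ selected at each occurrence --- i.e.\ this is where run-time choice on the transformed program mimics the plurality of $\apcrwl$. This is also why the statement has to be strengthened so as to thread the $\icsus$-substitution $\theta$ and the family $\{F_Z\}$ through the induction; Lemma~\ref{lAuxCompPstAlt2} as literally stated does not yield a usable induction hypothesis for the premise $r\theta^{\star} \cltop t$.
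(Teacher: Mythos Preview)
Your plan is sound and, as far as one can tell, matches the shape the paper's argument must take. Note, however, that the paper itself does not give a proof of this lemma: it only states that the argument ``is pretty complicated and it relies on several auxiliary notions,'' deferring the details to the companion technical report. So a line-by-line comparison with the paper is not possible from this document alone.

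That said, your strengthening---threading a $\theta \in \icsus$ together with a family $\{F_Z\}$ of ``realizing'' expressions through the induction---is exactly the kind of auxiliary notion one needs here, and is almost certainly what the report does (or something interchangeable with it). The crucial observation you make explicit is correct: the literal statement of the lemma is too weak to serve as its own induction hypothesis, because the $\crule{\apor}$ premise $r\theta^\star \cltop t$ involves a substitution $\theta^\star \in \icsus$ that was not present at the outset, and the transformed right-hand side $r[\overline{X_{il}/\prj_{il}(\hat e_i)}]$ is not of the form $r'$ for any $r'$ over the original signature instantiated by $\theta^\star$.

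A few points worth tightening when you write this up in full:
\begin{itemize}
\item In the case analysis you are really branching on the shape of $e$, not only on the last rule: when the last rule is $\crule{\apor}$ and $e \equiv Z \in \mi{dom}(\theta)$, the head symbol of $e\theta$ is necessarily $?$ (since $\theta \in \icsus$), so this is absorbed by your variable case; otherwise $e$ is genuinely $f$-rooted. You have this right, but make the split explicit.
\item In step~(ii) you need that $p_i\theta_{i1} \ordap |e'_i|$ forces $e'_i$ to be a syntactic instance of the \emph{total} linear pattern $p_i$. This is true because the constructor skeleton of $p_i$ is fully present in $p_i\theta_{i1}$ (only variable positions may become $\perp$), hence in $|e'_i|$, hence in $e'_i$; left-linearity then yields the matching substitution $\mu$. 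State this once as a small auxiliary fact.
\item For the ``unchanged'' symbols $?$, $\mi{if}\,\thn\,$, $\match$, $\prj_{il}$ you still need to discharge the body premise via the induction hypothesis, not merely the argument premises; for $?$ in particular, after firing the rule in $\progh \uplus \progm$ you reduce to the variable-disjunct situation and use the $\{F_Z\}$ hypothesis. Your sketch says this, but the wording ``one again appeals to the induction hypothesis on the premises'' could be read as using only the argument premises.
\item The lemma is stated for $e \in \Exp_\perp$, so rewriting in $\progh \uplus \progm$ is implicitly extended to partial expressions by treating $\perp$ as an inert constant. This is harmless but should be said.
\end{itemize}

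None of these is a gap in the argument; they are just places where a referee would ask for one extra sentence. Your identification of step~(iii) as the heart of the proof---where run-time choice on the independent copies of $\prj_{il}(\hat e_i)$ simulates the plurality of $\apcrwl$---is exactly right.
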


The proof for Lemma \ref{lAuxCompPstAlt2}  is pretty complicated and it relies on several auxiliary notions, a fully detailed proof can be found in \proofs. 
Then Theorem \ref{lCompPstAlt} follows as an almost trivial consequence of Lemma \ref{lAuxCompPstAlt2}. 
\begin{proof}[Proof for Theorem \ref{lCompPstAlt}]
Let $\progh \uplus \progm = \pst{\prog}$ be, where $\progm$ is the set containing the rules for the new functions $\match$ and $\prj$, and $\progh$ contains the new versions of the original rules of $\prog$. 

If $e \in \Exp, t \in \CTerm_\perp$ are built using symbols on the signature of $\prog$, then $\prog \pps e \cltop t$ implies $\prog \uplus \progm \pps e \cltop t$, which implies $\progh \uplus \progm \vdash e \rw^* e'$ such that $t \ordap |e'|$ by Lemma \ref{lAuxCompPstAlt2}, that is, $\pst{\prog} \vdash e \rw^* e'$.
\end{proof}

To conclude, the following corollary summarizes the adequacy of the simulation performed by our program transformation. 
\begin{corollary}[Adequacy of $\pst{\_}$ for simulating \apcrwl]\label{CorPstAdeq}
For any program $\prog$, $e \in \Exp$ built using symbols of the signature of $\prog$ 
$$
\denapp{e}{\prog} = \denrp{e}{\pst{\prog}}
$$
Hence $\forall t \in \CTerm$ we have that $\prog \pps e \cltop t$ iff $\pst{\prog} \vdash e \rw^* t$.
\end{corollary}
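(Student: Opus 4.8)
The plan is to read off the equality $\denapp{e}{\prog} = \denrp{e}{\pst{\prog}}$ directly from the two adequacy results already established, and then to extract the final ``iff'' on total c-terms by a short argument about shells. Concretely, since $e$ is built over the signature of $\prog$, Theorem~\ref{lCompPstAlt} applies and gives the inclusion $\denapp{e}{\prog} \subseteq \denrp{e}{\pst{\prog}}$; the reverse inclusion $\denrp{e}{\pst{\prog}} \subseteq \denapp{e}{\prog}$ is precisely the second assertion of Theorem~\ref{TPstSound} (which is itself obtained by chaining its first part with Corollary~\ref{CHierChain}). Putting the two inclusions together yields the desired equality, so this first half requires no new work beyond invoking the theorems.

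For the final ``iff'', I would fix $t \in \CTerm$ and chain equivalences: by Definition~\ref{def:denotations}, $\prog \pps e \cltop t$ holds iff $t \in \denapp{e}{\prog}$, which by the equality just proved is the same as $t \in \denrp{e}{\pst{\prog}}$, i.e.\ there is some $e'$ with $\pst{\prog} \vdash e \rw^* e'$ and $t \ordap |e'|$. The remaining point is to collapse this to $\pst{\prog} \vdash e \rw^* t$ when $t$ is total. The easy direction takes the witness $e' \equiv t$ and uses $|t| \equiv t$ (the shell of a c-term is the c-term itself) together with reflexivity of $\ordap$. For the other direction I would reuse the observation from the proof of Corollary~\ref{CHierChain} that a total c-term is maximal for $\ordap$, so that $t \ordap |e'|$ forces $t \equiv |e'|$; since $|e'|$ is then total it has no function-rooted subexpressions, hence $e' \in \CTerm$ and $|e'| \equiv e'$, so $e' \equiv t$ and $\pst{\prog} \vdash e \rw^* t$ as required.

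Since the real content has been pushed into Theorems~\ref{TPstSound} and~\ref{lCompPstAlt} (and, for completeness, into Lemma~\ref{lAuxCompPstAlt2}), there is essentially no obstacle left in the corollary itself; the only place where a little care is needed is the passage from ``$t \ordap |e'|$'' to ``$e \rw^* t$'', which depends on the maximality of total c-terms w.r.t.\ $\ordap$ and on the fact that a term whose shell is total must already be a c-term. Everything else is bookkeeping with the definitions of the four denotations.
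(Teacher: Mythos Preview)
Your proof is correct and follows essentially the same route as the paper's: the equality comes directly from combining Theorems~\ref{TPstSound} and~\ref{lCompPstAlt}, and the ``iff'' for total $t$ is obtained via the maximality of total c-terms under $\ordap$ together with the observation that a total shell forces the underlying expression to be a c-term. Your write-up actually spells out the shell argument a bit more explicitly than the paper (which just cites ``known properties of shells and $\ordap$''), but the structure is identical.
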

\begin{proof}
The fist part holds by a combination of Theorem \ref{TPstSound} and Theorem \ref{lCompPstAlt}. 

\noindent For the second part, if $\prog \pps e \cltop t$ then $t \in \denp{e}_{\prog} = \denrp{e}{\pst{\prog}}$ by the first part, hence $\exists e' \in \Exp$ such that $\pst{\prog} \vdash e \rw^* e'$ and $t \ordap |e'|$. But as $t \in \CTerm$ then $t$ is maximal w.r.t.\ $\ordap$ and so $t \equiv |e'|$ which implies $t \equiv e'$ (these are known properties of shells and $\ordap$), therefore $\pst{\prog} \vdash e \rw^* e' \equiv t$. On the other hand if $\pst{\prog} \vdash e \rw^* t$ then as $t \ordap t \equiv |t|$ (again because $t$ is a total c-term) we have $t \in \denrp{e}{\pst{\prog}} = \denapp{e}{\prog}$, and so $\prog \pps e \cltop t$.
\end{proof}

As promised at the end of the previous subsection, we can now use the restricted equivalence between \apcrwl\ and \bpcrwl\ from Theorem \ref{ThEquivABPClass} to extend the adequacy results of the simulation of \apcrwl\ with term rewriting to \bpcrwl, for the class of programs $\classAB$.

\begin{corollary}[Restricted adequacy of $\pst{\_}$ for simulating \bpcrwl]\label{CorPstAdeqBP}
For any program $\prog \in \classAB$, $e \in Exp$ built using symbols of the signature of $\prog$
$$
\denbpp{e}{\prog} = \denrp{e}{\pst{\prog}}
$$
Hence $\forall t \in \CTerm$ we have that $\vdbp e \cltop t$ iff $\pst{\prog} \vdash e \rw^* t$.
\end{corollary}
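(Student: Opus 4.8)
The plan is to obtain this corollary as an immediate consequence of two results already established in this section: the restricted equivalence of \apcrwl\ and \bpcrwl\ on the class $\classAB$ (Theorem \ref{ThEquivABPClass}) and the adequacy of $\pst{\_}$ for simulating \apcrwl\ (Corollary \ref{CorPstAdeq}). Since $\prog \in \classAB$, Theorem \ref{ThEquivABPClass} gives $\denbpp{e}{\prog} = \denapp{e}{\prog}$ for every $e \in \Exp$ built on the signature of $\prog$. On the other hand, Corollary \ref{CorPstAdeq} gives $\denapp{e}{\prog} = \denrp{e}{\pst{\prog}}$ for every such $e$ (note that $\pst{\_}$ is well defined here because, as throughout this section, $\prog$ is assumed to contain no extra variables). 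Chaining these two equalities yields $\denbpp{e}{\prog} = \denrp{e}{\pst{\prog}}$, which is the first part.

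For the second part I would reproduce, \emph{mutatis mutandis}, the argument used in the proof of the second part of Corollary \ref{CorPstAdeq}, now starting from $\denbpp{e}{\prog} = \denrp{e}{\pst{\prog}}$. Concretely: if $\vdbp e \cltop t$ with $t \in \CTerm$, then $t \in \denbpp{e}{\prog} = \denrp{e}{\pst{\prog}}$, so by the definition of $\denrp{\cdot}{\cdot}$ there is an $e'$ with $\pst{\prog} \vdash e \rw^* e'$ and $t \ordap |e'|$; since $t$ is a total c-term it is maximal w.r.t.\ $\ordap$ and coincides with its own shell, so $t \equiv |e'| \equiv e'$, giving $\pst{\prog} \vdash e \rw^* t$. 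Conversely, if $\pst{\prog} \vdash e \rw^* t$, then because $t \equiv |t|$ for total c-terms we have $t \ordap |t|$ and hence $t \in \denrp{e}{\pst{\prog}} = \denbpp{e}{\prog}$, i.e.\ $\vdbp e \cltop t$.

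There is essentially no obstacle here: all the real work has already been done in proving Theorem \ref{ThEquivABPClass} and Corollary \ref{CorPstAdeq}, and this corollary is just their composition. The only points that require a modicum of care are purely bookkeeping: checking that ``$e$ built on the signature of $\prog$'' is precisely the hypothesis needed by both cited results, and reusing the standard facts about shells and about the maximality of total c-terms w.r.t.\ $\ordap$ (already invoked in the proofs of Corollary \ref{CHierChain} and Corollary \ref{CorPstAdeq}) to move between the set-inclusion formulation and the $\rw^*$-to-a-c-term formulation.
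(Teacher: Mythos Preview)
Your proposal is correct and matches the paper's approach exactly: the paper's proof is the single line ``A straightforward combination of Corollary \ref{CorPstAdeq} and Theorem \ref{ThEquivABPClass},'' which is precisely the chaining you spell out, and your treatment of the second part simply unpacks what that combination entails for total c-terms.
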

\begin{proof}
A straightforward combination of Corollary \ref{CorPstAdeq} and Theorem \ref{ThEquivABPClass}.
\end{proof}

This last result illustrates the interest of \apcrwl. Because of its simplicity, \apcrwl\ sometimes combines matching substitutions in a wrong way, but it is precisely that same simplicity which allows it to be simulated by term rewriting through a simple program transformation. As a result we can use any available implementation of term rewriting, like the Maude system, to devise an implementation of \apcrwl. Besides, thanks to the restricted equivalence  between \apcrwl\ and \bpcrwl, that would also be an implementation of \bpcrwl\ for the class $\classAB$, and the membership check of program to the class $\classAB$ could be also mechanized, because $\classAB$ is defined by a simple syntactic criterion. We will see how these ideas are developed in the next sections, where a Maude-based implementation of our plural semantics is presented, and the interest of the class $\classAB$ is illustrated.

\subsubsection{An optimized transformation\label{subsubsec:opt_trans}}

As we already mentioned in our comments after presenting the class $\classAB$ in Definition \refp{def:classAB}, we expect that for ground or variable arguments run-time choice and our plural semantics behave the same. We can take advantage of this for applying some optimizations to the program transformation from Definition \ref{DefPstNaive}. 
\begin{itemize}
	\item 
When applied to $\mi{null}(\mi{nil}) \tor \tru$, the transformation returns the rules $\{null(Y) \tor \mi{if}~\match(Y)$ $\thn~\tru,$ $\match(nil) \tor \tru\}$, which behave the same as the original rule. The conclusion is that when a given pattern is ground then no parameter passing will be done for that pattern, and thus no transformation is needed.
    \item  Something similar happens with $\mi{pair}(X) \tor d(X,X)$ for which $\{\mi{pair}(Y) \tor \mi{if}~\match(Y)~\thn~d(\prj(Y),$ $\prj(Y)), \match(X) \tor \tru, \prj(X) \tor X\}$ is returned. In this case the pattern is a variable to which any expression matches without any evaluation, and the projection functions are trivial, so no transformation is needed neither.
\end{itemize}

We can apply these ideas to get the following refinement of our original program transformation. 
\begin{definition}[\apcrwl\ to term rewriting transformation, optimized version]\label{DefPstOpt}
Given a program $\prog$, our transformation proceeds rule by rule. For every program rule $(f(p_1, \ldots, p_n) \tor r) \in \prog$ we define its transformation as: \\[.2cm]
$
\begin{array}{l}
\pst{f(p_1, \ldots, p_n) \tor r}\\[.2cm]
~= \left\{\begin{array}{l@{\hspace{-0.1cm}}l}
f(p_1, \ldots, p_n) \tor r & \mbox{ if } m = 0\\
f(\tau(p_1), \ldots, \tau(p_n)) \tor \begin{array}{l}
\mi{if}~\match(Y_1, \ldots, Y_m)\\
 ~~~
~\thn~r[\overline{X_{i j}/\prj_{i j}(Y_i)}]\end{array} & \mbox{ otherwise}
\end{array} \right.
\end{array}
$\\[.2cm]
where $\rho_1 \ldots \rho_m = p_1 \ldots p_n~|~\lambda p.(p\not\in \var \wedge \mivar(p) \not= \emptyset)$. \\ 
- $\forall \rho_i$, $\{X_{i1}, \ldots, X_{ik_i}\} = \mivar(\rho_i) \cap \mivar(r)$ and $Y_i \in \var$ is fresh.\\
- $\tau : \CTerm \rightarrow \CTerm$ is defined by 
$\tau(p) = p$ if $p \not\in \{\rho_1, \ldots, \rho_m\}$; otherwise $\tau(\rho_i) = Y_i$.\\
- $\match \in \FS^{m}$ fresh is defined by the rule $\match(\rho_1, \ldots, \rho_m) \tor \tru$. \\
- Each $\prj_{ij} \in \FS^1$ 
is a fresh symbol defined by the rule $\prj_{ij}(\rho_i) \tor X_{ij}$.
\end{definition} 

Note that this transformation is well defined because each $\rho_i \in \{\rho_1, \ldots, \rho_m\}$ contains at least one variable, and so it can be distinguished from any other $\rho_j$ by using syntactic equality thanks to left linearity of program rules, therefore $\tau$ is well defined. 

We will not give any formal proof for the adequacy of this optimized transformation. Nevertheless note how this transformation leaves untouched the rules for $?$ and $\mi{if}\,\thn\,$ without defining a special case for them. As the simple transformation from Definition \ref{DefPstNaive} worked well for these rules, that suggests that we are doing the right thing. \\

We end this section with an example application of the optimized transformation, over the program from Example~\ref{EPlural2}. As expected the transformed program behaves under term rewriting like the original one under \apcrwl. 
\begin{example}\label{ETrans2}
The only rule modified is the one for $\find$, for which we get the following program
$$
\begin{array}{l}
\{\find(Y) \tor \mi{if}~\match(Y)~\thn~(\prj(Y),\prj(Y)), \\
\,\;\match(e(N,G,\clerk)) \tor \tru, \\
\,\;\prj(e(N,G,\clerk)) \tor N\}
\end{array}
$$
under which we can perform this term rewriting derivation for $twoclerks$
$$
\begin{array}{l}
\underline{\twoclerks} \rw \underline{\find(\employees(\branches))} \\
\rw \mathit{if}~\match(\underline{\employees(\branches)})\\
 ~~~~~~~~~
~\thn~(\prj(\employees(\branches)), \prj(\employees(\branches))) \\
\rw^* \underline{\mathit{if}~\match(e(\pepe, \man, \clerk))}\\
 ~~~~~~~~~
~\underline{\thn~(\prj(\employees(\branches)), \prj(\employees(\branches)))} \\
\rw^* (\prj(\underline{\employees(\branches)}), \prj(\underline{\employees(\branches)})) \\
\rw^* (\underline{\prj(e(\pepe, \man, \clerk))}, \underline{\prj(e(\maria, \woman, \clerk)}) \\
\rw^* (\pepe, \maria)
\end{array}
$$
\end{example}


\section{Programming with singular and plural functions}\label{sec:examples}

So far we have presented two novel proposals for the semantics of lazy non-determi-nistic functions, studied some of its properties, and explored their relation to previous proposals like call-time choice and run-time choice. Nevertheless, we have seen just a couple of program examples using the semantics, so until now we have hardly tested the way we can exploit the new expressive capabilities offered by our plural semantics to improve the declarative flavour of programs. 
The present section is devoted to the exploration of those expressive capabilities by means of several programs that try to illustrate the virtues of our new plural semantics.

In \cite{RRLDTA09} the authors already explored the capabilities of \apcrwl\ by using the Maude system \cite{maude-book} to develop an interpreter for this semantics based on the program transformation from Section~\ref{sec:simulAPlural}. The resulting interpreter was then used for experimenting with \apcrwl, showing how it allows an elegant encoding of some problems, in particular those with an implicit manipulation of sets of values. 
However, call-time choice still remains the best option for many common programming patterns \cite{GHLR99,DBLP:conf/flops/AntoyH02}, and this is the reason why it is the semantic option adopted by modern functional-logic programming systems like Toy \cite{LS99} or Curry \cite{Han06curry}. Therefore it would be nice to have a language in which both options could be available. 
In this section 
we 
propose such a language, where the user has the possibility to specify which arguments of each function symbol will be considered ``plural arguments.'' These arguments will be evaluated using our plural semantics, which intuitively means that they will be treated like sets of elements of the corresponding type\footnote{As types are not considered through this work here we mean the type naturally intended by the programmer.} instead of single elements, while the others will be evaluated under the usual singular/call-time choice semantics traditionally adopted for FLP. 
Thereby 
in \cite{rrpepm10} we 
extended our Maude-based prototype 
to support this combination of singular and plural arguments, and used it to develop and test several programs that we think are significant examples of the possibilities of the combined semantics. 
The source code for these examples and the interpreter to test them can be found at 
\url{http://gpd.sip.ucm.es/PluralSemantics}. 

~\\
As we have two different plural semantics available, then we get two different semantics resulting from their combination with call-time choice, that we have precisely formalized by means of two novel variants of \crwl\ called \sapcrwl\ and \sbpcrwl, corresponding to the combination of call-time choice with \apcrwl\ and \bpcrwl, respectively. Our prototype is based on the program transformation from Section~\ref{sec:simulAPlural}, therefore it is an implementation of \sapcrwl, and so \sbpcrwl\ is only supported for programs in the class $\classAB$ described in Section \ref{sect:equivaABPlural}. After those calculus, we introduce the concrete syntax of our interpreter and motivate the combination of singular and plural semantics with a simple example, while the next examples illustrate how to combine singular and plural arguments in depth. 
%
Then, after a short discussion about the use of singular and plural arguments, we conclude this section with a brief outline of the implementation of our prototype. 
%

\subsection{The logics \sapcrwl\ and \sbpcrwl\label{subsec:logics}}

We assume a mapping $\plrityZ : \FS \rightarrow \{\mi{sg}, \mi{pl}\}^*$ called \textit{plurality map} such that, for every $f\in \FS^n$, $\plrity{f} = b_1 \ldots b_n$ sets its plurality behaviour: if $b_i = \mi{sg}$ then the $i$-th argument of $f$ will be interpreted with a singular semantics, otherwise it will be interpreted under a plural semantics. In this line $\mi{sgArgs}(f) = \{i \in \{1, \ldots, \mi{ar}(f)\}~|~\plrity{f}[i] = \mi{sg}\}$ and $\mi{plArgs}(f) = \{i \in \{1, \ldots, \mi{ar}(f)\}~|$ $\plrity{f}[i] = \mi{pl}\}$ are the sets of singular and plural arguments of some $f \in \FS$. In particular we say that $f$ is a \emph{singular function} if $\mi{sgArgs}(f) = \{1, \ldots, \mi{ar}(f)\}$ and that it is a \emph{plural function} when $\mi{plArgs}(f) = \{1, \ldots, \mi{ar}(f)\}$. A related notion is that of singular and plural variables of a pattern: 
$\mi{sgVars}(f(\overline{p})) = \bigcup_{i \in \mi{sgArgs}(f)} \mi{var}(p_i)$ and $\mi{plVars}(f(\overline{p})) = \bigcup_{i \in \mi{plArgs}(f)} \mi{var}(p_i)$.

Thus we employ the plurality map to express which function arguments are considered singular arguments and which plural arguments. With this at hand
we now define the combined semantics \sapcrwl\ and \sbpcrwl\ as the result of taking the rules of \crwl\ and replacing the rule \textbf{OR} by either the rule \sapor\ or \sbpor\ from Figure \ref{fig:sabpcrwl}, respectively. 
\begin{figure}[t!]
\begin{center}
 \framebox{
\begin{minipage}{.95\textwidth}
\begin{center}
\begin{small}
\begin{tabular}{l}
\regla{\sapor}{f(e_1, \ldots, e_n) \clto t}{\begin{array}{c}
e_1 \cltop p_1\theta_{11} \\
 \ldots \\
e_1 \cltop p_1\theta_{1 m_1} \\
\end{array}
~\ldots~ \begin{array}{c}
		e_n \cltop p_n\theta_{n1} \\
	      \ \ldots \\
              \  e_n \cltop p_n\theta_{n m_n} \\
	 \end{array}~
\begin{array}{c}
          ~ \\
          ~ \\
          r\theta \cltop t
          \end{array}
}{} \\
\hspace{1cm} $\begin{array}{l}
\textrm{if } (f(\overline{p}) \tor r)\in {\cal P}\mbox{, } 
\forall i \in \{1, \ldots, n\}~\Theta_i = \{\theta_{i 1}, \ldots, \theta_{i m_i}\}\\
\theta = (\biguplus\limits_{i=1}^n ?\Theta_i) \uplus \thetae, 
\forall i \in \{1, \ldots, n\}, j \in \{1, \ldots, m_i\}~\dom(\theta_{i j}) \subseteq \mi{var}(p_i)\\
\dom(\thetae) \subseteq \vextra{f(\overline{p}} \tor r), \thetae \in \icsus \\
\forall i \in \{1, \ldots, n\}~m_{i} > 0, \forall i \in \mi{sgArgs}(f). m_i = 1
  \end{array}$ ~\\\\
\regla{\sbpor}{f(e_1, \ldots, e_n) \clto t}{\begin{array}{c}
e_1 \cltop p_1\theta_{11} \\
 \ldots \\
e_1 \cltop p_1\theta_{1 m_1} \\
\end{array}
\ldots~ \begin{array}{c}
		e_n \cltop p_n\theta_{n1} \\
	      \ \ldots \\
              \  e_n \cltop p_n\theta_{n m_n} \\
	 \end{array}~
\begin{array}{c}
          ~ \\
          ~ \\
          r\theta \cltop t
          \end{array}
}{} \\
\hspace{1cm} $\begin{array}{l}
\textrm{if } (f(\overline{p}) \tor r)\in {\cal P}\mbox{, } 
\forall i \in \{1, \ldots, n\}~\Theta_i = \{\theta_{i 1}, \ldots, \theta_{i m_i}\} \mbox{ is compressible }\\
\theta = (\biguplus\limits_{i=1}^n ?\Theta_i) \uplus \thetae, 
\forall i \in \{1, \ldots, n\}, j \in \{1, \ldots, m_i\}~dom(\theta_{i j}) \subseteq var(p_i)\\
dom(\thetae) \subseteq \vextra{f(\overline{p}} \tor r), \thetae \in \icsus \\
\forall i \in \{1, \ldots, n\}~m_{i} > 0, \forall i \in \mi{sgArgs}(f). m_i = 1
  \end{array}$  
 \end{tabular}
\end{small}
\end{center}
\end{minipage}
}
\end{center}
    \caption{The rules  \sapor\ and  \sbpor}
    \label{fig:sabpcrwl}
\end{figure}
As any variant of \crwl, these calculi derive {reduction statements} of the form $\prog \vdsap e \clto t$ and $\prog \vdsbp e \clto t$ that express that $t$ is (or approximates to) a possible value for $e$ in \sapcrwl\ or \sbpcrwl, respectively, under the program $\prog$. The denotations $\densapp{e}{\prog}$ and $\densbpp{e}{\prog}$ established by these semantics are defined as usual---see Definition \refp{def:denotations}. 

Just like in \apcrwl\ and \bpcrwl, we consider sets of partial values for parameter passing instead of single partial values, but the novelty is that now these sets are forced to be singleton for singular arguments. 
This is reflected in the new rules \sapor\ and \sbpor, 
corresponding to \apor\ and \bpor\ respectively, that now have been tuned to take account of the plurality map, as for singular arguments we are only allowed to compute a single value, thus performing parameter passing over it with a substitution from $CSubst_\perp$ (as obviously $?\{\theta\} = \theta$), and achieving a singular behaviour (call-time choice).

\begin{example}
Consider the program $\{f(X, c(Y)) \tor d(X, X,$ $Y, Y)\}$ and a plurality map such that $\plrity{f} = sg~pl$. 
The following 
is a \sapcrwl-proof for the statement $f(0~?~1, c(0)~?~c(1)) \clto d(0,0,0,1)$ (some steps have been omitted for the sake of conciseness). 
$$
{\small 
\infer[\mbox{\sapor}]{f(0~?~1, c(0)~?~c(1)) \clto d(0,0,0,1)}
{
~ \begin{array}{c}
\infer{c(0)~?~c(1) \clto c(0)}{(*)} \\
\infer{c(0)~?~c(1) \clto c(1)}{\vdots} \\
\infer{0~?~1 \clto 0}{\vdots}
\end{array}
~ \begin{array}{c}
~ \\ ~ \\~ \\[.25cm]
\infer[\crule{DC}]{~~d(0, 0, 0~?~1, 0~?~1) \clto d(0,0,0,1)~~}
       { 
       \infer{0 \clto 0}{\vdots} ~~ \infer{0 \clto 0}{\vdots} ~~ \infer{0~?~1 \clto 0}{\vdots} ~~ \infer{0~?~1 \clto 1}{\vdots}
       }
\end{array}
}
}
$$
where $(*)$ is the following proof:
$$
{\small
\infer[\mbox{\sapor}]{c(0)~?~c(1) \clto c(0)}
                  {
                  \infer[\crule{DC}]{c(0) \clto c(0)}{\infer[\crule{DC}]{0 \clto 0}{}}
                \ ~ \infer[\crule{B}]{c(1) \clto \perp}{}
                \ ~ \infer{c(0) \clto c(0)}{\vdots}
                  }
}
$$
Note that $d(0, 1, 0, 1)$ is not a correct value for the expression
$f(0~?~1, c(0)~?~c(1))$ under \spcrwl, because the first argument of $f$ is singular and therefore the two occurrences of $X$ in the right-hand side of its rule 
share the same single value, fixed on parameter passing. Besides, as this program is in the class $\classAB$, then it behaves the same under \apcrwl\ and \bpcrwl, and therefore also under \sapcrwl\ and \sbpcrwl, so the previous proof and comments also hold for \sbpcrwl. 

On the other hand if we take the same program and evaluate $f(0~?~1, c(0)~?~c(1))$ under term rewriting---which ignores the plurality map---, its behaviour is significantly different:
$$
\begin{array}{l}
f(0~?~1, \underline{c(0)~?~c(1)}) \rw \underline{f(0~?~1, c(0))} \rw d(\underline{0~?~1}, 0~?~1, 0, 0) \\
\rw d(0, \underline{0~?~1}, 0, 0) \rw d(0, 1, 0, 0)
\end{array}
$$
A first step resolving the choice between $c(0)$ and $c(1)$ is unavoidable in order to get an expression matching for the only rule for $f$, thus for any reachable c-term the last two arguments of $d$ will be the same, contrary to what happens in \sapcrwl\ and \sbpcrwl\ under the given plurality map. Nevertheless its first two arguments can be different, contrary to what happens under \spcrwl\ and \sbpcrwl. In conclusion, it is easy to define a program and a plurality map for them such that 
neither \sapcrwl\ nor \sbpcrwl\ are comparable  to term rewriting 
w.r.t.\ set inclusion of the computed 
values.
\end{example}

A useful intuition about programs comes from considering the singular arguments as fixed individual values, while thinking about the plural ones as sets. 
We could have chosen to specify the plurality or singularity of functions instead of that of its arguments, but the use of arguments with different plurality arises naturally in programs, in the same way it is natural to have arguments of different types. 
We will illustrate this fact later on by means of several examples. 

\bigskip
Regarding properties of these semantics (see \proofs\ for more details), both \sapcrwl\ and \sbpcrwl\ inherit the properties of \apcrwl\ from Section \ref{sec:aplural}, for the same reason \bpcrwl\ inherits the properties of \apcrwl. The most important among these properties is their compositionality, which expresses the value-based philosophy underlying \sapcrwl\ and \sbpcrwl\  ``all I know about an expression is its set of values,'' and that holds for the corresponding reformulation of Theorem \ref{lCompPS}---as it can be proved by a straightforward modification of the proof for that theorem. 
%
%
%
Bubbling is also incorrect for both \sapcrwl\ and \sbpcrwl, just like it happens for \apcrwl\ and \bpcrwl: in fact Example \ref{ex:noBubling} can be reused to prove it. Nevertheless, just like for \apcrwl\ and \bpcrwl, bubbling is correct for a particular kind of contexts, 
in this case 
not only for c-contexts but for the bigger class of \emph{singular contexts} $\scon$, which are contexts whose holes appear only under a nested application of constructor symbols or singular function arguments: $\scon ::= [~]~|~c(e_1, \ldots, \scon, \ldots, e_n)~|~f(e_1, \ldots, \scon, \ldots, e_n)$, with $c \in \CS^n$, $f \in \FS^n$ such that the subcontext appears in a singular argument of $f$, and $e_1, \ldots, e_n \in \Exp_\perp$. For singular contexts we get a compositionality result for singular contexts analogous to that of Proposition \ref{lemCompAPlCCon}---following the same scheme as the proof for Theorem \ref{lCompPS}---that can be used to easily prove the correctness of bubbling for singular contexts. 

%
%

We conclude our discussion about \sapcrwl\ and \sbpcrwl\ with the following result stating that they are in fact conservative extensions of both \crwl\ (call-time choice, or equivalently, singular non-determinism) and their corresponding plural semantics, as it was apparent from their rules.
\begin{theorem}[Conservative extension]
Under any program and for any $e \in Exp_\perp$:
\begin{enumerate}
\item If the program contains no extra variables and every function is singular then $\densap{e} = \dens{e} = \densbp{e}$.
\item If every function is plural then $\densap{e} = \denap{e}$ and $\densbp{e} = \denbp{e}$.
\end{enumerate}
\end{theorem}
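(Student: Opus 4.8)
The plan is to prove both parts by inspection of the inference rules: under the stated hypotheses the distinguishing rules \sapor\ and \sbpor\ degenerate to exactly the rules they were obtained from, so the proof calculi involved become literally the same up to the names of their rules, and the equality of denotations is then immediate because a derivation in one calculus is verbatim a derivation in the other. The rules \crule{RR}, \crule{DC}, and \crule{B} are common to all the calculi, so the whole argument concerns the function-application rule only.

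For part~(2), if every function is plural then $\mi{sgArgs}(f) = \emptyset$ for every $f \in \FS$, and hence the only side condition of \sapor\ (resp.\ \sbpor) that is not already present in \apor\ (resp.\ \bpor)---namely ``$\forall i \in \mi{sgArgs}(f).\ m_i = 1$''---holds vacuously, while all other side conditions coincide syntactically. Thus \sapor\ is \apor\ and \sbpor\ is \bpor, the calculi \sapcrwl\ and \apcrwl\ (resp.\ \sbpcrwl\ and \bpcrwl) derive the same reduction statements, and therefore $\densap{e}=\denap{e}$ and $\densbp{e}=\denbp{e}$ for every $e \in \Exp_\perp$. No hypothesis on extra variables is needed here, since extra variables are handled identically (by an arbitrary $\thetae \in \icsus$) in all four calculi.

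For part~(1), if every function is singular then $\mi{sgArgs}(f)=\{1,\ldots,\mi{ar}(f)\}$, so the side condition of \sapor\ and \sbpor\ forces $m_i = 1$ for every $i$; thus each $\Theta_i$ is a singleton $\{\theta_{i1}\}\subseteq\CSubst_\perp$, and since $?\{\theta\}=\theta$ and a singleton set of c-substitutions is always compressible (both facts already used in the proof of Theorem~\ref{tCrwlVsPCRWL}), the compressibility proviso of \sbpor\ is automatic and the combined substitution is $\theta = (\biguplus_{i=1}^{n}\theta_{i1})\uplus\thetae$. As the program has no extra variables, $\vextra{f(\overline{p}) \tor r}=\emptyset$, so $\thetae=\epsilon$ and $\theta \in \CSubst_\perp$ with $\dom(\theta)\subseteq\bigcup_i\mi{var}(p_i)$; by linearity of $(p_1,\ldots,p_n)$ we get $p_i\theta = p_i\theta_{i1}$, so both \sapor\ and \sbpor\ have precisely the premises $e_1\clto p_1\theta,\ldots,e_n\clto p_n\theta,\ r\theta\clto t$ with $\theta$ a c-substitution---that is, they are the \crwl\ rule \crule{OR}. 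Conversely, any instance of \crule{OR} may be normalised by restricting its $\theta$ to $\mi{var}(f(\overline{p}) \tor r)=\bigcup_i\mi{var}(p_i)$ and splitting it as $\biguplus_i\theta|_{\mi{var}(p_i)}$, which exhibits it as an instance of \sapor\ and of \sbpor\ with all $m_i=1$. Hence the three calculi coincide, so $\densap{e}=\dens{e}=\densbp{e}$.

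Both statements are, in essence, the observation that the side conditions of \apor, \bpor, and \crule{OR} degenerate under these hypotheses, so there is no genuine obstacle. The only point requiring mild care is the bookkeeping in part~(1) identifying the family of combined substitutions $\biguplus_i\theta_{i1}$ arising in \sapor/\sbpor\ when all $m_i=1$ with the family of c-substitutions admissible in \crule{OR}; this rests precisely on left-linearity of rules, on the absence of extra variables (so that $\mi{var}(r)\subseteq\bigcup_i\mi{var}(p_i)$ and $\thetae=\epsilon$), and on the identities $?\{\theta\}=\theta$ and ``$\{\theta\}$ is compressible'' noted above.
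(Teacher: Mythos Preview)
Your proof is correct and follows the same approach as the paper, namely observing that under the stated hypotheses the function-application rules \sapor\ and \sbpor\ degenerate to \crule{OR} (part~1) or to \apor/\bpor\ (part~2), so the calculi coincide. You simply spell out the bookkeeping (why singletons are compressible, why $?\{\theta\}=\theta$, why $\thetae=\epsilon$ under the no-extra-variables hypothesis, and the left-linearity argument identifying $\biguplus_i\theta_{i1}$ with the $\theta$ of \crule{OR}) that the paper leaves implicit in the single phrase ``\sapor\ and \sbpor\ are equivalent to \crule{OR}''.
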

\begin{proof}
If every function is singular and the program containts no extra variables then \sapor\ and \sbpor\ are equivalent to \crule{OR}, so both \crwl, \sapcrwl, and \sbpcrwl\ behave the same. Note that the absence of extra variables is essential, as for example from the program $\{f \tor d(X,X)\}$ we get $\dens{f} \not\ni d(0, 1) \in \densap{f} = \densbp{f}$.

Similarly, if every function is plural then \sapor\ and \sbpor\ are equivalent to \apor\ and \bpor, respectively. Note that extra variables pose no problem in this case, as any of these plural semantics is able to instantiate them with an arbitrary substitution from $\icsus$.
\end{proof}



\subsection{Commands}\label{subsec:commands}

In this section we introduce the concrete syntax of our language and the commands provided by our interpreter. 
The system
is started by loading in Maude the file \verb"plural.maude", available at
\url{http://gpd.sip.ucm.es/PluralSemantics}. 
It starts
an input/output loop that allows the user to introduce commands by enclosing
them in parens. 
Programs start with the keyword \verb"plural", followed
by the module name and the keyword \verb"is", and finish with \verb"endp", as exemplified in Figure \ref{fig:sampleProgPrototype}. The
body of each program is a list of statements of the form
$e_1\, \verb"->"\, e_2\, \verb"."$, indicating that the 
program rule $e_1 \tor e_2$ is part of the program.

\begin{figure}
\framebox{
\begin{minipage}{.5\textwidth}
\begin{flushleft}
\texttt{(plural SAMPLE-PROGRAM is\\
~~~f is plural .\\
~~~f(c(X)) -> p(X, X) .\\
~endp)
}
\end{flushleft}
\end{minipage}
}
    \caption{Concrete syntax of programs}
    \label{fig:sampleProgPrototype}
\end{figure}

The plurality map is specified by means of $\verb"is"\,$ annotations for each function of the program. These annotations have the form $\mathit{f}\, \verb"is"\, \mathit{plurality}\, \verb"."$, where $\mathit{plurality}$ can take the values \verb"singular" for singular functions, \verb"plural" for plural functions, or a sequence composed by the characters \verb"s" and \verb"p" specifying in more detail the plurality behaviour for each function argument, along the lines of the beginning of Section~\ref{subsec:logics}: 
if the i-th element of this chain is the character \verb"s" then the i-th argument of $\mathit{f}$ will be a singular argument, otherwise it will be considered a plural argument. \emph{Functions are considered singular by default when no} $\verb"is"\, $ \emph{annotation is provided}. 

The system is able to evaluate any expression built with the symbols of the program, under the semantics specified by the \sapcrwl\ logic. \emph{The prototype does not support programs with extra variables}, for two main reasons. First of all, it is based on the transformation from Section~\ref{sec:simulAPlural}, whose adequacy has been only proved for programs without extra variables. But the main reason is the lack of a suitable narrowing mechanism for plural variables, which is the resort usually employed by FLP systems to deal with the space explosion caused by extra variables \cite{Hanus07ICLP,LS99,Han05TR}. 
We consider the development of a plural narrowing mechanism an interesting subject of future work, but for now and for the rest of the paper, we restrict ourselves to programs not containing extra variables.

The system provides by default the constant c-terms  \verb"tt" (for \emph{true}) and \verb"ff" (for \emph{false}), 
and two more handy functions: the binary function \verb"_?_",
that is used with infix notation, and the \verb"if_then_" function, used with mixfix notation,
defined by the following rules:

{\codesize
\begin{verbatim}
 X ? Y -> X .
 X ? Y -> Y .
 if tt then E -> E .
\end{verbatim}
}

\noindent
Note that, since no \verb"is" annotation is provided, both functions are singular. 

Once a module has been introduced, the user can evaluate expressions with the
command:

{\codesize
\begin{alltt}
 (eval \([\)[depth = DEPTH]\(]\) EXPRESSION .)
\end{alltt}}

\noindent where \verb"EXPRESSION" is the expression to be evaluated and \verb"DEPTH"
is a bound in the number of steps. If this last value is omitted, the search is
assumed to be unbounded. If the term can be reduced to a c-term, it will be printed
and the user can use

{\codesize
\begin{alltt}
 (more .)
\end{alltt}}

\noindent until no more solutions are found.

It is also possible to switch between two evaluation strategies, depth-first
and breadth-first, with the commands:

{\codesize
\begin{verbatim}
(depth-first .)
(breadth-first .)
\end{verbatim}
}

Finally, the system can be rebooted with the command

{\codesize
\begin{verbatim}
(reboot .)
\end{verbatim}
}

\subsection{Examples}
In this section we show how to use the commands above, by means of two examples.

\subsubsection{Clerks}

First we show how to implement in our tool the program from Example~\refp{EPlural2}, slightly extended by adding a new branch to the bank. 
The different branches are defined by using the non-deterministic
function \texttt{?}, that here has to be understood as the set union operator. 
In the same line, for each branch the function \texttt{employees} returns the set of its employees:

{\codesize
\begin{verbatim}
 branches -> madrid ? vigo ? badajoz .

 employees(madrid) -> e(pepe, men, clerk) ? e(paco, men, boss) .
 employees(vigo) -> e(maria, women, clerk) ? e(jaime, men, boss) .
 employees(badajoz) -> e(laura, women, clerk) ? e(david, men, clerk) .
\end{verbatim}
}

Now, we define a function \texttt{twoclerks} which searches in the database for the names of two employees working as clerks. 
It calls the function \texttt{find}, which has been marked with
the keyword \texttt{plural} in order to express that its argument will be understood 
as a set of records from the database of the bank. Therefore, although the same
variable \verb"N" is used in the two components of the pair in the right-hand side of its rule, each one
can be instantiated with different values:

{\codesize
\begin{verbatim}
 twoclerks -> find(employees(branches)) .
 find is plural .
 find(e(N,G,clerk)) -> p(N,N) .
\end{verbatim}
}

Once the module has been loaded in our system,\footnote{The tool also indicates 
whether the program belongs to the class $\classAB$---remember that in that case \sapcrwl\ and \sbpcrwl\ would be equivalent and so both would be supported by the system---or not.} we can use the \texttt{eval}
command to evaluate expressions, and the command \texttt{more} to find the
next solutions:

{\codesize
\begin{verbatim}
Maude> load clerks.plural
Module introduced.
Both alpha and beta plural semantics supported for this program.

Maude> (eval twoclerks .)
Result: p(pepe,pepe)

Maude> (more .)
Result: p(pepe,maria)
\end{verbatim}
}

This program works as we expected, even if all the functions are marked as plural (i.e., if \pcrwl\ is used). However it can be improved in several directions. First of all, we are interested in getting two \emph{different} clerks. To do that we will define a function \texttt{vals} that generates a list containing different values of its argument. This function will use an auxiliary 
function \texttt{newIns} that 
appends an element at the beginning of a list ensuring that the remaining elements of the list are different to the new one.
This is checked by \texttt{diffL}, which returns the list in its second argument when it does not contain its first argument, and otherwise fails. Thus a disequality test is needed, but in our minimal framework we do not dispose of  disequality constraints, common in FLP languages \cite{Hanus07ICLP,AntoyHanusComACM10}. Nevertheless we can implement a ground version of disequality through regular program rules, as it is done here in the function \texttt{neq}.

{\codesize
\begin{verbatim}
 newIns is singular .
 newIns(X, Xs) -> cons(X, diffL(X, Xs)) .

 diffL(X, nil) -> nil .
 diffL(X, cons(Y, Xs)) -> 
      if neq(X, Y) then cons(Y, diffL(X, Xs)) .

 neq(pepe, paco) -> tt .
 neq(pepe, maria) -> tt .
 ...
\end{verbatim}
}

Note that we need \texttt{newIns}, \texttt{diffL}, and \texttt{neq} to be singular because they essentially perform tests, and when performing a test we naturally want the returning value to be the same which has been tested. For example, the following program:
{\codesize
\begin{verbatim}
 isWoman(maria) -> tt .
 isWoman(laura) -> tt .
 ...
 filterWomen(P) -> if isWoman(P) then P
\end{verbatim}
}
\noindent would have a funny behaviour if
\texttt{filterWomen} had been declared a plural function, because then for
\texttt{filterWomen(maria ? pepe)} we could compute
\texttt{pepe} as a correct value.\\

On the other hand the function \texttt{vals} is marked as \emph{plural} because it is devised to generate lists of different values of its argument. 
Note the combination of plurality, to obtain
more than one value from the argument of \texttt{vals}, and singularity, which is needed
for the tests performed by \texttt{newIns}:

{\codesize
\begin{verbatim}
 vals is plural .
 vals(X) -> newIns(X, vals(X)) .
\end{verbatim}
}

We generalize now our search function
to look for any number of clerks, not just two. To do that we will use 
the function \texttt{nVals} below, that returns a list of different values corresponding to different evaluations of its second argument. Therefore that second argument has to be declared as plural, while its first argument is singular as it fixes the number of values claimed (that is, the length of the returning list in the Peano notation for natural numbers):

{\codesize
\begin{verbatim}
 nVals is sp .
 nVals(N, E) -> take(N, vals(E)) .

 take(s(N), cons(X, Xs)) -> cons(X, take(N, Xs)) .
 take(z, Xs) -> nil .
\end{verbatim}
}

This \texttt{nVals} function is an example of how the use of plural arguments allows us to simulate some features that in a pure call-time choice context have to be defined at the meta level, in this case the \texttt{collect} \cite{LS99} or \texttt{findall} \cite{Han05TR} primitives of standard FLP systems. \\

Finally the function \texttt{nClerks} starts the search for a number of
different clerks specified by the user. It uses the auxiliary 
function \verb"findClerks", that returns the name of the clerks:

{\codesize
\begin{verbatim}
 nClerks is singular .
 nClerks(N) -> nVals(N, findClerk(employees(branches))) .

 findClerk is singular .
 findClerk(e(N,G,clerk)) -> N .
\end{verbatim}
}

Now we can search for three different clerks, obtaining \verb"pepe",
\verb"maria", and \verb"laura" as the first possible result:

{\codesize
\begin{verbatim}
Maude> (eval nClerks(s(s(s(z)))) .)
Result: cons(pepe,cons(maria,cons(laura,nil)))
\end{verbatim}
}

As anticipated in Example \refp{ExBPlural1}, we can use this technique to solve the problem of finding the names of the clerks paired with their genre, but avoiding the wrong information mixup caused by a purely plural approach using the style of the plural \verb"find" function above, under \apcrwl. To do that we just have to define a new auxiliary function \verb"findClerksNG" that this time returns a pair composed by the name of the clerk and his or her genre. 

{\codesize
\begin{verbatim}
 nClerksNG is singular .
 nClerksNG(N) -> nVals(N, findClerkNG(employees(branches))) .

 findClerkNG is singular .
 findClerkNG(e(N,G,clerk)) -> p(N, G) .
\end{verbatim}
}


The fact that \verb"findClerksNG" is singular, just like \verb"findClerks", ensures that the names and genres will be correctly paired. Besides, note that the whole Clerks program presented here belongs to the class $\classAB$, therefore its evaluation under \sapcrwl\ and \sbpcrwl\ is the same and any wrong information mixup is prevented. We can check this by searching again for three different clerks:

{\codesize
\begin{verbatim}
Maude> (eval nClerksNG(s(s(s(z)))) .)
Result: cons(p(pepe,men),cons(p(maria,women),cons(p(laura,women),nil)))
\end{verbatim}
}


In the next example 
we will see more clearly how to decide the plurality of functions. Remember that
the key idea is that singular arguments are used \emph{to fix their values} while plural arguments are needed when we want to use \emph{sets of
values}.

\subsubsection{Dungeon}\label{dungeon}

Ulysses has been captured and he wants to cheat his guardians using the gold
he carries from Troy. Thus, he needs to know whether there is an escape
(what we define as obtaining the \verb"key" of its jail) and, if
possible, which is the path to freedom (we define each step of this
path as a pair composed of a guardian and the item Ulysses obtains
from him).

He uses the function \texttt{ask}
to interchange items and information with his guardians. 
Since each guardian provides different information we have to assure
that they are not mixed, and thus its first argument will be singular; on the other hand he may 
offer different items to the same guardian, thus the second argument will be plural: this function needs plurality \texttt{sp}:

{\codesize
\begin{verbatim}
 ask is sp .
\end{verbatim}
}

The guardians have a complex behaviour, \verb"circe" exchanges
Ulysses' \verb"trojan-gold" by either the
\verb"sirens-secret" or
an \verb"item(treasure-map)"; \verb"calypso", once she receives the
\verb"sirens-secret", offers the \verb"item(chest-code)"; \verb"aeolus" can
\verb"combine" two items;\footnote{Note that we say \emph{two} items
when the function only shows \emph{one}. This rule uses the expressive
power of plural semantics to allow the combination of different items.}
and \verb"polyphemus" gives
Ulysses the \verb"key" once he can give him the combination of the
\verb"treasure-map" and the \verb"chest-code":

{\codesize
\begin{verbatim}
 ask(circe, trojan-gold) -> item(treasure-map) ? sirens-secret .
 ask(calypso, sirens-secret) -> item(chest-code) .
 ask(aeolus, item(M)) -> combine(M,M) .
 ask(polyphemus, combine(treasure-map, chest-code)) -> key .
\end{verbatim}
}

In the same line, \texttt{askWho} has as arguments a (\emph{fixed})
guardian and a message (probably with many items) for him, so it
also has plurality \verb"sp".
This function returns the next step in the Ulysses' path to freedom,
that is, a pair with the
guardian and the items obtained from him with the function
\verb"ask":

{\codesize
\begin{verbatim}
 askWho is sp .
 askWho(Guardian, Message) -> p(Guardian, ask(Guardian, Message)) .
\end{verbatim}
}

The following functions, which are in charge of computing the actions that must be performed in order 
to escape, are marked as plural because they treat their corresponding arguments as
sets of pairs where the second component is an item or some piece of information, and the first one is the actor which provided it. 
The function \verb"discoverHow" returns the set of pairs of that shape that can be obtained starting from those contained in its argument, and then chatting to the guardians. Hence it returns, either its argument, or 
the result of exchanging the current information with some guardian and
then iterating the process. That exchange is performed with \verb"discStepHow", that non-deterministically offers some of the items or information available, to one of the guardians:

{\codesize
\begin{verbatim}
 discoverHow is plural .
 discoverHow(T) -> T ? discoverHow(discStepHow(T) ? T) .

 discStepHow is plural .
 discStepHow(p(W, M)) -> askWho(guardians, M) .

 guardians -> circe ? calypso ? aeolus ? polyphemus .
\end{verbatim}
}

Note that the additional disjunction \texttt{? T} in the recursive call to \texttt{discStepHow} is needed in order to be able to combine the old information with the new one resulting after one exchanging step. 
This point can be illustrated better with the following program:
{\codesize
\begin{verbatim}
 genPairs is plural .
 genPairs(P) -> P ? genPairs(genPairsStep(P) ? P) .

 genPairsStep is plural .
 genPairsStep(P) -> p(P, P) .

 genPairsBad is plural .
 genPairsBad(P) -> P ? genPairsBad(genPairsStep(P)) .
\end{verbatim}
}
There the functions \texttt{genPairs} and \texttt{genPairsBad} follow the same pattern as \texttt{discoverHow}, but this time are designed to generate values made up with pairs and the 
supplied argument. Besides these functions share the same ``step function'' \texttt{genPairsStep}. Nevertheless their behaviour is very different, as we can see evaluating the expressions \texttt{genPairs(z)} and  \texttt{genPairsBad(z)}: the point is that the value \texttt{p(p(z,z),z)} can be computed for the former but not for the latter, because \texttt{z} and \texttt{p(z,z)} are values generated in different 
recursive calls to \texttt{genPairsBad}. But this poses no problem for \texttt{genPairs}, because the extra \texttt{? P} in its definition makes it possible to combine those values.\\

Finally, the search is started with the function \texttt{escapeHow},
that initializes the search with the trojan gold provided by Ulysses:

{\codesize
\begin{verbatim}
 escapeHow -> discoverHow(p(ulysses, trojan-gold)) .
\end{verbatim}
}

Once the module is introduced, we can start the search with the command:

{\codesize
\begin{verbatim}
Maude> (eval escapeHow .)
Result: p(ulysses,trojan-gold)
\end{verbatim}
}

When this first result has been computed, we can ask the tool for more
with the command \texttt{more}, that progressively will show the path
followed by Ulysses to escape:

{\codesize
\begin{verbatim}
Maude> (more .)
Result: p(circe,item(treasure-map))

Maude> (more .)
Result: p(circe,sirens-secret)

Maude> (more .)
Result: p(calypso,item(chest-code))

...

Maude> (more .)
Result: p(polyphemus,key)
\end{verbatim}
}

In this example the function \texttt{discoverHow} is an instance of an interesting pattern of plural function: a function that performs deduction by repeatedly combining the information we have fed it with the information it infers in one step of deduction. Therefore in its definition the function \texttt{?} has to be understood again as the set union operator, as it is used to add elements to the set of deduced information. 
On the other hand the use of a singular argument in \texttt{askWho} is unavoidable to be able to keep track of the guardian who answers the question, while its second argument has to be plural because it represents the knowledge accumulated so far.

Several variants of this problem can be conceived, in particular currently it is simplified because the items are not lost after each exchange---this is why Ulysses' bag is bottomless. Anyway we think that this version of the problem is relevant because in fact it corresponds to a small model of an intruder analysis for a security protocol, where 
Ulysses is the intruder, the guardians are the honest principals, the key is the secret and complex behaviours of the principals can be described through the patterns in left-hand sides of program rules. 
In this case we assume that the intruder is able to store any amount of information, and that this information can be used many times. Nevertheless we also think that different variants of the problem should be tackled in the future, and that the addition of equality and disequality constraints to our framework could be decisive to deal with those problems.

\bigskip
With this program we conclude our presentation of some examples that show the expressive capabilities of our plural semantics. 
In these examples we have tried to find a way of using \spcrwl\ for programming,  so it could be more than just a semantic eccentricity. Although we have found some interesting uses of our plural semantics, in particular the meta-like function \texttt{nVals}, and the deduction programming pattern correspondint to \texttt{discoverHow}, we cannot still say that we have found a ``killer application'' for our plural semantics. Only time will tell us if these semantics are useful, because these proposals are still too young to have a reasonable benchmark collection. Our prototype opens the door to experimenting with these new semantics, and in that sense it contributes to the development of such collection. Anyway, we admit that our plural semantics probably will only be useful in some fragments of the programs, and that is why we have proposed to combine it with the usual singular semantics of FLP. 
As a final remark the reader can check, by hand or by using our prototype, that all the program examples in this section belong to the $\classAB$ class---and hence they behave the same both under \sapcrwl\ and \sbpcrwl---, which motivates the relevance of that class of programs. But again, as the collection of examples is very small, this does not give a strong argument about the usefulness of this class of programs, but just an encouraging indicator.  

\subsection{Discussion: to be singular or to be plural?}\label{discussion}
After these examples, we (hopefully) should have some intuitions about how to decide the plurality of function arguments. 
Our first resort is considering that plural arguments are used to represent sets of values, while singular arguments denote single values. But this does not work for any situation, for example consider the function \texttt{findClerk} whose plurality is singular, although its argument intuitively denotes a set of records from the database. On the other hand we may consider 
that its argument denotes a single record, and that \texttt{findClerk} defines how to extract the name from a single employee, which motivates the final plurality choice. In this case the program 
behaves the same declaring \texttt{findClerk} either singular or plural, because 
the variables in its arguments are used only once. 
As a rule of thumb we should try to have as little plural arguments as possible, because these arguments increase the search space more than the singular ones, as using a plural semantics we can compute more values than under a singular semantics, as seen in Section~\ref{sect:hierarchy}. 
Hence in this case it is better to declare \texttt{findClerk} as singular.

Thus having a more formal criterion about the equivalence of 
plurality maps 
would be useful to 
minimize the search space of our programs and understand them better. 
%
%
A static adaptation of the determinism analysis of \cite{DBLP:journals/jflp/CaballeroL03} could be useful, as it would help us to detect deterministic functions of our programs, for which the plurality map would not matter, as we expect 
to easily extend the equivalence results of singular/call-time choice and run-time choice for deterministic programs of \cite{lrs07,LMRS10LetRwTPLP} to our plural semantics.
%
%
%
%
%
%
%
%
%
We also should try to develop equational laws about non-determinism. In fact a first step in this line is the %
discussion about the correctness of bubbling for singular contexts from Section \ref{subsec:logics}.  
%
Anyway, all these are subjects of future work.


\subsection{Implementation}\label{sec:implemen}

The system described in the previous sections has been implemented in the Maude
system~\cite{maude-book},
a high-level language and high-performance system supporting both equational and rewriting logic
computation for a wide range of applications. 
The fundamental ingredients for this implementation are a core language into which all programs are transformed, and an interpreter for the operational semantics of the core language that is used to execute programs.

The transformation into core language treats 
each program rule separately and 
applies two different
transformation stages to them. The first one applies a modification of the transformation described in
Definition~\ref{DefPstOpt}, but now taking into account only those arguments 
marked as \emph{pl} in the
plurality map described in Section~\ref{subsec:logics}. 
Then in 
the second stage, which consists
in a modification of the sharing transformation of \cite[Def.~1]{lsr09},
we introduce a let-binding for each singular variable that also appears in the right-hand side, therefore obtaining subexpression sharing, and as a consequence, a singular behaviour for those arguments. 

Once source programs have been transformed into core programs, we can execute 
them by using
a heap-based operational semantics for the core language \cite{rrpepm10}. 
A heap is just a mapping from variables to expressions that represents a graph structure, as the image of each variable is interpreted as a subgraph definition. The nodes of that implied graph are defined according to those let-bindings introduced by the transformation into core language. 
The operational semantics manipulates this heap, and contains  
rules for removing useless bindings, propagating the terms associated to a variable, and for creating 
new bindings for each singular argument when their corresponding let-bindings are found.
Finally, 
in order to turn the operational semantics of the core language into an effective operational mechanism for \spcrwl, we have adapted the natural rewriting strategy in \cite{escobar04} to deal with these heaps, ensuring that the evaluation is performed on-demand. Both the program transformation into core language, the interpreter, and our adaptation of natural rewriting, have been implemented in Maude with an intensive exploiting of its reflection capabilities, thus obtaining an executable interpreter for \spcrwl. More details about our implementation can be found in~\cite{rrpepm10}.

We decided to follow the line of employing the transformation from Definition~\ref{DefPstOpt} and then using a language that implements non-deterministic term rewriting to run the transformed program, because our motivation was to obtain a simple proof of concept prototype that could be used to experiment with the new semantics, 
but obtaining an optimized  implementation is out of the scope of this work. 
The addition of the natural rewriting on-demand strategy was neccesary in order to reduce the search space to a reasonable size, but we are aware of other approaches that maybe could improve the efficiency of the system. In particular we could have adapted the techniques in \cite{AntoyIM02ConsAlt,DBLP:conf/aplas/BrasselH07,BrasselHPR11KicsDos} that rely on turning the function $? \in \FS^2$ into a constructor in order to explicitly represent non-deterministic computations in a deterministic language, which results in additional advantages like a kind of backtracking memoization called ``sharing across non-determinism.'' 
That would have allowed us to use Maude functional modules, which are much more efficient than the non-deterministic system modules that are used in our current implementation. But as functional modules perform eager evaluation, then we should also employ the context-sensitive rewriting \cite{lucas98contextsensitive} features of Maude---offered as \texttt{strat} annotations---to get the lazy evaluation that correspond to our semantics. 
That would have entailed adapting the techniques from \cite{AntoyIM02ConsAlt,DBLP:conf/aplas/BrasselH07,BrasselHPR11KicsDos} from a call-time choice setting to the run-time choice semantics of term rewriting, and also using the techniques in \cite{DBLP:conf/alp/Lucas97} to introduce the \texttt{strat} annotations needed to ensure lazy evaluation. This is a possible roadmap that could be followed in case a more optimized implementation of \spcrwl\ should be developed. 
The monad transformer of \cite{FKC09} is another alternative in the same line, as it also provides a representation of non-determinism with support for memoization in a deterministic language, in this case Haskell, that could be used as the basis for an implementation of \spcrwl\ by modifying the transformation from FLP programs with call-time choice into Haskell from \cite{BrasselFHR10TransMonad}. As that work is placed in a higher order setting, our plural semantics should be first extended with higher order capabilities, following the line of \cite{GHR97}.



\section{Concluding remarks and future work}\label{sec:conclusions}

The starting point of this work is the observation that the traditional identification between run-time choice and a plural denotational semantics is wrong in a non-deterministic functional language with pattern matching. To illustrate that, we have provided formulations for two different plural semantics that are different from run-time choice: the \apcrwl\ and \bpcrwl\ semantics. 
We argue that the run-time choice semantics induced by term rewriting is not the best option 
for a value-based programming language like current implementations of FLP because of its lack of compositionality. Nevertheless, our plural semantics 
%
are compositional for a simple notion of value---the notion of partial c-term---, just like the usual call-time choice semantics adopted by modern FLP languages, following the value-based philosophy of the FLP paradigm: ``all I care about an expression is the set of its values.'' This, together with the fact that our concrete formulations for these plural semantics are variants of the \crwl\ logic
---a standard formulation for singular/call-time choice semantics in FLP---, turns the 
problem of devising 
a combined semantics for singular and plural non-determinism into a trivial task, getting the \sapcrwl\ and \sbpcrwl\ logics as a result. 
The combination of singular and plural semantics in the same language is interesting and follows naturally when programming, as 
it allows us to reuse known programming patterns 
from the more usual singular/call-time choice semantics, standard in modern FLP systems, 
while we are still able to use  
the new capabilities of the novel plural semantics for some interesting fragments of the program. 
In these logics, apart from the program, the user may specify for each function which of its arguments will be marked as singular and which as plural, resulting in different parameter passing mechanism. A simple intuition that works in most situations can be considering plural arguments as sets of values and singular arguments as individual values. 
We have not only 
proposed such 
semantic combinations but we have also provided a prototype implementation for \sapcrwl\ using the Maude system (see \cite{RRLDTA09,rrpepm10} for details about the implementation), 
in which the program transformation to simulate \apcrwl\ with term rewriting---a standard formulation for run-time choice---also presented in this work is a crucial ingredient.  
\podaSecImp{based on a program transformation to simulate \apcrwl\ with term rewriting---a standard formulation for run-time choice---also presented here, and a transformation to achieve a singular behaviour in certains parts of the program by means of a let-bindings contruction, developed in a previous work \cite{lsr09}. This let-construction has been simplified by taking advantage of some invariants of the program transformations, obtaining 
an operational semantics for the core language used by our prototype that relies on a heap-based representation of term graphs. 
Finally, the natural rewriting on-demand evaluation strategy 
has been adapted to deal with the configurations of our operational semantics, in order to perform lazy evaluation oriented to the computation of values. 
This is neccesary to handle the wild non-determinism explosion arising in the modules after the transformation. While this non-determinism is necessary to completely model check a system, because then we want to test whether a given property holds in every state reachable during the evolution of the system, it is not the case when we are just computing the set of values---reachable c-terms---for a given expression. In this case we may use an on-demand strategy like natural rewriting to prune the search space without losing interesting values. 
Maude was the natural choice for an object language, because both the transformations and the strategy upon which our implementation is based have been formulated at the abstraction level of source programs, that is, using the syntax and theory 
of term rewriting and term graph rewriting, which are the foundations of Maude. As a consequence there is a small distance between the specifications of the transformations, the operational semantics and the strategy, and the code which implements them, thanks to the meta-programming capabilities of Maude, that allows the
user to manipulate modules and rules as usual data, which has been used to implement the program transformation, the operational semantics, and the natural rewriting strategy in a natural way.}
The resulting system, available at \url{http://gpd.sip.ucm.es/PluralSemantics}, is an interpreter for the \sapcrwl\ logic that we have 
used to develop several programs examples that exploit the new expressive capabilities of the combined semantics, in order to improve the declarative flavour of programs. 

Along the way we have also made several contributions at the foundational level. We have studied the technical properties of \apcrwl\ and  \bpcrwl, providing formal proofs for its compositionality and also for other interesting properties like polarity, several monotonicity properties for substitutions, and a restricted form of bubbling 
for constructor contexts. Then we have compared the different semantics for non-determinism considered in this work w.r.t.\ the set of computed values, concluding that they form the inclusion chain $\crwl \subseteq $ term rewriting $\subseteq \bpcrwl \subseteq \apcrwl$, corresponding to the chain singular/call-time choice $\subseteq$ run-time choice $\subseteq$ $\beta$-plural $\subseteq$ $\alpha$-plural. Besides, we have determined that for the class of programs $\classAB$, characterized by a simple syntactic criterion, our plural semantics proposals \apcrwl\ and \bpcrwl\ are equivalent. We have also provided a formal proof of the adequacy of the (non-optimized version of the) transformation used by our prototype to simulate \apcrwl\ with term rewriting. As a consequence, this transformation can be used to simulate \bpcrwl\ for programs in the class $\classAB$. \podaSecImp{, hence implementing the \sbpcrwl\ logic for programs in that class when combined with the singular semantics transformation.} Regarding the combined semantics \sapcrwl\ and \sbpcrwl, it is easy to see that they inherit the good properties of both \crwl, \apcrwl, and \bpcrwl, \podaSecImp{. We have tackled compositionality and bubbling, that in this case is sound not only for constructor contexts but for the bigger class of singular contexts,} and 
we have also 
proved that the combined semantics are conservative extensions of both singular/call-time choice and their corresponding plural semantics. 

~\\
These questions were first approached in previous works by the authors \cite{rodH08,RRLDTA09,rrpepm10}, however in this paper we do not only give a revised and unified presentation, 
but we have also included several important novel results.
\begin{itemize}
	\item All the technical results from those works have been extended to deal with programs with extra variables, except those results regarding the simulation of \apcrwl\ with term rewriting from Section \ref{sec:simulAPlural}. The new technical results have been also proved for programs with extra variables. Besides, 
we have fixed some errata from the original works, in particular the formulation of bubbling for \apcrwl, the definition of the operator $?$ over sequences of $\CSubst_\perp$, and also some other minor mistakes in the proofs. The formulations of bubbling for constructor and singular contexts are novel contributions of this paper. 

    \item The plural semantics \bpcrwl, inspired in the proposal from \cite{BB09}, is introduced in this work for the first time. We give clear explanations of some problematic situations where \apcrwl\ performs a wrong information mixup, and how our attempts to fix those problems, inspired in the solutions from \cite{BB09}, led us to the current formulation of \bpcrwl, which leans on the notion of compressible set of partial c-substitutions. 

    \item As the formulations of \apcrwl\ and \bpcrwl\ are very similar, it was not difficult to check that \bpcrwl\ also enjoys the same basic properties of \apcrwl. Nevertheless, it was more difficult to place \bpcrwl\ in the semantic inclusion chain from \cite{rodH08}, being a key idea the notion of compressible completion of a set of $\CSubst_\perp$, and its related results. The characterization of the class of programs $\classAB$ for which \apcrwl\ and \bpcrwl\ are equivalent, and the formal proof for that equivalence are also novel, obviously. 
    
    \item Finally, the logic \sbpcrwl\ is also a novel contribution of this work, but in this case its definition was straightforward, because it follows the same pattern as the definition for \sapcrwl.

\podaSecImp{    \item We give much more explanations about our implementation. We have included the most important Maude code together with copious explanations about it, and we have shown our adaptation of the natural rewriting strategy to the configurations used by our operational semantics, for which we had to conceive a suitable notion of position in the heap, and the redefine the function $\mt$, which constitutes the core of our implementation of natural rewriting based on matching definitional trees.

    \item Our implementation of natural rewriting for Maude system modules has been carefully crafted as an independent library, in order to improve its reusability. The corresponding \texttt{natNext} function can be used for performing on-demand evaluation of any \ctrs\ specified in a Maude system module, and it is especially relevant because is the first on-demand strategy for this kind of
modules, complementing the default rewrite and breadth-first search Maude commands.
This library is available at \url{http://gpd.sip.ucm.es/NaturalRewriting}. }
\end{itemize}

Previously to ours, not much work has been done in the combination of singular and plural 
non-determinism in functional or functional-logic programming, since mainstream approaches \cite{Wadler85,LS99,Han05TR} only support the usual singular semantics.
Closer are the combinations of call-time and run-time choice of \cite{lsr09,DBLP:journals/corr/abs-0903-2205}, which anyway follow a different approach as the plural sides of \sapcrwl\ and \sbpcrwl\ are essentially different to run-time choice.
Anyway, we still think that the combination of call-time choice and run-time choice is not very suitable for value-based languages because of the lack of compositionality for values under run-time choice. 
The monad transformer of \cite{FKC09}, devised to improve the laziness of non-deterministic monads while retaining a call-time choice semantics, is based on a \texttt{share} combinator which plays a role similar to the let-bindings of our core language. The authors seem to be interested in staying in a pure call-time choice framework, but maybe a combination of call-time and run-time choice could be achieved there too, getting something similar to \cite{lsr09} but again essentially different to \sapcrwl\ and \sbpcrwl\ for the same reason. 
Besides that work is focused in implementation issues of FLP in concrete deterministic functional languages, while in ours we start from 
the more abstract world of \ctrss\ and are fundamentally concerned in exploring the language design space. 

~\\
We contemplate several interesting subjects of future work. As pointed in Sections~\ref{sec:simulAPlural} and \ref{subsec:commands}, the development of a suitable plural narrowing mechanism would be the key for finding an effective way of handling extra and free variables. Besides, in our examples it has arisen the necessity of equality and disequality constraints (whose ground versions have been simulated by using regular functions), that will ease and shorten the definition of programs, and increase the expressiveness of the setting. Both subjects would be interesting at the theoretical and practical levels, as we could then improve our prototype by extending it with those new features.

Similarly, adding higher order capabilities by an extension of \spcrwl\ in the line of \cite{GHR97}, and implementing them by means of the classic transformation of \cite{War82}, would also be interesting and it is standard in the field of FLP. Then, for example, we could define a more generic version of \texttt{discoverHow} with an additional argument for the function used to perform a deduction step (\texttt{discStepHow} in our dungeon problem). This higher order version of \spcrwl\ could also be used to face the challenges regarding the implementation of type classes in FLP through the classical transformational technique of \cite{Wadler:ad-hoc_polymorphism} pointed out by Lux in \cite{curryListLux09}. Although some solutions based on the frameworks of
\cite{lsr09,DBLP:journals/corr/abs-0903-2205} were already proposed in \cite{curryListRod09} we think that an alternative based on \spcrwl\ would be better thanks to its clean and compositional semantics. 
More novel would be using the matching-modulo capacities of Maude to enhance the expressiveness of the semantics, after a corresponding revision of the theory of \spcrwl. 
Besides,  as mentioned at the end of Section  \ref{sec:implemen}, 
some additional research must be done to improve the performance of the interpreter,
especially because of the increase of the size of the search space due to the use of plural arguments. 
As we pointed out there, an explicit representation of non-determinism on a deterministic language seems promising \cite{AntoyIM02ConsAlt,DBLP:conf/aplas/BrasselH07,BrasselHPR11KicsDos,FKC09}, in particular
the memoization capabilities of these approaches could be exploited to deal with the non-determinism overhead caused by plural arguments. Some possible concretization of this idea could be using Maude functional modules with \texttt{strat} annotations using the techniques in \cite{DBLP:conf/alp/Lucas97}, or adapting the transformation in \cite{BrasselFHR10TransMonad}. 
%


\podaSecImp{We also contemplate the extension of our Maude implementation of natural rewriting to \trss\ which are not necessarily constructor-based, following the theory developed in \cite{DBLP:conf/lopstr/EscobarMT04}. Another orthogonal extension of the strategy could go in the direction of combining non-deterministic rewriting rules evaluated on demand (from Maude system modules) with deterministic and terminating equations evaluated eagerly (from Maude functional modules).
This is particularly interesting when there are fragments of a \trs\ which constitute a confluent and terminating \trs\ (like the \texttt{match} and \texttt{project} functions introduced by our transformation), that could be executed dramatically more efficiently by treating them as Maude equations.}

As suggested in Section \ref{discussion}, finding a criterion for the equivalence of plurality maps and defining more equational laws for non-determinism, besides the restricted forms of bubbling proposed here, would improve the understanding of programs, which could finally lead to the development of more interesting program examples that maybe could illustrate the interest of the semantics. 
In this line we also find interesting the relation between the different notions of determinism entailed by \crwl, \apcrwl\ and \bpcrwl, and the relation between confluence of term rewriting and those notions of determinism. We already made some advances in this line in previous works \cite{lrs07,LMRS10LetRwTPLP}. 

To conclude, an investigation of the technical relation between \bpcrwl\ and the plural semantics from \cite{BB09} which inspired it, would be very interesting. We conjecture a strong semantic equivalence between them.

{\small

}


\end{document}